\DeclareMathOperator*{\argmax}{argmax}
\DeclareMathOperator{\poly}{poly}
\newtheorem{definition}{Definition}
\newtheorem{corollary}{Corollary}[section]
\newtheorem{theorem}{Theorem}
\newtheorem{lemma}{Lemma}[section]
\newcommand{\notshow}[1]{{}}
\newcommand{\Xcomment}[1]{{}}
\newcommand{\half}{\frac{1}{2}}
\newcommand{\given}{|}
\newcommand{\prob}[2][]{\text{\bf Pr}\ifthenelse{\not\equal{}{#1}}{_{#1}}{}\!\left[{\def\givenn{\middle|}#2}\right]}
\newcommand{\expect}[2][]{\text{\bf E}\ifthenelse{\not\equal{}{#1}}{_{#1}}{}\!\left[{#2}\right]}
\newcommand{\tparen}{\big}
\newcommand{\tprob}[2][]{\text{\bf Pr}\ifthenelse{\not\equal{}{#1}}{_{#1}}{}\tparen[{\def\given{\tparen|}#2}\tparen]}
\newcommand{\texpect}[2][]{\text{\bf E}\ifthenelse{\not\equal{}{#1}}{_{#1}}{}\tparen[{\def\given{\tparen|}#2}\tparen]}
\newcommand{\sprob}[2][]{\text{\bf Pr}\ifthenelse{\not\equal{}{#1}}{_{#1}}{}[#2]}
\newcommand{\sexpect}[2][]{\text{\bf E}\ifthenelse{\not\equal{}{#1}}{_{#1}}{}[#2]}
\newcommand{\lbr}[1]{\left\{#1\right\}}
\newcommand{\rbr}[1]{\left(#1\right)}
\newcommand{\cbr}[1]{\left[#1\right]}
\newcommand{\abs}[1]{\left|#1\right|}
\newcommand{\indicate}[1]{\mathbf{1}\left[#1\right]}
\newcommand{\reals}{\mathbb{R}}
\newcommand{\naturals}{\mathbb{N}}
\renewcommand{\epsilon}{\varepsilon}
\newcommand{\val}{v}
\newcommand{\valspace}{V}
\newcommand{\dist}{F}
\newcommand{\density}{f}
\newcommand{\util}{u}
\newcommand{\alloc}{x}
\newcommand{\pay}{p}
\newcommand{\experiment}{\sigma}
\newcommand{\signal}{s}
\newcommand{\signals}{S}
\newcommand{\sets}{\mathcal{I}}
\newcommand{\dpf}{{\mathsf{Fea}}}
\newcommand{\dps}{{\mathsf{Imp}}}
\newcommand{\mech}{M}
\newcommand{\mechs}{\mathcal{M}}
\newcommand{\rev}{{\rm Rev}}
\newcommand{\optrev}{{\rm OPT\text{-} Rev}}
\newcommand{\optwel}{{\rm OPT\text{-}Wel}}
\newcommand{\wel}{{\rm Wel}}
\newcommand{\indic}{\mathbf{1}}
\newcommand{\probkname}{\textsc{Optimal $k$-Signal} }
\newcommand{\probname}{\textsc{Optimal Signal} }
\newcommand{\compensate}{c}
\newcommand{\grid}{\mathcal{G}}
\newcommand{\virtual}{\phi}
\newcommand{\virtualdist}{G}
\newcommand{\virtualdensity}{g}
\newcommand{\feasibles}{\mathcal{F}}
\newcommand\tl[2]{\genfrac{}{}{0pt}{}{#1}{#2}}
\newcommand{\discretize}{\textsc{Dis}}
\newcommand{\virtualdists}{\mathbf{\virtualdist}}
\newcommand{\vals}{\mathbf{\val}}
\newcommand{\compensates}{\mathbf{\compensate}}
\newcommand{\dists}{\mathbf{\dist}}
\newcommand{\experiments}{\mathbf{\experiment}}
\newcommand{\virtualwel}{\textrm{VW}}
\newcommand{\per}{\mathcal{P}}
\newcommand{\rd}{\textsc{RoundDown}}
\title{Algorithmic Information Disclosure in Optimal Auctions\thanks{Yang Cai is supported by a Sloan Foundation Research Fellowship and NSF Awards CCF-1942583 (CAREER) and CCF-2342642. Yingkai Li is supported by a Sloan Foundation Research Fellowship. Jinzhao Wu is supported by a Research Fellowship from the Center for Algorithms, Data, and Market Design at Yale (CADMY) and the NSF Award CCF-2342642.}}
\author{Yang Cai\thanks{Department of Computer Science, Yale University. Email: \texttt{yang.cai@yale.edu}.}\\
\and Yingkai Li\thanks{Cowles Foundation for Research in Economics and Department of Computer Science, Yale University.
Email: \texttt{yingkai.li@yale.edu}.}
\and Jinzhao Wu\thanks{Department of Computer Science, Yale University.
Email: \texttt{jinzhao.wu@yale.edu}.}}
\date{}
\begin{document}

\maketitle
\begin{abstract}
This paper studies a joint design problem where a seller can design both the signal structures for the agents to learn their values, and the allocation and payment rules for selling the item. In his seminal work, \citet{myerson1981optimal} shows how to design the optimal auction with exogenous signals. We show that the problem becomes \emph{NP-hard} when the seller also has the ability to design the signal structures. 
Our main result is a \emph{polynomial-time approximation scheme (PTAS) for computing the optimal joint design} 
with at most an $\epsilon$ multiplicative loss in expected revenue. Moreover, we show that in our joint design problem, the seller can significantly reduce the information rent of the agents by providing partial information, which ensures a revenue that is at least $1 - \sfrac{1}{e}$ of the optimal welfare for all valuation distributions. 
\end{abstract}
\section{Introduction}
\label{sec:intro}

The classical auction theory primarily focuses on models that feature exogenous signal structures, wherein all participants in the auction are privately informed of their valuations for the items. Within this framework, \citet{myerson1981optimal} provides an elegant characterization of the optimal single-item auction, which also gives rise to an efficient algorithm for computing the optimal auction given the exogenous signal structure as input. However, in many practical applications, buyers initially lack information about their values for the items for sale due to insufficient data regarding item quality or suitability according to their preferences. 
Instead, these buyers depend on seller advertisements to gain insight into their own values. The nature of these advertisements varies depending on the context. For example, streaming service providers might offer free trials, thereby influencing consumers' perceived value of the service prior to committing to a subscription. Real estate agencies offer inspection opportunities to potential homebuyers, enabling better estimations of their value for a particular property. Similarly,  in the context of government auctions for oil field operations,  oil companies are often allowed to inspect and evaluate the potential oil reserves within a field prior to bidding.

In these applications, sellers can jointly design their advertising strategies and the subsequent auction mechanisms. In this paper, we focus on a canonical model introduced by \citet{bergemann2007information} for this joint design problem. In this model, a seller possesses a single item available for purchase, and there are $n$ heterogeneous buyers interested in it. 
These buyers have values independently drawn from a prior distribution. While the prior distribution is initially common knowledge between the seller and the buyers, the actual values of the buyers remain unknown to both parties. Instead, the seller can design Blackwell experiments, which enable the buyers to privately learn their own values. These Blackwell experiments, or equivalently signal structures, are mappings from the true values to a distribution over signals. 
It's crucial to note that even though the seller designs these experiments, the signals generated by them are privately observed by the buyers. This aligns with the application of private inspection, where the seller can restrict the method the buyers can use for inspecting the production, but the inspection outcome is only privately observed by the buyers. 
Moreover, in this model, the experiments are conducted independently for each buyer. This means that the seller cannot establish correlations between the signals received by different buyers, and what buyer $i$ learns is entirely unrelated to what buyer $i'$ learns about their private values. We provide additional justifications of our modeling choices in \cref{sec:prelim}.

Following the advertisement, the seller commits to a mechanism for selling the item. This chosen mechanism must satisfy two crucial constraints: it should be incentive compatible, meaning it incentivizes buyers to truthfully disclose their posterior values for the item based on the advertisement, and individually rational, ensuring that buyers have a non-negative expected utility when participating in the auction. 
The seller's objective is to jointly design the signal structures and a subsequent mechanism to maximize the expected payments collected from the buyers, subject to the incentive compatible and individually rational constraints. 

To illustrate the joint design problem and the challenges it presents, we begin with a simple example with two buyers. In this example, each buyer has a value that is uniformly and independently drawn between $1$ and $2$. These realized values are initially unknown to both buyers.
Consider a mechanism in which the values are fully disclosed to buyer 1 while providing no information to buyer 2. Following the disclosure of information, buyer 1 perfectly learns his value, while buyer 2 is aware only that her value lies uniformly between $1$ and $2$, thereby implying an average value of $1.5$, which is also her maximum willingness to pay for the item. Subsequently, the seller approaches the buyers sequentially, first posting a price of $2$ to buyer 1 and then a price of $1.5$ to buyer 2 if the item is not sold to buyer 1. As buyer 2 always purchases the item when offered, it is easy to verify that this mechanism is revenue optimal for the seller, as the expected revenue equals $1.75$, which coincides with the optimal welfare in this example.\footnote{Although the seller can extract the full surplus in this example, this observation does not hold in general. We provide an illustration for the failure of full surplus extraction in \cref{sub:example}.} 

Despite its relevance in a wide array of applications, our understanding of the joint design problem still lags behind that of the pure mechanism design problem. While \citet{bergemann2007information} offers important insights into the structure of the optimal solution, the computational complexity of the joint design problem remains unexplored. Our paper seeks to bridge this gap in understanding by investigating the following question: 
\begin{align*}\label{eq:question1}
&\emph{What is the computational complexity of the joint design problem?} \tag{*}
\end{align*}
An important observation from the aforementioned example is that the optimal solution can exhibit asymmetry, even in cases where the problem is  symmetric. This asymmetry reveals the non-concave nature of the joint design problem, indicating significant challenges in computing the optimal solution.

\subsection{Our Results and Techniques}\label{sec:results and techniques}

We first show that in contrast to auctions with exogenous signal structures where the optimal mechanism can be efficiently computed through linear programming \citep{myerson1981optimal,border1991implementation}, computing the optimal solution in the joint design problem is NP-hard.
The construction of the hard instances utilizes a reduction from the well-known NP-complete \textsc{Partition} problem \citep{garey1979computers}, which is inspired by \cite{xiao2020complexity} and \cite{chen2014complexity}.

Given the hardness result, we now focus on designing efficient algorithms for computing approximate optimal solutions. We begin by showing that a simple signal structure with binary signals, followed by the sequential posted pricing mechanism, can yield revenue that is at least a fraction of $1 - \sfrac{1}{e}$ of the optimal welfare. This result leverages the correlation gap technique \citep{agrawal2010correlation, yan2011mechanism}. Our result also implies that the worst-case multiplicative gap between optimal welfare and optimal revenue is at most $\sfrac{e}{(e-1)}$ for all valuation distributions, assuming the seller can design the signal structures. This finding sharply contrasts with scenarios where the seller is unable to design the signal structures, resulting in unbounded gaps between welfare and revenue.

Our main result is a PTAS for computing the signal structure and subsequent mechanism with at most an $\epsilon$ multiplicative loss in expected revenue compared to the optimal solution. To illustrate our algorithm, it is helpful to first explain some structures of the joint design problems. As the bidders have quasi-linear utilities, their willingness to pay after receiving a signal is simply the corresponding posterior mean. We refer to this distribution of the posterior mean as the value distribution induced by the signal structure. Once the signal structure is fixed, the subsequent optimal mechanism is Myerson's optimal auction with respect to the induced value distributions. 

\citet{bergemann2007information} offers an insightful characterization of the optimal signal structure, that is, it must be \emph{monotone}. However, their analysis does not preclude the use of \emph{randomized} signal structures, even when dealing with discrete value distributions. As a result, this implies that the number of potential signal structures is infinite. We provide a more refined characterization (\Cref{lem:partition}) that the optimal signal structure must be \emph{deterministic}, which we refer to \emph{monotone partitional signal structure}. Our new characterization significantly reduces the pool of potential candidates for the optimal signal structure to \(O\rbr{\exp(m)}\), where \(m\) is the size of the support of the value distribution. This refined characterization requires several new observations and is crucial for our PTAS.

Though our strengthened characterization reduces the number of possible signal structures per buyer from infinite to \(O\rbr{\exp(m)}\), it is still too expensive to exhaustively search over them. We next show how to combine dynamic programming with the problem's inherent structure to design a PTAS to compute the optimal monotone partitional signal structure. As shown by~\citet{bergemann2007information}, the value distribution induced by the optimal signal structure must be regular for every buyer. Our first step in designing the PTAS is to combine this observation with the revenue equivalence by \citet{myerson1981optimal} to convert the problem of revenue maximization to virtual welfare maximization. We further decompose the problem of virtual welfare maximization into the following two sub-problems:
\begin{enumerate}
    \item For each buyer \(i\), what  \emph{regular distributions} are inducible by a monotone partitional signal structure, and what are the corresponding distributions of virtual values?
    \item How to choose an inducible distribution of virtual value from each buyer so that the expected maximum of the virtual values is maximized?
\end{enumerate}

For the first sub-problem, our plan is to construct an $\varepsilon$-net over all \emph{regular} inducible distributions for each buyer. This requires constructing a set $\mathcal{F}$ that satisfies the following two conditions: (i) for every regular inducible distribution, there must be a distribution in $\mathcal{F}$ that is $\varepsilon$-close to it; and (ii) every distribution in $\mathcal{F}$ should be  $\varepsilon$-close to at least one regular inducible distribution. Here the $\epsilon$-distance is measured according to the differences in virtual welfares. Note that constructing an 
$\varepsilon$-net over all distributions inducible by a monotone partitional signal structure is relatively straightforward in the absence of the regularity condition. While the regularity condition allows us to effortlessly shift from revenue maximization to virtual value maximization, i.e., avoiding the need for ironing and ironed virtual welfare maximization, it also complicates the construction of the 
$\varepsilon$-net. To address this challenge, we provide a dynamic programming algorithm that memoizes carefully designed information at each state to efficiently construct such an $\varepsilon$-net, as detailed in Algorithm~\ref{alg:dps}.

For the second sub-problem, we employ another dynamic programming algorithm to keep track of the distribution of the maximum virtual value among the first $j$ buyers rather than the profile of these buyers' induced value distributions. This method substantially reduces the number of states we need to store, thereby speeding up the algorithm. It is important to note that in order to prevent an exponential increase in the number of states, it is necessary to round the distribution of the maximum virtual value at every step of the DP. This requirement introduces additional subtleties in both the design of the algorithm and its analysis.

\subsection{Related Work}
The papers most closely related to ours include those by \citet{bergemann2007information,Bergemann2022Optimal,bergemann2022screening}. In particular, \citet{bergemann2007information} and \citet{bergemann2022screening} show that the revenue-optimal signal structure consists of a finite partition of the type space. 
Our paper extends the characterizations in \citet{bergemann2007information}
by introducing a novel algorithm for efficiently computing a nearly optimal partition within polynomial time.


One line of work that relates to our paper considers the optimal signaling problem in Ad auctions \citep[e.g.,][]{emek2014signaling,cheng2015mixture,badanidiyuru2018targeting,bacchiocchi2022public}.
These papers consider the model where the seller is privately informed about the states for the Ad slots, and can design the signaling scheme for disclosing information about Ad slots to the buyers.  
After the disclosure, the Ad slots are sold via second-price auctions if there is a single slot or VCG mechanisms if there are multiple slots. 
\citet{cheng2015mixture} show that additive PTAS algorithms exist for selling a single slot in such environments.
If the selling mechanism is not restricted to second-price auctions, \citet{daskalakis2016does} show that the optimal joint design problem in Ad auctions corresponds to multi-item auctions with additive valuations.
In contrast, in our model, the seller is uninformative about the states or the signals sent to the buyers, and the selling mechanism is not restricted to second-price auctions. 
Moreover, from a technical perspective, our problem is non-concave, in contrast to ad auctions where the optimal joint design problem can be formulated as a linear program.
We provide a multiplicative PTAS algorithm for solving the optimal joint design of signal structure and selling mechanism in our model.

The computation of optimal advertisement has also been studied in \citet{zheng2021optimal} for selling information products. 
A significant distinction between selling information products and physical goods lies in the fact that advertising solely influences the buyers' posterior values for winning the physical goods. In contrast, when advertising information products, it also impacts the quantity of information that can be sold.

Our paper also contributes to the broader domain of information design \citep[e.g.,][]{rayo2010optimal,kamenica2011bayesian}. 
Additionally, there has been a growing body of research that studies the computational complexities of finding optimal signal structures across various settings \citep[e.g.,][]{dughmi2016algorithmic,xu2020tractability}.
Our paper extends this line of work by providing novel computational insights into optimal signal structures when the designer also has the capacity to jointly design the mechanisms for item sales.

Our paper relates to the computational complexity of finding the optimal mechanisms in classic auction environments without advertisements \citep[e.g.,][]{cai2015extreme,cai2012algorithmic,cai2012optimal,cai2013understanding,cai2013reducing,cai2013simple,chen2014complexity,chen2022complexity,papadimitriou2016complexity,alaei2019efficient}. 
For single-item auctions, the optimal mechanism can be computed efficient using the characterization in \citet{myerson1981optimal}. 
For multi-item auctions, \citet{cai2012optimal} provide efficient algorithms for computing the optimal mechanisms for additive buyers. 
When buyers have unit-demand valuations, \citet{chen2014complexity} show that computing the optimal pricing scheme is NP-complete if the valuation distribution on each item has support size at least 3.

Our paper has focused on a specific model concerning optimal information in auctions that fits the applications of optimal advertisements or private inspection. 
Note that there are other papers that explored various models for studying optimal information under different settings and objective functions. Although those models are suitable for studying other applications such as selling information, they do not fit into the applications we are interested in this paper. 
For instance, several papers have considered a model in which the principal can charge prices to the buyer even if the item remains unsold \citep[e.g.,][]{esHo2007optimal,li2017discriminatory,krahmer2020information,smolin2023disclosure}. In this scenario, the principal's goal is to design a pricing scheme that encompasses both the price for providing information and the price for selling the item, with the aim of maximizing the expected revenue.
Another stream of work concerning information in auctions explores signal structures designed to maximize the expected utilities of the participating buyers \citep{roesler2017buyer,deb2021multi,brooks2021optimal,chen2023information}. 
In this context, the designer publicly discloses a signal structure for the buyers to learn their values. Subsequently, the seller strategically responds by proposing a mechanism that optimizes her expected revenue. The key distinction here is that the designer's goal is to implement a signal structure that maximizes the expected utilities of the buyers while anticipating the mechanism adopted by the seller after the signal structure is publicly disclosed.
Lastly, our model also intersects with research on market segmentation and endogenous principal learning, where the focus is essentially on providing additional information to the principal about the buyers' values, rather than the buyers themselves \citep[e.g.,][]{bergemann2015limits,haghpanah2023pareto,CL-23}. 
In all these alternative models concerning optimal information in auctions, the primary objective is to provide qualitative insights into optimal signal structures. However, the computational aspects of these models have, to a large extent, been underexplored.

\section{Preliminaries}
\label{sec:prelim}

We consider the problem of selling a single item to $n$ heterogeneous buyers. 
The value $\val_i$ of each buyer~$i$ is drawn independently from a commonly known distribution $\dist_i$
with finite support $\valspace_i\subseteq\reals_{+}$
and probability mass function $\density_i$. 
Let $m=\max_{i\in[n]}|\valspace_i|$ be the maximum cardinality of the supports
and let $\dists = \dist_1\times\cdots\times\dist_n$ be the joint distribution profile. 
In our model, all buyers have linear utilities. 
That is, 
the utility of buyer $i$ for receiving the item with probability $\alloc_i$ while paying price~$\pay_i$ is 
\begin{align*}
\util_i(\alloc_i,\pay_i; \val_i) = \val_i \cdot \alloc_i - \pay_i. 
\end{align*}
Sometimes we omit $\val_i$ in the notation when the value of the buyer is clear from context. 
The goal of the seller is to maximize the total revenue from the buyers, i.e., $\sum_{i} \pay_i$. 

\paragraph{Signal Structures}
Initially, neither the seller nor the buyers possess any information regarding the values of individual buyers beyond the prior distributions.
In our model, the seller can design signal structures, i.e., Blackwell experiments, that allow each buyer~$i$ to privately learn about his own value~$\val_i$. 
Specifically, the seller can commit to a signal structure $(\signals_i, \experiment_i)$ for each buyer $i$
where $\experiment_i: \valspace \to \Delta(\signals_i)$. 
Each buyer $i$ will privately observe the realized signal $\signal_i$
and update his belief according to the Bayes' rule.

In this paper, a family of signal structures that are of particular interest are monotone partitional signal structures. 
\begin{definition}[Partitional Signal Structures]
$(\signals_i, \experiment_i)$ is a \emph{partitional signal structure} 
if $\signals_i\subseteq 2^{\valspace_i}$ such that 
(1) $\signal_i\bigcap\signal'_i = \varnothing$ for any $\signal_i\neq\signal'_i \in S_i$;
(2) $\bigcup_{\signal_i\in\signals_i} \signal_i= \valspace_i$;
and (3) $\experiment_i(\val_i) = \signal_i$ if $\val_i\in\signal_i$.
Moreover, $(\signals_i, \experiment_i)$ is a \emph{monotone partitional signal structure} 
if it is a partitional signal structure and for any $\signal_i\neq\signal'_i$ and any $\val_i, \hat{\val}_i\in \signal_i, \val'_i\in\signal'_i$, 
$\val_i < \val'_i$ if and only if $\hat{\val}_i < \val'_i$.
\end{definition}

\paragraph{Mechanisms}
A mechanism $\mech=(\lbr{(\signals_i, \experiment_i)}_{i\in[n]}, \lbr{\alloc_i}_{i\in[n]},\lbr{\pay_i}_{i\in[n]})$ is a tuple containing 
a signal structure profile $\lbr{(\signals_i, \experiment_i)}_{i\in[n]}$, 
an allocation rule $\alloc_i : \signals_i \to [0,1]$
and a payment rule $\pay_i : \signals_i \to \reals$ for each buyer $i\in[n]$. 
Mechanism $\mech$ is 
\emph{Bayesian incentive compatible} (BIC) if 
\begin{align*}
\expect[\experiments,\dists]{\util_i(\alloc_i(\signal_i, \signal_{-i}), \pay_i(\signal_i, \signal_{-i}); \val_i) \given \signal_i} \geq 
\expect[\experiments,\dists]{\util_i(\alloc_i(\signal'_i, \signal_{-i}), \pay_i(\signal'_i, \signal_{-i}); \val_i) \given \signal_i}
\end{align*}
for all $\signal_i,\signal'_i\in\signals_i$,
where $\expect[\experiments,\dists]{\cdot}$ indicates that all values are generated according to $\dists$ and all signals are generated according to signal structures~$\experiments$.
Moreover, this mechanism is \emph{interim individual rational} (IIR) if
\begin{align*}
\expect[\experiments,\dists]{\util_i(\alloc_i(\signal_i, \signal_{-i}), \pay_i(\signal_i, \signal_{-i}); \val_i) \given \signal_i} \geq 0
\end{align*}
for all $\signal_i\in\signals_i$. 
To simplify the notations, we also omit $\dists$ in $\expect[\experiments,\dists]{\cdot}$
when it is clear from context. 

For any BIC-IIR mechanism $\mech$, we denote its expected revenue as $\rev(\mech) 
= \expect[\experiments,\dists]{\sum_{i\in[n]}\pay_i(\signal_i, \signal_{-i})}$
and its expected welfare as $\wel(\mech) = \expect[\experiments,\dists]{\sum_{i\in[n]}\val_i\cdot\alloc_i(\signal)}$. 


\paragraph{Discussion of Modeling Choices}
In our paper, we imposed the restriction that the signals from different buyers are independent. This reflects scenarios such as advertising or private inspections, where buyers are provided with the opportunities to learn their own values. In these applications, how buyer $i$ learns their value should not be contingent on how buyer $j\neq i$ learns theirs.\footnote{Naturally, if the buyers' true values for the items are correlated, their signals can also be correlated through their correlation to the true values. 
In environments with correlated values, \citet{cremer1988full} show that the seller can extract the full surplus by providing full information to the buyers. 
In our paper, we focus on environments with independent values and hence independent signals. 
}

In addition, we focus on mechanisms that provide information for free, i.e., advertisements or inspection opportunities come at no cost to the buyers. This aligns with applications where charging prices for provided information is generally not adopted in practice. 
If the seller were permitted to levy additional fees for providing information, similar to the literature on auctions with endogenous entry \citep{menezes2000auctions}, the seller could extract the entire surplus through a mechanism conducting a second-price auction followed by full information disclosure. In such a mechanism, each buyer would be charged their expected utility for receiving information and participating in the auction.

Finally, we allow the seller to use arbitrary information structure for disclosing information to individual buyers. 
While such complex structure may seem impractical for real-world applications, we show that the optimal mechanism as well as the mechanisms computed by our polynomial time algorithms only use \emph{partitional information structures}. In a partitional information structure, the signals reveal to the buyers the range of their values for the item, a feature we consider to be  implementable in real-world settings. Moreover, our constant factor approximation only uses binary signals, which further enhances its practical appeal.

\paragraph{Optimal $k$-Signal Problem} In this paper, we mainly focus on the following computational problem called \probkname problem. 
Specifically, let $\mechs_k$ be the set of BIC-IIR mechanisms in which the cardinality of each signal space is at most $k$, 
i.e., $|\signals_i|\leq k$ for all $i$. 
Given any prior distributions $\dist$ over values and any positive integer $k\geq 2$, the objective of the seller is to design an algorithm that computes a mechanism $\mech^*_k$ in $\mechs_k$ that maximizes the expected revenue, i.e., $\mech^*_k = \argmax_{\mech\in\mechs_k}\; \rev(\mech)$. 
To simplify the analysis in later sections, let \begin{align*}
    \optwel_k = \max_{\mech\in\mechs_k} \wel(\mech)\qquad \text{ and } \qquad\optrev_k = \max_{\mech\in\mechs_k} \rev(\mech)
\end{align*}
be the optimal welfare and optimal revenue respectively with at most $k$ signals for each buyer. 

In our paper, we also consider the problem where the cardinality of the signal space is unrestricted. 
We omit $k$ in the subscripts from notations if the signal space is a compact set without cardinality constraints, and we refer to the optimal design problem without cardinality constraints as the \probname problem.

\subsection{Illustrative Example}
\label{sub:example}
Before going into the details of our analysis, we provide a simple example to illustrate the joint design problem. 

Consider the example with two buyers. In this example, each buyer's value is uniformly and independently drawn among 0, 1 and 2. 
According to \cref{lem:partition}, which we will show later in \cref{sec:nph}, the optimal signal structure is a monotone partitional signal structure. By enumerating all possible monotone partitions, we show that for each buyer $i\in{1,2}$, the optimal signal structure $(\signals_i,\experiment_i)$ comprises a binary signal space $\signals_i = {h,l}$, where signal~$h$ is sent if the value is either 1 or 2, and signal $l$ is sent otherwise.

Conditional on receiving signal $h$, by Bayes' rule, the buyer's posterior belief about their own value is a uniform distribution between 1 and 2, with an expected value of 1.5. Therefore, the buyer's expected utility for winning the item conditional on receiving signal $h$ is $1.5$ and hence his maximum willingness to pay in this case is also $1.5$. Similarly, conditional on receiving signal~$l$, the buyer's posterior belief about their own value is a point mass distribution at $0$.
With this signal structure, the optimal selling mechanism that ensures both incentive compatibility and individual rationality is to sequentially offer each buyer a take-it-or-leave-it price of 1.5.
The item is sold if at least one of the buyers receives signal $h$, which occurs with a probability of $1-\frac{1}{3}\times\frac{1}{3}=\frac{8}{9}$. Therefore, the expected revenue in the optimal mechanism is $\frac{4}{3}$. 

In this example, the optimal welfare is $\frac{13}{9}$, which is strictly greater than the optimal revenue. Therefore, full surplus extraction is generally not possible, even when the seller can design the signal structures.


\section{NP-hardness}
\label{sec:nph}

In this section, we prove that both the \probname problem and the \probkname problem for any $k\geq 2$ are NP-hard {when the maximum size of support is at least \(3\)}.  If the maximum support size is at most \(2\), we demonstrate in \Cref{apx:nph} that the problem can be solved within polynomial time.
The missing proofs in this section are provided in \cref{apx:nph}.


\begin{theorem}[NP-hardness]
\label{thm:nph}
Both the \probname problem and the \probkname problem for any $k\geq 2$ are NP-hard when \(\max|V_i| \geq 3\).
\end{theorem}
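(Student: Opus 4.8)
The plan is to reduce from \textsc{Partition}: given positive integers $a_1, \dots, a_t$ with $\sum_j a_j = 2B$, decide whether there is a subset $J \subseteq [t]$ with $\sum_{j \in J} a_j = B$. The idea is to build an instance of the joint design problem with $n = t$ (or $n$ related to $t$) buyers, where buyer $j$ has a three-point value distribution whose support and probabilities encode $a_j$. The key design principle is that, by \Cref{lem:partition}, every buyer's optimal signal structure is a monotone partition of a three-element support $\{v^{(1)}_j < v^{(2)}_j < v^{(3)}_j\}$, so there are only a constant number of choices per buyer: the full partition $\{\{v^{(1)}_j\},\{v^{(2)}_j\},\{v^{(3)}_j\}\}$, the two ``coarsenings'' $\{\{v^{(1)}_j\},\{v^{(2)}_j,v^{(3)}_j\}\}$ and $\{\{v^{(1)}_j,v^{(2)}_j\},\{v^{(3)}_j\}\}$, and the trivial partition $\{\valspace_j\}$. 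I want to arrange the numbers so that the seller's optimal revenue, as a function of the chosen partitions, is governed by how evenly a certain induced quantity (a posted price, or a sum of ``contributions'' $a_j$) can be split between two regimes --- so that the maximum revenue is achieved exactly when a balanced \textsc{Partition} exists.

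Concretely, I would aim for a construction where each buyer contributes, through the choice of whether to pool its top two signals or not, a ``weight'' roughly proportional to $a_j$ toward one of two competing targets; the revenue contains a term like $-c\,(\sum_{j\in J} a_j - B)^2$ or is piecewise-linear with a kink at the balanced point, so that the revenue hits a clean threshold value iff $\sum_{j \in J} a_j = B$ for some $J$. Then the decision version ``is $\optrev \ge \theta$?'' (and likewise $\optrev_k \ge \theta$ for any fixed $k \ge 2$, since the optimal structures here use at most binary signals after pooling, so the cardinality constraint is not binding) is equivalent to the \textsc{Partition} instance being a yes-instance. For the $k \ge 3$ case the three-point support is directly the right size; for $k = 2$ we need the optimal mechanism to only ever want binary signals, which the construction will ensure. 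One should double-check that all the numbers used (values, probabilities, threshold $\theta$) have bit-complexity polynomial in the input, which is routine since \textsc{Partition} numbers are given in binary and we only take ratios and simple arithmetic combinations of them.

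The first step is to pin down exactly which scalar statistic of the chosen partition profile drives revenue; the cleanest route is to use the structure noted in the paper --- once signal structures are fixed, the subsequent mechanism is Myerson's optimal auction on the induced value distributions --- and to engineer the value distributions so that Myerson's revenue reduces to a transparent closed form (e.g., a single posted price offered sequentially, as in the introductory examples, or a sum of per-buyer virtual-surplus terms). I would try to make each buyer's distribution ``regular'' under every candidate partition, so we can work with virtual values directly and avoid ironing, and so that the expected-max-virtual-value objective decomposes nicely. The second step is the forward direction: given a balanced partition $J$, exhibit the corresponding signal-structure profile and compute its revenue to meet the threshold $\theta$ exactly. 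The third step, and the main obstacle, is the converse: showing that \emph{no} profile of monotone partitional signal structures (and hence, by \Cref{lem:partition}, no signal structures at all) can exceed $\theta$ unless a balanced partition exists. This requires a tight analysis of the revenue across all $O(\text{const}^n)$ partition profiles, controlling cross-buyer interaction terms in the expected-maximum objective so that the only way to reach $\theta$ is the balanced split; getting this inequality to be strict and gap-bounded (so that it also yields NP-hardness of the approximation-free decision problem, and is robust enough to be stated for the $k$-signal variant) is the delicate part of the argument.
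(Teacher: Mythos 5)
Your plan for the unrestricted problem (and hence for every $k\ge 3$) is essentially the paper's own route: reduce from \textsc{Partition}, give each buyer a three-point support, invoke \cref{lem:partition} to cut the candidate signal structures down to a constant number of monotone partitions per buyer, and engineer the numbers so that the Myerson revenue of a partition profile is, up to controlled error, a constant plus a term proportional to $\bigl(\sum_{i\in S}c_i\bigr)\bigl(\sum_{j\in T}c_j\bigr)$, which is maximized exactly at a balanced split. The paper realizes this with probabilities of order $1/M$ for $M=c_1\cdot 2^n$, expands the sequential-posted-price revenue to second order in $1/M$, and checks that the $O(n^3/M^3)$ error is dominated by the $\Theta(1/M^2)$ gap between the balanced and unbalanced cases; that second-order bookkeeping is precisely the ``delicate converse'' you flag but do not carry out, so as written your argument is a plan rather than a proof even for this case.

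The genuine gap is the $k=2$ case. You assert that ``the optimal structures here use at most binary signals after pooling, so the cardinality constraint is not binding,'' but in any construction of this type with three-point supports the profitable choice for some buyers is \emph{full disclosure}, i.e.\ the partition $\{\{v^{(1)}\},\{v^{(2)}\},\{v^{(3)}\}\}$, which uses three signals; indeed, in the paper's \textsc{Partition} instance the set $T$ of buyers uses exactly this three-signal structure, so imposing $k=2$ destroys the reduction. With only two signals allowed on a three-point support, the per-buyer choice collapses to ``pool the top two values'' versus ``pool the bottom two,'' and the resulting revenue is naturally a \emph{multiplicative} function of the chosen set (a product of no-sale probabilities), not an additive one — which is why the paper proves \cref{lem:nph_2signal} by a separate reduction from \textsc{Subset Product}, where the target quantity $\prod_{i\in T}a_i=B$ matches that multiplicative structure. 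Your proposal contains no mechanism for forcing an additive \textsc{Partition}-style objective under a binding two-signal constraint, so the statement for $k=2$ is not established by your construction; you would either need this second, structurally different reduction or a genuinely new idea to make a single construction cover $k=2$.
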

We divide the proof of \cref{thm:nph} into two parts. 
First we show that the \probname problem is NP-hard even if the valuation distributions have support size at most 3. 
\begin{lemma}\label{lem:nph_support3}
The \probname problem is NP-hard even when $|\valspace_i| \leq 3$ for all $i$.
\end{lemma}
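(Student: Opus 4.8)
The plan is to reduce from the NP-complete \textsc{Partition} problem, following the approach of \cite{xiao2020complexity} and \cite{chen2014complexity} as hinted in the introduction. Given an instance of \textsc{Partition} — a set of positive integers $a_1, \dots, a_n$ with $\sum_j a_j = 2B$ — I would construct a joint design instance with (roughly) one buyer per element $a_j$, where each buyer's value distribution $\dist_j$ is supported on exactly $3$ points. The three support points and their probabilities should be chosen so that the only ``interesting'' choices in the monotone partitional signal structure for buyer $j$ are binary: either pool the bottom two values together, or pool the top two values together. By \Cref{lem:partition} (the deterministic/monotone-partitional characterization stated earlier), it suffices to reason about partitional signal structures, so each buyer effectively makes one binary decision, and this binary decision should encode whether $a_j$ is placed on the left or right side of the partition.

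\textbf{Key steps, in order.} First, I would recall \textsc{Partition} and fix notation. Second, I would define the instance: calibrate the support points $v_j^{(1)} < v_j^{(2)} < v_j^{(3)}$ and masses $f_j$ for buyer $j$ as explicit (polynomially-sized, polynomially-computable) functions of $a_j$ and $B$, so that (i) the posterior mean induced by pooling low-low versus high-high differs by an amount proportional to $a_j$, and (ii) the subsequent Myerson-optimal auction revenue on the induced value distributions decomposes, up to the coupling, into a ``baseline'' term plus a term that is maximized exactly when the two sides of the partition are balanced, i.e.\ when $\sum_{j \in L} a_j = \sum_{j \in R} a_j = B$. Third, I would prove the forward direction: if the \textsc{Partition} instance has a yes-answer, exhibit the corresponding signal structure and compute that its optimal-auction revenue meets a target threshold $T$. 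Fourth — the crux — I would prove the converse: any mechanism achieving revenue $\ge T$ must, after invoking \Cref{lem:partition} to restrict to monotone partitional structures, correspond to a balanced partition; here I would argue that any deviation from a binary low-low / high-high split, or any imbalance in the induced sums, strictly decreases revenue by a quantifiable margin, so revenue $\ge T$ forces an exact equal partition. Finally, I would check that all numbers in the construction have bit-length polynomial in the input and that the target $T$ is rational with polynomial description, completing the Karp reduction.

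\textbf{Main obstacle.} The hard part will be the converse direction and the accompanying gadget calibration. Unlike a pure pricing problem, here the seller jointly optimizes the signal structure \emph{and} the downstream auction, and even after \Cref{lem:partition} restricts us to partitional structures, a buyer with a $3$-point support has several partitional options (full revelation, full pooling, low-low pooling, high-high pooling), and the auction's revenue is a global function of the induced joint distribution rather than separable across buyers. I need to choose the support points and masses so that (a) full revelation and full pooling are strictly dominated for every buyer regardless of what the others do, forcing the binary structure, and (b) the optimal auction revenue as a function of the profile of binary choices is a strictly concave-like function of $\sum_{j\in L} a_j$ peaked at $B$, with a curvature gap large enough to separate the equal-partition case from all others. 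Managing this coupling — ensuring the ``interaction'' between buyers through the shared auction does not blur the quadratic penalty for imbalance — is the delicate part; one standard trick is to make the relevant values nearly identical (a clustering/perturbation argument) so the auction behaves essentially like posting a common price, reducing the revenue expression to something transparent in the induced means. I would also need to double-check the edge case $|\valspace_i| \le 2$ is genuinely easier (as the excerpt claims is shown in \Cref{apx:nph}), to confirm that support size $3$ is the true threshold and the reduction is tight.
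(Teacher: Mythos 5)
Your plan targets the right reduction (from \textsc{Partition}, via \cref{lem:partition}, with $3$-point supports and a per-buyer binary signaling choice whose profile encodes the partition, and revenue peaked at the balanced split), and this is indeed the skeleton of the paper's argument. But as written it is a proof outline with the actual proof missing: every one of your ``key steps'' (ii)--(iv) is stated as a goal rather than carried out, and your own ``main obstacle'' paragraph concedes that the gadget calibration and the converse direction --- which are the entire content of the lemma --- are unresolved. You never exhibit the support points and masses, never show that the non-binary partitional options are dominated, and never derive a closed form for the optimal revenue as a function of the choice profile, so there is nothing to verify.

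For comparison, the paper resolves exactly the difficulties you flag, and by different devices than the ones you sketch. It adds an $(n+1)$-th buyer with deterministic value $1$; this buyer's constant virtual value $1$ makes the pooled low signal $\{t_i+1,0\}$ never win, which is what lets one argue that each of the first $n$ buyers effectively chooses between full revelation and pooling the top two values (with $0$ always isolated) --- note this binary menu differs from your ``pool low-low vs.\ pool high-high.'' The probabilities $q_i,r_i$ are calibrated (solving $r_i+q_i=c_i/M$ and $2q_i+r_it_i=q_i+r_i$ with $M=c_1 2^n$) so that the pooled high signal has virtual value exactly $2$ and the revealed top value has virtual value $3$, making the optimal auction a transparent sequential posted-price mechanism; the cross-buyer ``coupling'' you worry about is then handled not by clustering values near a common price but by making all $q_i,r_i$ of order $1/M$ and expanding the revenue to second order, yielding $\rev(S,T)=L+\frac{1}{4M^2}\bigl(\sum_{i\in S}c_i\bigr)\bigl(\sum_{j\in T}c_j\bigr)\pm O(n^3/M^3)$, where the error is provably smaller than the $\frac{1}{4M^2}$ gap between a balanced and an unbalanced partition. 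Without supplying a construction of this kind together with the quantitative error control, your proposal does not yet constitute a proof of \cref{lem:nph_support3}.
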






The proof of \cref{lem:nph_support3} relies on the following characterization of the optimal signal structure. 
\begin{lemma}\label{lem:partition}
For both the \probkname and the  \probname problem, there exists an optimal mechanism in which the signal structure $(\signals_i,\experiment_i)$ is a monotone partitional signal structure for all buyer $i$. 
\end{lemma}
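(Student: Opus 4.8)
The plan is to start from an arbitrary optimal mechanism $\mech = (\{(\signals_i,\experiment_i)\}, \{\alloc_i\},\{\pay_i\})$ and transform the signal structure of a single buyer $i$ — holding everything else fixed — into a monotone partitional one without decreasing revenue; iterating over all buyers then yields the claim. The key observation is that, once all the other buyers' signals and the mechanism are fixed, buyer $i$'s only payoff-relevant statistic after observing $\signal_i$ is his posterior mean $\val_i(\signal_i) = \expect{\val_i \given \signal_i}$, since utilities are linear and the induced mechanism on buyer $i$ (after integrating out $\signal_{-i}$) is a single-agent BIC-IIR mechanism, hence by Myerson characterized by a monotone allocation and the corresponding payment, both functions of the reported posterior mean only. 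So the revenue contributed by buyer $i$ depends on $(\signals_i,\experiment_i)$ only through the distribution $G_i$ of the posterior mean $\val_i(\signal_i)$ that it induces, together with what the fixed mechanism can extract from that distribution.

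The first step is therefore to reduce to the following question: among all signal structures for buyer $i$, which distributions $G_i$ over posterior means are inducible, and which inducible $G_i$ maximizes the seller's revenue against the best response mechanism? A standard fact (mean-preserving contractions / Blackwell) is that $G_i$ is inducible from the prior $\dist_i$ if and only if $G_i$ is a mean-preserving contraction of $\dist_i$, equivalently $\dist_i$ is a mean-preserving spread of $G_i$. The second step is to argue that among all such $G_i$ there is a revenue-maximizing one that is induced by a \emph{monotone partitional} signal structure. Here I would invoke (or re-derive) the structural result of \citet{bergemann2007information}: the optimal signal is \emph{monotone}, i.e. the posterior means can be ordered so that the supports of the signals are ``intervals'' of the value space that do not interleave. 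The extra content of this lemma beyond Bergemann--Pesendorfer is ruling out \emph{randomized} monotone signals: I need to show that if a signal structure randomizes — sending a given value $\val_i$ to different signals with positive probability — then one can derandomize it without hurting revenue.

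For the derandomization step, the plan is to fix a monotone (possibly randomized) optimal signal structure and look at the induced partition-with-splitting: list the signals $s_1,\dots,s_t$ in increasing order of posterior mean, and suppose some value $v^\star$ is split between two \emph{adjacent} signals $s_j, s_{j+1}$ (monotonicity forces any split value to go only to adjacent signals). Moving an infinitesimal mass of $v^\star$ from $s_{j+1}$ to $s_j$ changes the two posterior means $\val_i(s_j), \val_i(s_{j+1})$ and their probabilities; the seller's revenue from buyer $i$ against the re-optimized mechanism is a function of this mass, and the argument is that this function is \emph{convex} (or at least attains its max at an endpoint of the feasible interval of masses) — intuitively because revenue as a function of the value distribution, after taking the upper envelope over mechanisms, inherits a convexity/extreme-point property. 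Hence we can push the mass all the way to one side, eliminating the split of $v^\star$; repeating over all split values yields a deterministic, i.e. partitional, monotone signal structure with weakly higher revenue. I expect this convexity-of-revenue-in-the-split-mass argument (and the bookkeeping that moving mass keeps the structure monotone and doesn't create new splits) to be the main obstacle; the cleanest route is probably to write the seller's per-buyer revenue as $\max$ over a finite family of linear functionals of $G_i$ — one for each candidate Myerson allocation, i.e. each monotone threshold rule — so that it is a convex, piecewise-linear function of the finitely many parameters describing where the split mass goes, and a convex function on a polytope is maximized at a vertex, which corresponds precisely to a deterministic partition. Finally, note the argument applies verbatim to the cardinality-constrained \probkname problem, since derandomizing and the monotone reordering never increase the number of signals.
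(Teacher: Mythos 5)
Your overall strategy is the same as the paper's: start from the Bergemann--Pesendorfer monotonicity result, locate a value $v^\star$ split between two \emph{adjacent} signals, and eliminate the split by pushing the mass to one side via a perturbation/convexity argument. The gap is precisely in the step you flag as the main obstacle. You claim the (re-optimized) revenue is a convex, piecewise-linear function of the split-mass parameter because it is a max over finitely many linear functionals of $G_i$. The max-of-linear-functionals argument gives convexity of revenue as a functional of the posterior-mean distribution $G_i$ \emph{viewed as a measure}; but the path traced in measure space as you shift mass $\varepsilon$ of $v^\star$ between the adjacent signals is not affine, because the atoms themselves move, e.g. the lower signal's posterior mean becomes $(w_a q_a - \varepsilon v^\star)/(q_a-\varepsilon)$. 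Consequently, for a \emph{fixed} mechanism the revenue along this path is a rational, not linear, function of $\varepsilon$ (the paper's own computation exhibits terms of the form $\varepsilon(w_a - v^\star)/(q_a\mp\varepsilon)$), and convexity of the upper envelope in $\varepsilon$ does not follow from your argument. A further bookkeeping issue your per-buyer framing hides is that shifting buyer $i$'s signal probabilities also changes the interim allocations, hence payments, of the \emph{other} buyers; these cross effects must be tracked as well.

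What the paper proves is weaker but sufficient: it fixes the optimal permutation-induced mechanism, perturbs the split mass by $\pm\varepsilon$, and shows the \emph{sum} of the two revenue changes is strictly positive, $\Delta'+\Delta''>0$ --- a strict midpoint-convexity inequality at the current point, which contradicts optimality (the cross effects on other buyers are first order in $\varepsilon$ and cancel in this sum). Strictness requires two ingredients absent from your sketch: (i) a preprocessing step merging adjacent signals with identical interim allocation probabilities, so that the relevant allocation differences are strictly positive; and (ii) the observation that the split value cannot satisfy $w_a=v^\star=w_{a+1}$ simultaneously. One also needs to check that the fixed permutation mechanism remains BIC-IIR for the perturbed structures, which holds because shifting mass of $v^\star$ between the two adjacent signals preserves the order of their posterior means. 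To complete your route you would either have to establish convexity of the optimal revenue along this non-affine path directly --- which essentially amounts to redoing the paper's two-sided computation --- or adopt the paper's fixed-mechanism, two-directional perturbation.
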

\citet{bergemann2007information} consider a model with continuous distributions and show that it is without loss to focus on monotone partitional signal structures. 
In the case of discrete distributions, their result translates into the optimality of randomized monotone partitional signal structures where a value can be randomly assigned to two different partitions. 
\cref{lem:partition} further shows that such \emph{randomization is unnecessary} to ensure the optimality.

An immediate corollary of \cref{lem:partition} is that the cardinality of the signal space for each buyer~$i$ in the optimal mechanism is upper-bounded by the cardinality of the support of the valuation distribution. Therefore, for instances where the valuation distributions have a support size of at most 3, the cardinality of the each buyer's signal space in the optimal mechanism is at most 3. {This strengthened characterization is crucial for our PTAS, as it provides a polynomial upper bound on the support size of the buyers' induced value distributions. This bound is key to making the dynamic programming feasible.
The characterization also has some interesting implications for the hardness results. In particular, it means} that the constraints on the cardinalities of the signal spaces are not binding for those instances when $k\geq 3$. 
Hence, \cref{lem:nph_support3} also implies that the \probkname problem is NP-hard for any $k\geq 3$.

{In the following, we present the hard instance used to derive \Cref{lem:nph_support3} and a brief description of our reduction. Our proof is inspired by \citet{xiao2020complexity} and \citet{chen2014complexity}. It is derived through a reduction from the well-known NP-complete \textsc{Partition} problem \citep{garey1979computers}:

\textsc{Partition:} Given \(n\) positive integers \(c_1, c_2, \cdots, c_n\), is it possible to find a subset \(S\subseteq [n]\) such that \(\sum_{i\in S} c_i = \sum_{i\notin S} c_i\)?

Given an instance of \textsc{Partition}  parameterized by \(c_1, c_2, \cdots, c_n\), it is without loss of generality to assume \(c_1 \geq c_2 \geq \cdots \geq c_n\). We define \(M = c_1\cdot 2^n\) and \(t_i = \half \sum_{\tl{j\in[n]}{j\neq i}} c_j / M\). For each \(i \in [n]\), let \(r_i\) and \(q_i\) represent the unique, non-negative solutions to the following system of equations:
\begin{align*}
r_i + q_i &= c_i / M.\\ 
2q_i + r_i t_i &= q_i + r_i \label{eq:Support3NPH2}
\end{align*}

That is to say,
\begin{align*}
r_i = \frac{c_i}{M(2-t_i)} 
\quad\text{and}\quad
q_i = \frac{c_i(1-t_i)}{M(2-t_i)}.
\end{align*}

We now construct the following hard instance for the \probname problem comprising \(n + 1\) buyers. Specifically, for the \(i\)-th buyer among the first \(n\) buyers, the value distribution \(F_i\) is defined as follows:
\begin{align*}
    v_i \sim F_i, v_i = \begin{cases}
        3 & \text{w.p. }  q_i\\
        t_i + 1 & \text{w.p. } r_i\\
        0 & \text{w.p. } 1 - q_i - r_i
    \end{cases}
\end{align*}
}

In the constructed hard instance for the \probname problem with value support size 3, for each buyer $i$, one of the possible values is $0$, which without loss of optimality separated from the rest of the values in the optimal signal structure. Moreover, for the remaining two values in the support, since the optimal signal structure is monotone partitional, it can only either full disclose these two values or pool them together as a single signal. 
Essentially this provides a natural partition of the buyers based on whether their high values are fully disclosed, and we constructed the values of the items in the \probname problem based on any given \textsc{Partition} problem instance such that the expected revenue is maximized if and only if this partition of buyers is an optimal solution to the \textsc{Partition} problem.

Next, it is sufficient to focus on the \probkname problem with $k=2$. The high level intuition is similar to \cref{lem:nph_support3}. The difference is that here we will provide a reduction from the NP-complete problem \textsc{Subset Product} \citep{ng2010product}.

\begin{lemma}\label{lem:nph_2signal}
The \textsc{Optimal 2-Signal} problem is NP-hard {when \(\max |V_i| \geq 3\)}. 
\end{lemma}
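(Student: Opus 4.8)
The plan is to mimic the structure of the proof of \Cref{lem:nph_support3}, but adapt it so that the cardinality constraint $k=2$ becomes the mechanism that forces a combinatorial choice, and reduce from \textsc{Subset Product} rather than \textsc{Partition}. Recall \textsc{Subset Product}: given positive integers $a_1,\dots,a_n$ and a target $T$, decide whether there is $S\subseteq[n]$ with $\prod_{i\in S}a_i = T$. First I would set up an instance with $n+1$ buyers in which the first $n$ buyers each have a $3$-point value distribution, supported on $\{0, \text{mid}_i, \text{high}_i\}$, together with one extra "anchor" buyer whose role is to make the revenue-extraction accounting clean (analogous to the $(n{+}1)$-st buyer in \Cref{lem:nph_support3}). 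With only $k=2$ signals available, for each of the first $n$ buyers the monotone partitional signal structure (\Cref{lem:partition}) must merge the support into exactly two blocks: either $\{0\}\mid\{\text{mid}_i,\text{high}_i\}$ or $\{0,\text{mid}_i\}\mid\{\text{high}_i\}$ (merging $\{0,\text{mid}_i,\text{high}_i\}$ into one block is never optimal, and the block containing $0$ is always a prefix by monotonicity). So each buyer's signal structure encodes one bit, exactly as in the support-$3$ argument, but now the binary choice is enforced by $k=2$ rather than by the support size — which is why the same idea yields hardness for $k=2$ while leaving room for larger support.

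The heart of the construction is choosing the probabilities and values $q_i, r_i, \text{mid}_i, \text{high}_i$ so that the total optimal revenue, as a function of the chosen subset $S$ of "fully-disclosed" buyers, is governed by a \emph{product} $\prod_{i\in S}a_i$ rather than a sum. I would obtain the product structure the standard way: have the contribution of buyer $i$ enter multiplicatively through the probability that the item is \emph{not} sold to any of buyers $1,\dots,i-1$ in a sequential posted-pricing mechanism — i.e. design each buyer's induced value distribution so that, conditioned on the chosen signal block, the "sale probability" at buyer $i$ is a factor that is $a_i$-dependent precisely when $i\in S$ and $a_i$-independent otherwise, after appropriate normalization. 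Concretely one sets up, for each $i$, a two-equation system (one equation normalizing $q_i+r_i$ to a prescribed mass as in \Cref{lem:nph_support3}, eq.~\eqref{eq:Support3NPH2}-style, and one encoding the indifference that pins down the posted price) whose solution makes the marginal revenue gain from "fully disclosing buyer $i$ versus pooling" equal to a clean expression; the anchor buyer is then tuned so that the aggregate revenue equals some explicit increasing function of $\prod_{i\in S} a_i$, maximized exactly when $\prod_{i\in S}a_i=T$. Since \textsc{Subset Product} is NP-complete (and since, as in \Cref{lem:nph_support3}, one shows that deviating from a monotone partitional, pooling-or-disclosing structure — e.g. using non-partitional or randomized signals — cannot help, invoking \Cref{lem:partition} and the fact that $k=2$ leaves no slack), solving \textsc{Optimal 2-Signal} would solve \textsc{Subset Product}, giving NP-hardness.

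There are two technical points I would be careful about. First, feasibility and regularity: the $q_i,r_i$ must be nonnegative probabilities summing to at most $1$, and the induced value distributions after merging should behave well enough that Myerson's optimal auction for them is a sequential posted-pricing mechanism with the prices we intend — this requires checking monotonicity of the relevant virtual-value/revenue-curve expressions, exactly the kind of bookkeeping done for eq.~\eqref{eq:Support3NPH2}. Second, and this is the step I expect to be the main obstacle: arranging a genuine \emph{multiplicative} (rather than additive) dependence on the $a_i$ while keeping all quantities polynomially bounded in bit-length. \textsc{Subset Product} instances can have $\prod a_i$ exponentially large, so the values/probabilities in the auction instance must be expressed compactly (e.g. logarithmically, or via a scaling $M$ analogous to the $M=c_1\cdot 2^n$ trick here) and the "revenue $=$ function of the product" identity must survive that encoding; making the reduction both correct and polynomial-size is the delicate part, whereas the "one bit per buyer via $k=2$" skeleton is a direct adaptation of the preceding lemma. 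I would close the argument by verifying the two directions of the reduction (a valid subset yields the target revenue; achieving the target revenue forces a valid subset, using that any optimal mechanism is WLOG monotone partitional with two blocks) and noting that the construction keeps $\max|V_i|=3\geq 3$, as required.
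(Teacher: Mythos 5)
Your plan follows the same high-level route as the paper's proof — a reduction from \textsc{Subset Product}, using \cref{lem:partition} plus the $k=2$ constraint to force each buyer's signal structure into one of two monotone two-block partitions ($\{0\}\mid\{\mathrm{mid}_i,\mathrm{high}_i\}$ or $\{0,\mathrm{mid}_i\}\mid\{\mathrm{high}_i\}$), and extracting a multiplicative dependence on the $a_i$ through the no-sale probabilities of a sequential posted-price mechanism. However, as written it is a proof plan with the central gadget left unspecified, and you yourself flag the key step (making the product structure survive a polynomial-size encoding, and pinning down exactly which revenue values are achievable) as an unresolved obstacle. That is the genuine gap: the entire content of the lemma lives in the explicit choice of values and probabilities, and without it one cannot verify that the two candidate partitions have the claimed virtual values, that the optimal mechanism is the intended posted-price sequence, or that the revenue is uniquely maximized at the target product. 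In the paper this is done by giving buyer $i$ values $1$, $\frac{B^2-a_i}{B^2+1}$, $0$ with probabilities $\frac{a_i-1}{a_i}$, $\frac{a_i-1}{a_i^2}$, $\frac{1}{a_i^2}$; this makes the high-signal virtual value equal to the constant $1$ when the lowest two values are pooled and the constant $\frac{B^2}{B^2+1}$ when the highest two are pooled (all low signals having negative virtual value), so that the optimal revenue is exactly
\begin{align*}
\rev(S,T) \;=\; 1-\prod_{i\in S}\tfrac{1}{a_i}+\prod_{i\in S}\tfrac{1}{a_i}\Bigl(1-\prod_{j\in T}\tfrac{1}{a_j^{2}}\Bigr)\cdot\tfrac{B^{2}}{B^{2}+1},
\end{align*}
a function of $\gamma=\prod_{i\in T}a_i$ alone that \citet{agrawal2020optimal} show is uniquely maximized at $\gamma=B$. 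That last uniqueness claim is also part of what your sketch would still owe: ``achieving the target revenue forces a valid subset'' needs exactly this single-variable optimization argument, not just the forward direction.

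Two smaller points of comparison. First, the obstacle you single out — that $\prod_i a_i$ can be exponentially large — dissolves once the gadget is written as above: the instance only contains per-buyer rationals built from $a_i$ and $B$, the product appears only in the revenue \emph{analysis}, and the decision threshold (the revenue value at $\gamma=B$) is a rational with polynomially many bits, so no logarithmic re-encoding or scaling trick à la $M=c_1\cdot 2^n$ is needed. Second, the $(n{+}1)$-st ``anchor'' buyer you import from \cref{lem:nph_support3} is unnecessary here: with the paper's construction the constant virtual values $1$ and $\frac{B^2}{B^2+1}$ already order the buyers cleanly, and the instance uses only $n$ buyers. Neither of these is an error in your plan, but to turn the plan into a proof you must exhibit concrete distributions with the properties (i) constant, $i$-independent positive virtual value for each of the two admissible partitions, (ii) no-sale probabilities that are powers of $1/a_i$, and (iii) a resulting revenue expression uniquely maximized at the target product — and then verify (i)–(iii) by the kind of virtual-value bookkeeping your sketch currently defers.
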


{
We again offer a high-level overview for the proof of \cref{lem:nph_2signal} with the full details provided in \cref{apx:nph}. The hard instance in this lemma is inspired by \citet{agrawal2020optimal}. We provide the reduction from the following NP-complete \textsc{Subset Product} problem \citep{ng2010product}:

\textsc{Subset Product:} Given integers \(a_1, a_2, \cdots, a_n\) with \(a_i > 1\) and a positive integer \(B\), is it possible to find a subset \(S\subseteq [n]\) such that \(\prod_{i\in S} a_i = B\)?

Given $a_1, a_2, \ldots, a_n$ and $B$, we construct a hard instance of the \textsc{Optimal 2-Signal} problem with \( n \) buyers.  For each buyer \( i \), we define the value distribution \( F_i \) such that:
\begin{align*}
    v_i \sim F_i, v_i = \begin{cases}
        1 & \text{w.p. }  \frac{a_i - 1}{a_i}\\
        \frac{B^2 - a_i}{B^2 + 1} & \text{w.p. } \frac{a_i - 1}{a_i^2}\\
        0 & \text{w.p. } \frac{1}{a_i^2}
    \end{cases}
\end{align*}

In our construction, \(0\) is once again included in the distribution's support. Given that no more than 
\(2\) signals are permitted, it can be shown that there are only two possible optimal signal structures: either pooling the highest two values together and leaving \(0\) alone, or pooling the lowest two values together and isolating~\(1\). Note that in both signal structures, only the higher signal possesses a positive virtual value. Moreover, the positive virtual values in both signal structures are constants independent from \(i\). Therefore, finding the optimal \(2\)-signal structure in this instance is equivalent to deciding a set of buyers \(T\subseteq [n]\) who pool the highest two values together. It turns out that the revenue of any signal structure is a function of \(\prod_{i\in T} a_i\) and attains the maximum when \(\prod_{i\in T} a_i = B\).
}

\cref{thm:nph} holds immediately by combining \cref{lem:nph_support3} and \cref{lem:nph_2signal}.

\section{Polynomial Time Algorithms}
\label{sec:algorithms}

In this section, we provide several polynomial time algorithms for approximating the optimal revenue of the optimal signal structure with at most $k$ signals.
The missing proofs in this section are provided in \cref{apx:algorithms}.


\subsection{Warm-up: A Simple Constant Approximation Algorithm}
\label{sub:approx}

We first introduce a simple $\sfrac{e}{(e-1)}$-approximate algorithm for this problem.

\begin{theorem}[Welfare Approximating Algorithm]
\label{thm:simplealgo}
For any $k\geq 2$, there exists an algorithm for the \probkname problem with running time $\poly(n,m)$ 
that computes a mechanism $\mech$ with expected revenue at least $1-\sfrac{1}{e}$ fraction of the optimal welfare, 
i.e., $\rev(\mech) \geq (1-\sfrac{1}{e}) \cdot \optwel$. 
\end{theorem}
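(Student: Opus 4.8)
The plan is to compare the revenue of a judiciously chosen binary-signal mechanism against the \emph{ex-ante relaxation} of welfare, absorbing the loss into the single-item correlation gap \citep{yan2011mechanism}.

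\emph{The benchmark.} First I would show $\optwel\le R$, where $R$ is the value of the ex-ante LP
\[
R \;:=\; \max\Big\{\, \sum_{i\in[n]} \expect[\val_i]{\val_i\,y_i(\val_i)} \;:\; y_i:\valspace_i\to[0,1],\ \sum_{i\in[n]}\expect[\val_i]{y_i(\val_i)}\le 1 \,\Big\}.
\]
Indeed, given any (unrestricted) BIC-IIR mechanism $\mech'$, set $y_i(\val_i):=\expect{\alloc_i(\signal)\wgiven\val_i}$; since $\sum_i\alloc_i(\signal)\le 1$ pointwise this is feasible, and $\wel(\mech')=\sum_i\expect{\val_i\alloc_i(\signal)}=\sum_i\expect[\val_i]{\val_i\,y_i(\val_i)}\le R$, hence $\optwel\le R$. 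The LP is solved by a greedy water-filling — list all pairs $(\val,\density_i(\val))$ in decreasing order of $\val$ and allocate ``serving mass'' until the unit budget is exhausted — which runs in $\poly(n,m)$ time and yields a common threshold $\tau^*$ together with per-buyer tie-breaking probabilities at $\tau^*$; at the optimum the induced quantities $q_i:=\expect[\val_i]{y_i^*(\val_i)}$ satisfy $\sum_i q_i\le 1$, and $R=\sum_i q_i\mu_i$ with $\mu_i:=\expect{\val_i\wgiven\text{served}}\ge\tau^*$.

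\emph{The mechanism.} I would give buyer $i$ the binary signal structure that sends a ``high'' signal exactly when $i$ is served by the LP solution, so it uses only two signals (as needed for every $k\ge2$), the ``high'' signal has probability $q_i$, and the posterior mean conditional on ``high'' is $\mu_i$. Then run a sequential posted-price mechanism that visits buyers in \emph{decreasing order of $\mu_i$} (this ordering, not a uniformly random one, is what yields the $1-\sfrac1e$ factor): upon reaching buyer $i$, if the item is unsold and $i$ reported ``high'', offer price $\mu_i$, which the buyer weakly accepts. This mechanism is (dominant-strategy, hence Bayesian) incentive compatible and interim individually rational: a ``high'' buyer obtains expected utility $0$, and a ``low'' buyer, whose posterior mean is at most $\tau^*\le\mu_i$, is never offered the item. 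Writing $P_i:=\prod_{j<i}(1-q_j)$ for the probability the item survives to buyer $i$ (using independence of the buyers' signals), the expected revenue equals $\sum_i \mu_i q_i P_i$.

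\emph{Closing the gap.} It then remains to show $\sum_i\mu_iq_iP_i\ge(1-\sfrac1e)R=(1-\sfrac1e)\sum_i\mu_iq_i$. Both $(\mu_i)$ and $(P_i)$ are non-increasing in $i$, so the weighted Chebyshev sum inequality with weights $q_i\ge0$ gives $\big(\sum_i q_i\big)\big(\sum_i q_i\mu_iP_i\big)\ge\big(\sum_i q_i\mu_i\big)\big(\sum_i q_iP_i\big)$. Since $\sum_i q_iP_i=\sum_i(P_i-P_{i+1})=1-\prod_i(1-q_i)$ and $\sum_i q_i\le1$, the claim reduces to the elementary bound $\frac{1-\prod_i(1-q_i)}{\sum_i q_i}\ge1-\sfrac1e$ for $\sum_i q_i\le1$, which follows from $\prod_i(1-q_i)\le e^{-\sum_i q_i}$ and the concavity of $s\mapsto1-e^{-s}$ on $[0,1]$ (it equals $(1-\sfrac1e)s$ at $s=0$ and $s=1$). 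Combining, $\rev(\mech)=\sum_i\mu_iq_iP_i\ge(1-\sfrac1e)R\ge(1-\sfrac1e)\optwel$. The main obstacle is this correlation-gap step: a naive random-order posted-price mechanism only yields factor $\sfrac12$, so obtaining $1-\sfrac1e$ requires both prioritizing buyers by posterior mean and the Chebyshev-type argument above to handle heterogeneous $\mu_i$'s, together with the minor but necessary care about randomized tie-breaking at $\tau^*$ so that $R$ is exactly $\sum_i q_i\mu_i$.
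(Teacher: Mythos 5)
Your proposal is correct, and it builds essentially the same mechanism as the paper's Algorithm~\ref{alg:constant_approx} (a binary ``top-quantile'' signal per buyer, with randomized tie-breaking at the boundary value, followed by sequential posted prices equal to the high-signal posterior means, visited in decreasing order of those means). The differences lie in the benchmark and in how the $1-\sfrac1e$ factor is certified. The paper picks the ex-ante quantiles directly as $q_i=\prob{\val_i=\argmax_j \val_j}$, upper-bounds $\optwel$ by $\sum_i q_i b_i$ using the fact that the top-$q_i$-quantile mean dominates the mean conditional on being the maximum, and then invokes the correlation gap theorem for submodular functions \citep{yan2011mechanism} as a black box applied to $f(S)=\max_{i\in S}b_i$, comparing the independent high-signal set against the correlated ``argmax'' set. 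You instead take the optimal ex-ante LP (fractional knapsack with a common threshold $\tau^*$) as the benchmark, which weakly dominates the paper's $\sum_i q_i b_i$, and you prove the gap from scratch: the weighted Chebyshev sum inequality for the similarly ordered sequences $(\mu_i)$ and $(P_i)$ with weights $q_i$, combined with $\sum_i q_iP_i = 1-\prod_i(1-q_i)\ge 1-e^{-\sum_i q_i}\ge(1-\sfrac1e)\sum_i q_i$, exactly reproduces the correlation-gap bound for this unit-demand objective. What your route buys is a self-contained, elementary proof that makes explicit why the decreasing-$\mu_i$ order matters; what the paper's route buys is brevity and modularity, since the submodular correlation gap handles the heterogeneous $b_i$'s without any rearrangement argument. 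The remaining checks in your sketch (BIC/IIR of the posted-price mechanism, the revenue formula $\sum_i\mu_iq_iP_i$ by independence of signals, $\sum_i q_i\le1$ from LP feasibility, and the $\poly(n,m)$ greedy solution of the LP) all go through.
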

Since welfare is an upper bound on the optimal revenue and the optimal revenue is higher when the cardinality constraints on the signal space is relaxed, we have 
\begin{align*}
\rev(\mech)\geq \rbr{1 - \frac1e} \optwel \geq \rbr{1 - \frac1e} \optrev\geq \rbr{1 - \frac1e} \optrev_k.
\end{align*}
Therefore, the constructed mechanism given in \cref{thm:simplealgo} is an $\sfrac{e}{(e-1)}$ approximation to the optimal revenue. 
Moreover, 
An immediate corollary of \cref{thm:simplealgo} is that the optimal revenue is a constant approximation to the optimal welfare when the seller can design the signal structures for arbitrary buyer distributions~$\dists$. In contrast, this claim is no longer valid even when buyers have regular value distributions if the seller lacks the ability to design the signal structures.
\begin{corollary}\label{cor:rev_approx_wel}
$\optrev \geq (1-\sfrac{1}{e}) \cdot \optwel$. 
\end{corollary}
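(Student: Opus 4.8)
The plan is to read \cref{cor:rev_approx_wel} off directly from \cref{thm:simplealgo}. Fix any $k \geq 2$ (say $k = 2$) and let $\mech$ be the mechanism returned by the algorithm of \cref{thm:simplealgo}, so that $\mech \in \mechs_k$ and $\rev(\mech) \geq \rbr{1-\sfrac{1}{e}}\,\optwel$. Since a $k$-signal mechanism is in particular a feasible BIC--IIR mechanism with unrestricted signal spaces, we have $\mech \in \mechs$, and therefore $\optrev = \max_{\mech'\in\mechs}\rev(\mech') \geq \rev(\mech) \geq \rbr{1-\sfrac{1}{e}}\,\optwel$. That is the entire argument for the corollary; there is no obstacle at this level, and all of the content lives in \cref{thm:simplealgo}.

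For completeness, here is the approach I would take to prove \cref{thm:simplealgo}, following the correlation-gap technique of \citet{agrawal2010correlation,yan2011mechanism}. First compute the efficient (welfare-maximizing) allocation and set $y_i = \Pr_{\vals\sim\dists}\!\big[\,i \in \argmax_j v_j\,\big]$, buyer $i$'s ex-ante winning probability under a fixed tie-breaking rule, so that $\sum_i y_i = 1$ and $\optwel = \sum_i \expect{v_i\cdot\indicate{i\text{ wins}}}$. For each buyer $i$ pick a threshold $\tau_i$ with $\Pr[v_i \geq \tau_i] = y_i$ (splitting the boundary atom if needed) and use the binary signal structure that only reveals whether $v_i \geq \tau_i$; writing $h_i = \expect{v_i \mid v_i \geq \tau_i}$ for the posterior mean after the ``high'' signal, a short rearrangement argument (maximize $\expect{v_i\, g(v_i)}$ over nondecreasing $g\colon\posreals\to[0,1]$ with $\expect{g(v_i)}=y_i$; the maximizer is $g=\indicate{v_i\geq\tau_i}$) gives $y_i h_i \geq \expect{v_i\cdot\indicate{i\text{ wins}}}$, hence $\sum_i y_i h_i \geq \optwel$. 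Then run a sequential posted-pricing mechanism offering price $h_i$ to buyer $i$, which is accepted exactly when buyer $i$'s signal is ``high'' (so IIR holds, with indifference at the price); ordering the offers by the standard $\rbr{1-\sfrac{1}{e}}$-selectable online contention resolution / magician's policy for selecting at most one of independent events, buyer $i$ is served with probability at least $\rbr{1-\sfrac{1}{e}}y_i$, so the expected revenue is at least $\rbr{1-\sfrac{1}{e}}\sum_i y_i h_i \geq \rbr{1-\sfrac{1}{e}}\,\optwel$. The quantities $y_i,\tau_i,h_i$ depend only on the supports (total size $O(nm)$), so the running time is $\poly(n,m)$.

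The one delicate point --- and it sits entirely inside \cref{thm:simplealgo}, not in the corollary --- is the treatment of atoms: in general no $\tau_i$ gives $\Pr[v_i\geq\tau_i]=y_i$ exactly, so one must split the boundary atom (equivalently randomize the signal at that value) and then recheck both $\sum_i y_i h_i \geq \optwel$ and that the posted-pricing step still collects a $\rbr{1-\sfrac{1}{e}}$ fraction. Everything else is the textbook correlation-gap computation, using $\prod_i(1-y_i)\leq e^{-\sum_i y_i}=\sfrac{1}{e}$.
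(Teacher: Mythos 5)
Your proof of the corollary is exactly the paper's: the mechanism $\mech$ from \cref{thm:simplealgo} is itself a feasible BIC-IIR mechanism, so $\optrev \geq \rev(\mech) \geq (1-\sfrac{1}{e})\cdot\optwel$, and all the content sits in \cref{thm:simplealgo}. (Your supplementary sketch of \cref{thm:simplealgo} differs slightly from the paper — it asserts a per-buyer $(1-\sfrac{1}{e})$-selectability guarantee via an OCRS/magician policy instead of the paper's aggregate correlation-gap bound on the submodular function $f(S)=\max_{i\in S} b_i$ under the decreasing-$b_i$ order — but since the corollary invokes only the statement of the theorem, this does not affect its correctness.)
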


\begin{algorithm}[t]
\caption{A constant approximation algorithm}
\label{alg:constant_approx}
\KwIn{Prior distributions $\dists = \dist_1\times\cdots\times \dist_n$.}
\KwOut{Mechanism $\mech=(\lbr{(\signals_i, \experiment_i)}_{i\in[n]}, \lbr{\alloc_i}_{i\in[n]},\lbr{\pay_i}_{i\in[n]})$.}

Computes the probability that $i$ has the largest value among all buyers as
\[q_i = \prob[\vals\sim \dists]{\val_i = \argmax_{j\in[n]} \val_j}\]
where ties are broken according to lexicographic order. 

Let $\bar{\val}_i$ be the minimum value such that $\dist_i(\bar{\val}_i) \geq 1-q_i$ and let 
$d_i = \frac{\dist_i(\bar{\val}_i) - 1+q_i}{\density_i(\bar{\val}_i)}$. 

\textbf{Signal Structure:} For each buyer $i$, consider a binary signal space $\signals_i = \{0,1\}$ and signal structure $\experiment_i$ such that 
\begin{align*}
\experiment_i(\val_i) = \begin{cases}
\delta_1 & \val_i > \bar{\val}_i\\
d_i\delta_1 + (1-d_i)\delta_0 & \val_i = \bar{\val}_i\\
\delta_0 & \val_i < \bar{\val}_i
\end{cases}
\end{align*}
where $\delta_{\signal}$ is a point mass measure on signal $\signal$ for all $\signal\in\signals$. 

Let $b_i = \expect[\experiment_i, \dist_i]{\val_i\given \signal_i = 1}$
be the expected value of buyer $i$ conditional on receiving signal $1$. 

\textbf{Selling Mechanism:} In mechanism $\mech$, the seller uses sequential posted pricing by approaching the buyers according to an order that is decreasing in $b_i$. 
That is, for each buyer $i$, for any signal profile $\signal$, $\alloc_i(\signal) = 1$ if buyer $i$ is the first buyer (according to decreasing order on $b_i$) such that $\signal_i = 1$. 
Moreover, $\pay_i(\signal) = b_i$ if $\alloc_i(\signal) = 1$.
\end{algorithm}

The algorithm for \cref{thm:simplealgo} is constructed in Algorithm \ref{alg:constant_approx}.
Intuitively, the algorithm first computes the probability $q_i$ each buyer is the highest value buyer given the prior distribution, 
and sends a binary signal to the buyer to indicate whether his value is in the upper $q_i$ quantile. 
Then the algorithm computes the expected value $b_i$ of each buyer conditional on receiving the high value signal.
The seller approaches the buyers sequentially according to the decreasing order on conditional values~$b_i$, and offers prices that equal the conditional values. 
It is easy to verify that the algorithm can be implemented in $\poly\rbr{n,m}$ time. 
Therefore, it is sufficient to show that the revenue derived from the constructed mechanism is at least $1-\sfrac{1}{e}$ fraction of the optimal welfare, which can be proved by applying the technique of correlation gap \citet{yan2011mechanism}. 
\begin{lemma}\label{lem:simplepprev}
The revenue of mechanism $\mech$ constructed in Algorithm \ref{alg:constant_approx}
is at least $1-\sfrac1e$ fraction of the optimal welfare, i.e., 
$\rev(\mech)\geq \rbr{1 - \sfrac1e} \optwel.$
\end{lemma}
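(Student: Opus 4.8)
The plan is to establish the revenue bound in two stages: first I would argue that the sequential posted pricing mechanism collects, in expectation, a revenue equal to the correlation-gap-discounted version of a certain ``ideal'' welfare quantity; then I would identify that ideal quantity with $\optwel$ and invoke the $1-\sfrac1e$ correlation gap bound for monotone submodular functions (equivalently, the rank function of a matroid / the ``max'' function). Concretely, set $p_i = \prob[\experiment_i,\dist_i]{\signal_i = 1}$; by construction of $\bar\val_i$ and $d_i$ we have $p_i = q_i$ exactly, so buyer $i$ independently ``activates'' (receives signal $1$) with probability $q_i$, and $\sum_i q_i = 1$ since the $q_i$ partition the event space of ``who is the argmax''. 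The key accounting identity is that $b_i \cdot q_i = \expect[\vals\sim\dists]{\val_i \cdot \indic[\val_i = \argmax_j \val_j]}$: the left side is the contribution to revenue of buyer $i$ when he is reached and purchases, and the right side is buyer $i$'s contribution to $\optwel$ (recall that with $k\geq 2$ signals the optimal welfare is simply achieved by fully revealing to everyone and awarding the item to the highest-value buyer, so $\optwel = \expect[\vals]{\max_j \val_j} = \sum_i \expect[\vals]{\val_i \indic[\val_i = \argmax_j\val_j]}$). Thus $\sum_i b_i q_i = \optwel$.

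Next I would compute $\rev(\mech)$ for the sequential posted pricing. In this mechanism buyer $i$ pays $b_i$ precisely when $\signal_i = 1$ \emph{and} no buyer $i'$ earlier in the decreasing-$b$ order has $\signal_{i'}=1$. Since the signals across buyers are independent with activation probabilities $q_i$, the probability buyer $i$ is the winner is $q_i \prod_{i' \prec i}(1-q_{i'})$, so
\begin{align*}
\rev(\mech) = \sum_{i\in[n]} b_i \cdot q_i \prod_{i' \prec i}(1 - q_{i'}).
\end{align*}
Now I would recognize the right-hand side as exactly the expected value of the ``pick the highest available $b$'' process applied to an \emph{independent} product distribution where item $i$ appears with probability $q_i$ and has value $b_i$ — i.e., it is $\expect[\text{independent}]{\max_{i : \text{active}} b_i}$ written out by conditioning on the first active buyer in decreasing-$b$ order. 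By contrast, $\sum_i b_i q_i$ is the same objective evaluated on the \emph{perfectly correlated} distribution in which exactly one buyer is active, buyer $i$ with probability $q_i$ (this is the ``fully-correlated'' coupling with the same marginals, legitimate because $\sum_i q_i = 1$). The correlation gap theorem for the weighted matroid-rank / ``max'' function (Agrawal et al., Yan) states that for any monotone submodular set function, the expectation under the independent distribution with given marginals is at least $1-\sfrac1e$ times the expectation under \emph{any} coupling with the same marginals, in particular the correlated one; applying it here gives
\begin{align*}
\rev(\mech) = \expect[\text{ind}]{\max_{i:\text{active}} b_i} \;\geq\; \rbr{1-\tfrac1e}\sum_{i} b_i q_i \;=\; \rbr{1-\tfrac1e}\optwel,
\end{align*}
which is the claim. (The $k$-signal restriction is irrelevant here since the constructed $\mech$ uses only binary signals, and $\optwel \geq \optwel_k$ for the comparison, though the statement is against the unrestricted $\optwel$.)

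The main obstacle I anticipate is making the two bookkeeping identities airtight: (i) that $p_i = q_i$ exactly — this needs the definition of $\bar\val_i$ as the \emph{minimum} value with $\dist_i(\bar\val_i)\geq 1-q_i$ together with the tie-breaking fractional mass $d_i$ to be checked carefully, especially handling ties and the boundary case where $\dist_i(\bar\val_i) = 1-q_i$ exactly (so $d_i$ could be $0$); and (ii) that $b_i q_i$ equals buyer $i$'s welfare contribution under the argmax allocation — this requires verifying that the set of values assigned signal $1$ (with the appropriate fractional split at $\bar\val_i$) is precisely the upper-$q_i$-quantile event $\{\val_i = \argmax_j \val_j\}$ has the same probability \emph{and} the same conditional expectation of $\val_i$, which follows from the lexicographic tie-breaking matching the threshold structure but should be spelled out. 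Once these are in place, invoking the correlation gap result is the routine part; I would cite \citet{yan2011mechanism} (and \citet{agrawal2010correlation}) for the precise $1-\sfrac1e$ bound for the max / rank function and quote it as a black box.
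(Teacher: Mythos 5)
Your overall route is the same as the paper's: write the revenue of the sequential posted-price mechanism as the expectation of $f(\mathcal{A})=\max_{i\in\mathcal{A}}b_i$ over the independently drawn set $\mathcal{A}$ of buyers receiving signal $1$ (marginals $q_i$, which indeed equal $\prob{\signal_i=1}$ by the choice of $\bar\val_i$ and $d_i$), compare it against the perfectly correlated distribution with the same marginals supported on the singleton containing the argmax buyer, and invoke the $1-\sfrac1e$ correlation gap for monotone submodular functions. This is exactly the paper's argument.

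The one step that would fail as written is your ``key accounting identity'' $b_i q_i=\expect{\val_i\cdot\indicate{\val_i=\argmax_j\val_j}}$, together with the verification you sketch for it in (ii), namely that the signal-$1$ event and the argmax event have the same conditional expectation of $\val_i$. These are two different events of the same probability $q_i$: the signal-$1$ event is the top-$q_i$-quantile event of $\dist_i$ alone, whereas the argmax event depends on the other buyers' values, and no tie-breaking convention makes them coincide. For two i.i.d.\ buyers uniform on $[0,1]$ (or on a fine grid), $\expect{\val_1\mid\val_1=\argmax_j\val_j}=\sfrac{2}{3}$ while $b_1=\sfrac{3}{4}$, so the identity is false and in general $\expect{\val_i\cdot\indicate{\val_i=\argmax_j\val_j}}<q_i b_i$. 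Fortunately the inequality runs in the direction you need: since the top-$q_i$ quantile maximizes the conditional mean of $\val_i$ among all events of probability $q_i$, one has $\expect{\val_i\mid\val_i=\argmax_j\val_j}\le b_i$, hence $\optwel\le\sum_i q_i b_i$ --- this is precisely the paper's ex ante relaxation step. With that substitution the correlation gap gives $\rev(\mech)\ge\rbr{1-\sfrac1e}\sum_i q_i b_i\ge\rbr{1-\sfrac1e}\optwel$, and your proof coincides with the paper's; just replace the claimed equality (and its doomed verification) by this one-line inequality.
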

\begin{proof}
The proof here is an application of correlation gap. We define a unit-demand set function $f(S)$ for $S\subseteq [n]$ where
\[f(S) = \max_{i\in S} b_i.\]

It is clear to verify that $f(S)$ is a submodular function. For the sequential posted price mechanism, we define $\mathcal{A}$ as the set of buyers that receive signal $1$. Therefore, the revenue of mechanism $\mech$ is the expectation of $f(\mathcal{A})$:
\begin{equation}
\label{eq:simplerevproof1}
\rev\rbr{\mech} = \expect[\mathcal{A}]{f(\mathcal{A})}.
\end{equation}

Since $q_i$ is the probability that buyer $i$ has the highest value among all buyers, and $b_i$ is the expectation of $\val_i$ in the top $q_i$ quantile, it follows that
\[\expect{\val_i \given \val_i = \argmax_{j\in[n]} v_j} \leq b_i.\]

Therefore, by ex ante relaxation, we have
\begin{align}
\label{eq:simplerevproof2}
\begin{split}
    \optwel = \sum_{i=1}^n \prob{\val_i = \argmax_{j\in[n]} v_j} \expect{\val_i \given \val_i = \argmax_{j\in[n]} v_j}
    \leq \sum_{i=1}^n q_i b_i.
\end{split}
\end{align}

Note that for each buyer $i$, it is in the set $\mathcal{A}$ with probability $q_i$ independently. Now consider another set $\mathcal{B}$ where $\mathcal{B} = \left\{i\given \val_i = \argmax_{j\in[n]} v_j\right\}$. It is clear that the marginal probability that $i$ is in the set $\mathcal{B}$ is also $q_i$. However, the events that whether buyer $i$ is in the set $\mathcal{B}$ are not independent. Since the correlation gap for submodular functions is $\frac{e}{e-1}$ \citep[see][]{yan2011mechanism}, it holds that
\[\expect[\mathcal{A}]{f(\mathcal{A})} \geq \left(1-\frac1e\right)\expect[\mathcal{B}]{f\rbr{\mathcal{B}}}. \]
Follow from the definition of $\mathcal{B}$, it holds that 
$\expect[\mathcal{B}]{f\rbr{\mathcal{B}}} = \sum_{i=1}^n q_i b_i$.
Combining the two inequalities above, we have
\begin{align}
\label{eq:simplerevproof3}
\begin{split}
    \expect[\mathcal{A}]{f(\mathcal{A})} \geq \left(1-\frac1e\right) \cdot \sum_{i=1}^n q_i b_i.
\end{split}
\end{align}
Finally, \cref{lem:simplepprev} holds by combining inequalities~\eqref{eq:simplerevproof1}, \eqref{eq:simplerevproof2} and \eqref{eq:simplerevproof3}.
\end{proof}

\subsection{PTAS}
\label{sub:ptas}

In this section, we present a PTAS for the \probkname problem. The missing proofs in this section are provided in \cref{apx:ptas}.

\begin{theorem}[PTAS]
\label{thm:ptas}
For any $k\geq 2$ and  any $\epsilon > 0$,  there exists an algorithm that computes a multiplicative $(1-\varepsilon)$-optimal solution to the \probkname problem in time $\poly\rbr{\left(\frac{nm}{\epsilon}\right)^{f\left(\frac{1}{\varepsilon}\right)},n,m,k}$, where $n$ is the number of bidders, $m$ is the maximum cardinality of the support of the bidders' value distributions, and $f(\cdot)$ is a fixed function that is independent of the parameters of the \probkname problem.
\end{theorem}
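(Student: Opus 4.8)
\textbf{Proof proposal for \Cref{thm:ptas}.}

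The plan is to follow the two-step decomposition outlined in the introduction. By \Cref{lem:partition} we may restrict attention to monotone partitional signal structures, so each buyer's induced value distribution is the distribution of the posterior mean over a consecutive block partition of $\valspace_i$; there are only $O(2^m)$ such partitions per buyer. By \citet{bergemann2007information} the optimal induced value distribution is regular for every buyer, so by Myerson's revenue equivalence we can replace revenue by virtual welfare: the objective becomes $\expect{\max_i \virtual_i}$ where $\virtual_i$ is the virtual value random variable of buyer $i$'s induced distribution, and crucially, because of regularity, no ironing is needed. The first step is to build, for each buyer $i$, an $\epsilon$-net $\feasibles_i$ of regular inducible virtual-value distributions: a set of polynomial size such that every regular inducible distribution is $\epsilon$-close (in contribution to virtual welfare) to some member, and conversely every member is $\epsilon$-close to some genuinely regular inducible distribution so that the mechanism we output is actually realizable. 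I would construct $\feasibles_i$ by a dynamic program over the sorted support of $\valspace_i$ that scans candidate block boundaries left to right, memoizing at each state a suitably discretized summary of the partial partition — essentially the running probability mass, the running expected value, and enough information about the last block's posterior mean to check both the block-monotonicity of posterior means and the regularity (monotone virtual value) constraint — rounding these quantities onto a grid of granularity $\poly(\epsilon/(nm))$ so that only polynomially many states survive. The second step is another dynamic program over buyers $1,\dots,n$: maintain the distribution of $\max_{j\le i}\virtual_j$ (again rounded onto a grid at every step to keep the state count polynomial), and at buyer $i{+}1$ iterate over $\feasibles_{i+1}$, taking the pointwise max with a fresh independent draw; at the end read off $\expect{\max_i \virtual_i}$ and backtrack to recover the partitions, hence the signal structures, and output Myerson's auction for the resulting induced distributions. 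The running time is $\poly((nm/\epsilon)^{f(1/\epsilon)}, n, m, k)$: the $k$ dependence is mild since by \Cref{lem:partition} the cardinality constraint only matters for $k=2$, and for $k\ge 3$ it is not binding; for $k=2$ we simply restrict each $\feasibles_i$ to the $O(m)$ two-block partitions.

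The accounting for correctness proceeds by a hybrid argument. Starting from the optimal monotone partitional solution, replace each buyer's induced distribution by the nearby member of $\feasibles_i$; this changes $\expect{\max_i\virtual_i}$ by at most $O(\epsilon\cdot\optrev)$ per buyer because the max operator is $1$-Lipschitz in each coordinate and the summaries are discretized at a scale fine relative to the total virtual welfare (here one must be careful: virtual values can be negative, but only the positive part contributes to the max against the always-available outside option $0$, and the relevant magnitudes are bounded by $\max_i\max\valspace_i$, so normalizing by $\optrev$ is legitimate). Then bound the error introduced by rounding the running $\max$-distribution at each of the $n$ DP steps — this is where I expect the main obstacle. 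Rounding a distribution and then taking a further pointwise-max with an independent variable does not commute cleanly with expectation, and the errors could in principle compound multiplicatively over the $n$ stages; the fix is to round in a way that preserves the expectation of the max (e.g. round the CDF so that the mean of the positive part is exactly preserved, pushing mass to grid points in a mean-preserving way) and to argue that such mean-preserving rounding, composed with a monotone $1$-Lipschitz operation, incurs only additive error $O(\epsilon\cdot\optrev/n)$ per step, which telescopes to $O(\epsilon\cdot\optrev)$ overall. One also needs a lower bound certifying that $\optrev$ is not too small relative to the individual value magnitudes so that these additive errors are genuinely an $\epsilon$-fraction of the optimum — this follows from \Cref{thm:simplealgo}, since $\optrev \ge (1-1/e)\optwel \ge (1-1/e)\max_i \expect{\val_i}$, giving the needed scale.

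The remaining pieces are routine: verifying that the DP states and transitions are as claimed, that the grid sizes yield the stated polynomial bound with exponent $f(1/\epsilon)$ depending only on $\epsilon$, that the backtracking recovers a valid mechanism in $\mechs_k$, and that Myerson's auction on a regular distribution is computable in polynomial time from its support. I would therefore organize the proof as: (i) reduction to regular inducible distributions and to virtual welfare via \Cref{lem:partition} and revenue equivalence; (ii) \Cref{alg:dps} and a lemma establishing that $\feasibles_i$ is a valid polynomial-size $\epsilon$-net for the induced virtual-value distributions; (iii) the second DP with mean-preserving rounding and a lemma bounding the accumulated rounding error; (iv) combining the two error bounds with the scale lower bound from \Cref{thm:simplealgo} to conclude the multiplicative $(1-\epsilon)$ guarantee; and (v) the running-time tally.
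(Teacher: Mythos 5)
Your overall architecture is the same as the paper's: restrict to monotone partitional structures via \Cref{lem:partition}, use regularity plus revenue equivalence to pass to virtual welfare, build a per-buyer polynomial-size net of discretized inducible virtual-value distributions by a DP over block boundaries, then a second DP over buyers that tracks a rounded distribution of the running maximum, backtracks, and runs Myerson. However, there are two genuine gaps. First, you never handle large virtual values, and this is where the claimed running time breaks. A value can be enormous with tiny probability (e.g., of order $2^m$ with probability $2^{-m}$), so even after normalizing by $\optrev$ or $\optwel$, a value-space grid of granularity $\poly(\epsilon/(nm))$ extending up to $\max_i\max\valspace_i$ has a number of grid points that depends on the numeric magnitudes, not on $n,m,1/\epsilon$ alone; your net $\feasibles_i$ is then not of size $\poly\rbr{(nm/\epsilon)^{f(1/\epsilon)}}$. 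The paper resolves this by normalizing $\optwel=1$, truncating virtual values at $\epsilon^{-1}$, and storing the truncated contribution in a scalar compensation term $\compensate_i$ added linearly across buyers; the nontrivial point (\Cref{lem:highvals}) is that replacing a max by a sum on the high-value event costs only $O(\epsilon)$, via Markov's inequality since the event that any virtual value exceeds $\epsilon^{-1}$ has probability at most $\epsilon$. This idea is missing from your proposal and is needed for the stated time bound. Second, your treatment of the cardinality constraint is wrong: you claim it binds only for $k=2$, but it binds whenever $k<m_i$ (the paper's remark about $k\ge 3$ was specific to the support-3 hardness instances). As written, your algorithm can output a signal structure outside $\mechs_k$. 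The fix is what the paper does in \Cref{alg:dps}: carry the number of signals used, $k'\le k$, in the per-buyer DP state.

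On the second DP, your worry about multiplicatively compounding errors, and the proposed mean-preserving rounding, are a deviation from the paper that I would drop. The paper simply rounds probabilities down to multiples of $\epsilon^4/(nm)$ and pushes the residual mass to $0$; since the support lies in a grid of size $O(\epsilon^{-2})$ with values at most $\epsilon^{-1}$, each of the $n$ merge-and-round steps loses at most $O(\epsilon/n)$ additively and the error is one-sided, so it telescopes to $O(\epsilon)$ (\Cref{lem:discrete_quant_loss}); no mean preservation is needed. Moreover, a rounding that exactly preserves the mean while also forcing probability masses onto the $\epsilon^4/(nm)$ grid (which you need to keep the state count polynomial) is not obviously possible, so as stated that step of your plan is both unnecessary and underspecified. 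With the truncation/compensation device added, the $k'$ bookkeeping restored, and the simple round-down analysis in place of mean-preserving rounding, your outline matches the paper's proof.
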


Note that by \cref{lem:partition}, the maximum cardinality of the signal spaces is at most $m$ in the \probname problem. Therefore, by applying \Cref{thm:ptas} to the case where $k=m$, the resulting mechanism is approximately optimal for the \probname problem, even when the signal spaces are unrestricted. Hence, the algorithm proposed in \cref{thm:ptas} also serves as a PTAS for the \probname problem.
\begin{corollary}
\label{cor:ptas_unrestricted}
For any $\varepsilon>0$, there exists an algorithm that computes a multiplicative $(1-\varepsilon)$-optimal solution to the \probname problem in time $\poly\rbr{(\frac{nm}{\epsilon})^{f(\frac{1}{\varepsilon})},n,m}$, where $n$ is the number of bidders, $m$ is the maximum cardinality of the support of the bidders' value distributions, and $f(\cdot)$ is a fixed function that is independent of the parameters of the \probname problem.
\end{corollary}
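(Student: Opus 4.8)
The plan is to reduce \cref{cor:ptas_unrestricted} directly to \cref{thm:ptas}, by observing that an optimal mechanism for the \probname problem uses at most $m$ signals per buyer, so that the cardinality-unconstrained problem coincides with the \probkname problem for the particular choice $k=m$.

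First I would invoke \cref{lem:partition}: there is an optimal mechanism for the \probname problem in which every buyer $i$'s signal structure $(\signals_i,\experiment_i)$ is a monotone partitional signal structure. Such a structure assigns exactly one signal to each nonempty cell of a partition of $\valspace_i$, so its signal space has cardinality at most $|\valspace_i|\le m$. Hence this optimal mechanism lies in $\mechs_m$, which gives $\optrev\le\optrev_m$; the reverse inequality $\optrev\ge\optrev_m$ is immediate because $\mechs_m$ is contained in the set of all BIC-IIR mechanisms. Therefore $\optrev=\optrev_m$, i.e., the cardinality constraint $k=m$ is not binding.

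Next I would run the algorithm of \cref{thm:ptas} with accuracy $\varepsilon$ and cardinality parameter $k=m$. It returns a mechanism $\mech\in\mechs_m$ with $\rev(\mech)\ge(1-\varepsilon)\,\optrev_m=(1-\varepsilon)\,\optrev$. Since any mechanism in $\mechs_m$ is feasible for the \probname problem, which imposes no cardinality constraint at all, $\mech$ is a multiplicative $(1-\varepsilon)$-optimal solution to the \probname problem. The running time from \cref{thm:ptas} with $k=m$ is $\poly\rbr{(nm/\epsilon)^{f(1/\varepsilon)},n,m,m}$, and absorbing the extra factor $m$ into the $\poly(\cdot,n,m)$ part yields exactly the claimed bound $\poly\rbr{(nm/\epsilon)^{f(1/\varepsilon)},n,m}$.

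There is essentially no obstacle here: the only point requiring care is the inequality $\optrev\le\optrev_m$, which relies on \cref{lem:partition} bounding the size of the optimal signal space by the \emph{support} size $m$ rather than merely asserting finiteness. That bound is precisely what the monotone partitional characterization delivers, so the rest of the argument is pure bookkeeping about feasibility and the form of the running-time polynomial.
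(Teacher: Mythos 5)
Your proposal is correct and matches the paper's own argument: it invokes \cref{lem:partition} to conclude that an optimal signal structure uses at most $m$ signals per buyer, so the constraint is not binding at $k=m$, and then applies \cref{thm:ptas} with $k=m$, absorbing the extra factor into the running-time polynomial. You simply spell out the bookkeeping ($\optrev=\optrev_m$ and feasibility of the returned mechanism) more explicitly than the paper does.
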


To prove \cref{thm:ptas}, first note that it is without loss of generality to normalize the valuation distributions such that the optimal welfare equals $1$. 
We will show that Algorithm \ref{alg:ptas} outputs a mechanism with an $O(\epsilon)$ additive loss in expected revenue after the normalization of the distributions. 
Since \cref{cor:rev_approx_wel} implies that the optimal revenue is a constant approximation to the optimal welfare, any $O(\epsilon)$ additive loss immediately implies an $O(\epsilon)$ multiplicative loss in expected revenue.

The computation of the optimal signal structure relies on the revenue equivalence result in \citet{myerson1981optimal}, which converts the optimal revenue to the expected virtual surplus. 
\begin{definition}[Virtual Values]
\label{def:virtualvalues}
For any discrete distribution $\dist$ with probability mass function $\density$, for any valuation $\val$ in the support of $\dist$, letting $\val^+$ be the minimum value in the support of $\dist$ that is strictly larger than $\val$, 
the virtual value of $\val$ in distribution $\dist$ is 
\begin{align*}
\virtual(\val) = \val - \frac{(1-\dist(\val))(\val^+ - \val)}{\density(\val)}.
\end{align*}

We might, for the sake of convenience, extend the notation to use \(\virtual(\signal_i)\), where \(\signal_i\) represents a signal within the signal structure \(\left(\signals_i, \experiment_i\right)\). This is intended to denote the virtual value of signal \(s_i\) with respect to the distribution generated by the signal structure.

\end{definition}


\begin{definition}[Regularity]
A valuation distribution $\dist$ is \emph{regular} if the corresponding virtual value function $\virtual(\val)$ is non-decreasing in $\val$. 
\end{definition}

\begin{lemma}[Revenue Equivalence \citep{myerson1981optimal}]
\label{lem:rev_equiv}
For any regular distribution profile $\dist_1, \dots, \dist_n$, the optimal revenue equals virtual welfare, i.e., 
\begin{align*}
\optrev = \expect[\val\sim\dist]{\max_{i\in[n]} \virtual_i(\val_i)}.
\end{align*}
\end{lemma}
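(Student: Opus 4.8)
This is the discrete-support specialization of Myerson's characterization, so the plan is to run the standard virtual-welfare argument, with regularity entering only at the final feasibility check. Fix the regular distributions $\dist_1,\dots,\dist_n$; for any BIC-IIR mechanism write $\bar\alloc_i(\val_i)=\expect[\vals_{-i}]{\alloc_i(\val_i,\vals_{-i})}$, $\bar\pay_i(\val_i)=\expect[\vals_{-i}]{\pay_i(\val_i,\vals_{-i})}$, and $U_i(\val_i)=\val_i\bar\alloc_i(\val_i)-\bar\pay_i(\val_i)$, and enumerate the support of $\dist_i$ as $\val_i^{(1)}<\cdots<\val_i^{(m_i)}$. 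The first step is the discrete payment identity: adding the upward and downward local BIC constraints between consecutive types shows $\bar\alloc_i$ is non-decreasing and that $U_i(\val_i^{(j+1)})-U_i(\val_i^{(j)})$ lies in the interval $\big[(\val_i^{(j+1)}-\val_i^{(j)})\,\bar\alloc_i(\val_i^{(j)}),\,(\val_i^{(j+1)}-\val_i^{(j)})\,\bar\alloc_i(\val_i^{(j+1)})\big]$, so telescoping together with IIR gives $U_i(\val_i^{(j)})\ge\sum_{l<j}(\val_i^{(l+1)}-\val_i^{(l)})\,\bar\alloc_i(\val_i^{(l)})$. Substituting $\bar\pay_i=\val_i\bar\alloc_i-U_i$, taking expectations, and rearranging by Abel summation turns $\rev(\mech)=\sum_i\expect[\val_i\sim\dist_i]{\bar\pay_i(\val_i)}$ into $\sum_i\expect[\val_i\sim\dist_i]{\bar\alloc_i(\val_i)\,\virtual_i(\val_i)}$ minus a non-negative remainder, where $\virtual_i$ is exactly the discrete virtual value of \cref{def:virtualvalues} (the gap $\val^+-\val$ there being the consecutive gap $\val_i^{(l+1)}-\val_i^{(l)}$). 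Hence every BIC-IIR mechanism obeys $\rev(\mech)\le\expect[\vals\sim\dists]{\sum_i\alloc_i(\vals)\,\virtual_i(\val_i)}$, with equality when the downward local IC binds at every type and $U_i(\val_i^{(1)})=0$.

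Combining this with $\alloc_i\ge0$ and $\sum_i\alloc_i\le1$ gives the pointwise bound $\sum_i\alloc_i(\vals)\virtual_i(\val_i)\le\max\{0,\max_i\virtual_i(\val_i)\}$, hence $\optrev\le\expect[\vals\sim\dists]{\max\{0,\max_i\virtual_i(\val_i)\}}$, which is $\expect[\vals\sim\dists]{\max_i\virtual_i(\val_i)}$ under the convention (implicit in the statement) that not allocating is a ``null bidder'' of virtual value identically $0$ --- equivalently, adjoin a phantom bidder $0$ with $\virtual_0\equiv0$ and $\alloc_0=1-\sum_i\alloc_i$. For the matching lower bound I would use the allocation $\alloc^*_i(\vals)=1$ iff $i$ is the lexicographically-first index attaining $\max_j\virtual_j(\val_j)$ and that maximum is $\ge0$ (otherwise nothing is allocated), together with the revenue-maximizing payments compatible with $\alloc^*$: pin $U_i(\val_i^{(1)})=0$ and make the downward local IC bind everywhere, then realize these interim utilities by any ex-post payment rule (e.g.\ charging a winning buyer a constant per-unit price). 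By the first-step identity this mechanism collects exactly $\expect[\vals\sim\dists]{\max_i\virtual_i(\val_i)}$ under the null-bidder convention, so it remains only to check it is BIC-IIR; IIR is immediate since each $U_i$ is non-negative and non-decreasing.

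The one place regularity is used --- and the step I expect to require the most care --- is showing $\alloc^*$ induces non-decreasing interim allocations $\bar\alloc^*_i$. Fix $i$ and $\vals_{-i}$ and put $M=\max_{j\ne i}\virtual_j(\val_j)$. Buyer $i$ wins at $(\val_i,\vals_{-i})$ precisely when either $\virtual_i(\val_i)>M$, or $\virtual_i(\val_i)=M$ and $i$ beats $M$ in the index tie-break, and in addition $\max\{\virtual_i(\val_i),M\}\ge0$. Since $\dist_i$ is regular, $\virtual_i$ is non-decreasing in $\val_i$, so $\{\val_i:\virtual_i(\val_i)>M\}$ and $\{\val_i:\virtual_i(\val_i)\ge0\}$ are up-sets of the support while $\{\val_i:\virtual_i(\val_i)=M\}$ is a contiguous block on which the (index-based, hence $\val_i$-independent) tie-break verdict is constant; a short case analysis then shows the winning set is an up-set, so $\alloc^*_i(\cdot,\vals_{-i})$ is monotone and averaging over $\vals_{-i}$ preserves monotonicity of $\bar\alloc^*_i$. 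Monotonicity together with the binding-downward-IC payments yields BIC by the usual local-to-global argument, closing the loop $\optrev\ge\expect[\vals\sim\dists]{\max_i\virtual_i(\val_i)}$ and matching the upper bound. The only genuinely delicate point is this discrete tie-breaking bookkeeping; everything else is Abel summation, and without regularity one would here need ironing and the ironed virtual value instead.
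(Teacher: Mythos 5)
The paper offers no proof of \cref{lem:rev_equiv}; it is imported directly from \citet{myerson1981optimal}, and your argument is a correct, self-contained reconstruction of exactly that standard approach specialized to discrete supports (local IC constraints giving monotonicity and the payment identity, Abel summation producing the discrete virtual value of \cref{def:virtualvalues}, pointwise maximization for the upper bound, and regularity ensuring the greedy-by-virtual-value allocation is monotone hence implementable). Your one caveat is well taken: as literally written the right-hand side should be $\expect[\val\sim\dist]{\max\lbr{0,\max_{i\in[n]}\virtual_i(\val_i)}}$, since the seller withholds the item when all virtual values are negative, and your ``null bidder'' convention is the right way to reconcile the statement with the bound you derive (the paper itself uses this truncation implicitly, e.g.\ when Algorithm~\ref{alg:dps} moves probability mass to $\max(\textsf{VirtualValue},0)$).
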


Moreover, Lemma 2 of \citet{bergemann2007information} show that it is without loss to consider monotone partitional signal structure such that the induced posterior valuation distribution is regular. 
Therefore, by converting the problem of revenue maximization into virtual welfare maximization, 
the high level idea of the PTAS is to discretize the space of virtual value distributions and use dynamic programming to keep track of the optimal virtual welfare the seller can collect from first $i$ buyers for any $i\leq n$. 
Formally, our PTAS consists of two phases:
\begin{itemize}
    \item \textbf{Phase 1:} Identify a monotone partitional signal structure that induces a \emph{regular} posterior valuation distribution, and achieves virtual welfare that is an $(1 - \varepsilon)$-approximation to the optimal revenue.

    \item \textbf{Phase 2:} Compute the subsequent selling mechanism that maximizes the expected revenue given the signal structure in Phase 1. 
\end{itemize}
The computation of Phase 2 can be solved in polynomial time by \citet{myerson1981optimal}. Therefore, in the rest of the section, we will focus on the computation of the optimal signal structure in Phase 1. 
We will first show how to discretize the space of virtual values, then illustrate how to compute the optimum using dynamic programming. 

\subsubsection{Discretization} 
We first describe our discretization scheme. 
For any buyer $i\in[n]$, let $\virtualdist_i$ be the distribution over virtual values and let $\virtualdensity_i$ be the probability mass function for virtual values. 
Given any virtual value distribution $\virtualdist_i$, for high virtual values above the threshold of $\epsilon^{-1}$, we store the contribution of those virtual values in a compensation term $\compensate_i$. 
For low virtual values below the threshold of $\epsilon^{-1}$, we simply discretize both the virtual values space and the probability space, and round down the distribution to the discretization grid. Specifically,

\begin{enumerate}
\item \textbf{Bounded Support:} For any virtual value distribution $\virtualdist_i$ with probability mass function $\virtualdensity_i$, we neglect the virtual values above $\epsilon^{-1}$, and for any $\val_i \in (0, \varepsilon^{-1}]$, we round it down to the nearest multiples of $\varepsilon$. 
Moreover, we round down the probability mass $\virtualdensity_i(\val_i)$ to the nearest multiples of $\frac{\varepsilon^4}{nm}$. 
That is, letting 
\begin{align*}
\grid = \lbr{z\varepsilon \mid z \in \naturals, z\varepsilon \leq \varepsilon^{-1}}
\end{align*}
be the discretization grid in the virtual value space, 
the discretized distribution $\hat{\virtualdist}_i$
has support in~$\grid$, and for any $\val_i \in \grid$, 
\begin{align*}
\hat{\virtualdensity}_i(\val_i) = \sum_{\val'_i \in [\val_i, \val_i+\epsilon)} \frac{z_{\val'_i} \epsilon^4}{nm}
\text{ where }
z_{\val'_i} = \max_{z\in\naturals} \lbr{ z \,\Big\vert\, \frac{z\varepsilon^4}{nm} \leq \virtualdensity_i(\val'_i)}.
\end{align*}
The rest of the probability mass in $\hat{\virtualdist}_i$ is reallocated to value $0$. 

\item \textbf{Compensation Term:} 
We define a compensation term $\compensate_i$ to preserve the mean for values above the threshold of $\varepsilon^{-1}$ as 
\begin{align}\label{eq:highvals0}
    \compensate_i = \expect[\val_i \sim \virtualdist_i]{\val_i \cdot \indicate{v_i > \varepsilon^{-1}}}.
\end{align}
Note that $\compensate_i\leq 1$ for all $i$ since the virtual welfare is at most the welfare, where the latter is normalized to 1. 
In the discretization scheme, we round down $\val_i \cdot\virtualdensity_i(\val_i)$ to the nearest multiples of $\frac{\epsilon}{nm}$ for any $\val_i > \epsilon^{-1}$ and then take the sum. 
\end{enumerate}
In the discretization scheme, we define $\discretize_V$ as the process of only discretizing the virtual value space, and $\discretize_Q$ as the process of only discretizing the probability space. 
Let $\discretize$ be the joint process of discretizing both virtual values and probabilities. 
Note that, in probability space discretization, even if multiple virtual values share the same discretization, we discretize the probabilities separately before aggregating them into a single virtual value in the discretized distribution. This procedure results in a larger discretization error, necessitating a finer discretization grid to maintain a small error. However, this sacrifice is crucial for the polynomial time computation of the set of implementable discretized distributions, as defined later in \cref{def:implementable}, using dynamic programming via Algorithm \ref{alg:dps}.
Similarly, we employ such discretization scheme for the compensation term $\compensate_i$ as well.

Given the discretization scheme, we can then define the space of feasible discretized virtual value distributions and compensation terms. 

\begin{definition}[Feasible Discretization Pairs]
A pair of distribution and compensation term $(\virtualdist, \compensate)$ is \emph{feasible} if 
\begin{enumerate}
\item The support of $\virtualdist$ is a subset of $\grid$, 
and the probability mass $\virtualdensity(\val)$ is an integral multiple of $\frac{\varepsilon^4}{nm}$ for any~$\val$.
\item $\compensate$ is an integral multiple of $\frac{\epsilon}{nm}$ and does not exceed $1$.
\end{enumerate}
\end{definition}
We denote $\feasibles$ as the set of feasible pairs. By slightly overloading the notation, sometimes we say a distribution $\virtualdist\in\feasibles$ if there exists $\compensate$ such that $(\virtualdist,\compensate)\in\feasibles$.

\begin{lemma}\label{lem:numberofdis}
The cardinality of the set of feasible pairs $\feasibles$ is $O\rbr{\rbr{\frac{nm}{\epsilon}}^{f\rbr{\frac{1}{\varepsilon}}}}$, where $f\rbr{\frac{1}{\varepsilon}}$ is a function that exclusively depends on $\frac{1}{\varepsilon}$.
\end{lemma}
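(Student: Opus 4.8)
The plan is to bound $|\feasibles|$ by counting, separately, the number of choices for the discretized virtual value distribution $\virtualdist$ and the number of choices for the compensation term $\compensate$, and then multiplying. The compensation term is the easy factor: $\compensate$ is a nonnegative integral multiple of $\frac{\epsilon}{nm}$ not exceeding $1$, so there are at most $\frac{nm}{\epsilon}+1 = O\rbr{\frac{nm}{\epsilon}}$ choices for it. The bulk of the work is bounding the number of feasible $\virtualdist$.

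For $\virtualdist$, I would first observe that its support is contained in the fixed grid $\grid = \lbr{z\epsilon \mid z\in\naturals,\, z\epsilon \le \epsilon^{-1}}$, which has exactly $\floor{\epsilon^{-2}}+1 = O\rbr{\epsilon^{-2}}$ points. So a feasible $\virtualdist$ is determined by assigning, to each of these $O(\epsilon^{-2})$ grid points, a probability mass that is an integral multiple of $\frac{\epsilon^4}{nm}$ (and lies in $[0,1]$, since it is a probability). The number of allowed mass values at a single grid point is at most $\frac{nm}{\epsilon^4}+1 = O\rbr{\frac{nm}{\epsilon^4}}$, so crudely the number of feasible $\virtualdist$ is at most $\rbr{O\rbr{\frac{nm}{\epsilon^4}}}^{O(\epsilon^{-2})} = \rbr{\frac{nm}{\epsilon}}^{O(\epsilon^{-2})}$, absorbing constants into the exponent. (One may optionally sharpen this using that the masses sum to at most $1$, i.e. counting compositions, but this is not needed for the stated bound.) Multiplying by the $O\rbr{\frac{nm}{\epsilon}}$ choices for $\compensate$ only changes the polynomial in $\frac{nm}{\epsilon}$ by an additive constant in the exponent, so altogether $|\feasibles| = \rbr{\frac{nm}{\epsilon}}^{O(\epsilon^{-2})}$, which is of the form $O\rbr{\rbr{\frac{nm}{\epsilon}}^{f(1/\epsilon)}}$ with $f(1/\epsilon) = \Theta(\epsilon^{-2})$ a function depending only on $\frac{1}{\epsilon}$.

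I do not anticipate a genuine obstacle here; the lemma is essentially a counting exercise. The only point requiring a little care is making sure every quantity being counted lives in a range that is polynomial in $\frac{nm}{\epsilon}$ with an exponent depending only on $\epsilon$: the grid size depends only on $\epsilon$ (it is $O(\epsilon^{-2})$), the per-coordinate mass resolution $\frac{\epsilon^4}{nm}$ gives $O\rbr{\frac{nm}{\epsilon^4}}$ values, and $\compensate$ contributes only an $O\rbr{\frac{nm}{\epsilon}}$ factor — so the base of the exponential is polynomial in $\frac{nm}{\epsilon}$ and the exponent is a fixed function $f$ of $\frac{1}{\epsilon}$ alone, exactly as claimed. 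I would state the final bound as $|\feasibles| = O\rbr{\rbr{\frac{nm}{\epsilon}}^{f(1/\epsilon)}}$ and remark that one can take $f(1/\epsilon) = O(\epsilon^{-2})$.
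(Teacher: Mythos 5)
Your proposal is correct and follows essentially the same counting argument as the paper: bound the number of feasible distributions by the grid size $O(\epsilon^{-2})$ with $O(nm/\epsilon^4)$ mass values per grid point, bound the compensation terms by $O(nm/\epsilon)$, and multiply, yielding $f(1/\epsilon)=O(\epsilon^{-2})$.
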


Given the valuation distribution $\dist_i$, not all feasible pairs can be attained by implementing a monotone partitional signal structure with regular posterior value distribution, and then discretizing the virtual value distribution. We define the implementable set $\sets_i(\dist_i)$ as the set of implementable pairs derived based on value distribution $\dist_i$. 
Note that $|\sets_i(\dist_i)| \leq |\feasibles|$ for all $i$ and $\dist_i$. We often omit $\dist_i$ from the notation when it is clear from context. 


\begin{definition}[Implementable Discretization Pairs]
\label{def:implementable}
A pair of distribution and compensation term $(\virtualdist, \compensate)$ is \emph{implementable} given $\dist$, i.e., $(\virtualdist, \compensate)\in\sets(\dist)$, if there exists a monotone partitional $k$-signal structure $\rbr{\signals,\experiment}$ such that 
the induces posterior valuation distribution $\hat{\dist}$ is regular, and the discretized pair for the virtual value distribution of $\hat{\dist}$ matches $(\virtualdist, \compensate)$.
\end{definition}

\subsubsection{Dynamic Programming Algorithm}
Our algorithm has two main steps for computing the optimal monotone partitional $k$-signal structure. 
\begin{enumerate}
\item Computing the set of implementable pairs $\sets_i(\dist_i)$ for all buyer $i\in[n]$. 

\item Computing an implementable pair $(\virtualdist_i, \compensate_i) \in \sets_i(\dist_i)$ for each buyer $i$ to maximize the virtual welfare.
\end{enumerate}
Both steps are implemented using dynamic programming. 

\paragraph{Computation of Implementable Pairs}
For any buyer $i$, let \(\val_{i,j}\) be the \(j\)-th smallest value in the support of buyer~$i$'s valuation distribution. We use $m_i$ to denote $|\valspace_i|$.

We define the state for dynamic programming as \(\dps\rbr{k', j, j', \virtualdist_i, \compensate_i}\), which memorizes the following information as a tuple. The first component of the tuple is binary and is true if and only if the following two conditions are met: 
\begin{enumerate}[(i)]
\item There exists a monotone partitional signal structure $(\signals_i,\experiment_i)$ for values \( \left\{\val_{i,j}, \val_{i, j+1},\cdots, \val_{i, m_i}\right\}\) within distribution \(\virtualdist_i\) such that 
(1) the cardinality of the signal space \(|\signals_i|=k'\leq k\);
(2) the posterior value distribution $\hat{\dist}_i$ induced by $(\signals_i,\experiment_i)$ is regular; and 
(3) the discretized pair of the virtual value distribution $\virtualdist_i$ given $\hat{\dist}_i$ coincides with \( \rbr{\virtualdist_i, \compensate_i} \). 
Note that the signal structure for \( \lbr{\val_{i,1},\val_{i,2},\cdots, \val_{i,j-1}}\) remains undetermined, and hence we fulfill the probability mass of those part with a value of $0$ in posterior value distribution $\hat{\dist}_i$;

\item The partitional signal structure in (i) contains \( \lbr{v_{i,j}, v_{i,j+1},\cdots,v_{i,j'}} \) as a single signal. 
\end{enumerate}
The second component of the tuple stores the the maximum virtual value achievable by pooling valuations from $v_{i,j}$ to $v_{i,j'}$ subject to the constraint that the virtual values are monotone. 
The third component is used to store the index of the dynamic programming states for backtracking the implementable signal structures. 

The states of the dynamic programming are computed via Algorithm \ref{alg:dps}.
In Step 14, the function $\discretize$ is the discretization scheme that rounds down both distribution $\virtualdist'$ and the compensation term $\compensate'$.
Intuitively, Algorithm \ref{alg:dps} keeps track of all implementable pairs for using monotone partitions over values above $\val_{i,j}$. The algorithm repeatedly  incorporates new implementable pairs through the introduction of an additional partition that terminates at the value~$\val_{i,j-1}$. This addition must satisfy that the resulting virtual valuation is monotone.

With the states of the dynamic programming established, the computation of the sets of implementable pairs \(\sets_i\) becomes straightforward. A pair \(\left(\virtualdist_i, \compensate_i\right)\) is deemed implementable, that is, included in the set \(\sets_i\), if and only if there exists \(k' \leq k\) and \(j \geq 1\) such that the state \(\dps\left(k',1, j,\virtualdist_i,\compensate_i\right)\) evaluates to be true. 
Consequently, our approach involves enumerating all such pairs and verifying whether each candidate pair \(\rbr{\virtualdist_i, \compensate_i}\) qualifies for inclusion in \(\sets_i\). This process is efficiently executed in $\poly\rbr{\rbr{\frac{nm}{\varepsilon}}^{f\left(\frac{1}{\varepsilon}\right)},m,k}$ time.
The correctness of Algorithm~\ref{alg:dps} is established in the following lemma, with proof provided in \cref{apx:correctness}.

\begin{lemma}\label{lem:Algorithm2CorrectNess}
Algorithm \ref{alg:dps} correctly computes the set of implementable pairs $\sets_i$ for all $i\in[n]$, i.e., any feasible discretization pair $(\virtualdist'_i, \compensate'_i)$ is in $\sets_i$ if and only if there exists a partitioned signal structure \( (\signals_i, \experiment_i) \) that induces a regular virtual value distribution $\virtualdist$ such that $\discretize(\virtualdist)=(\virtualdist'_i, \compensate'_i)$. 

Additionally, for any $(\virtualdist'_i, \compensate'_i)\in\sets_i$
backtracking can compute a monotone partitioned signal structure $(\signals_i, \experiment_i)$ that induces a regular virtual value distribution $\virtualdist$ such that $\discretize(\virtualdist)=(\virtualdist'_i, \compensate'_i)$. 
This computation of both Algorithm \ref{alg:dps} and backtracking can be achieved in \(\poly\left(\left(\frac{nm}{\epsilon}\right)^{f\left(\frac{1}{\varepsilon}\right)}, n, m, k\right)\) time.
\end{lemma}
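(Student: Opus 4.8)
The plan is to establish the correctness of Algorithm~\ref{alg:dps} by proving a single invariant that pins down the meaning of every cell $\dps\rbr{k', j, j', \virtualdist_i, \compensate_i}$, and then read off both directions of the equivalence, the backtracking guarantee, and the running-time bound as consequences. The invariant to prove is exactly the one described informally when the state is introduced: the first (boolean) component is \emph{true} if and only if there is a monotone partitional signal structure $\rbr{\signals_i,\experiment_i}$ on the values $\{\val_{i,j},\dots,\val_{i,m_i}\}$ with exactly $k'\le k$ parts whose bottom part equals $\{\val_{i,j},\dots,\val_{i,j'}\}$, such that — after placing the still-unassigned mass of $\val_{i,1},\dots,\val_{i,j-1}$ at $0$ — the induced posterior distribution has non-decreasing virtual values on its assigned atoms and its discretized virtual-value pair is $\rbr{\virtualdist_i,\compensate_i}$; the second component equals the maximum, over all such structures, of the \emph{un-discretized} virtual value of the bottom part; and the third component records a predecessor cell witnessing that maximizer. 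I would prove this invariant by induction on $m_i-j$, i.e.\ by growing the partition downward from the top value $\val_{i,m_i}$, which is precisely the order in which Algorithm~\ref{alg:dps} fills the table; the base case $j=m_i$ is the single-part structure $\{\val_{i,m_i}\}$, whose bottom-part virtual value is $\val_{i,m_i}$ because $1-\hat\dist_i(\val_{i,m_i})=0$.

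The structural fact that powers the induction, which I would isolate first, is that \emph{prepending a new bottom part $\{\val_{i,j''},\dots,\val_{i,j-1}\}$ to an already-constructed partition leaves the virtual value of every previously present atom unchanged}: the new atom has posterior mean strictly below $\val_{i,j}$, so inserting it (and correspondingly shrinking the mass at $0$) alters neither $1-\hat\dist_i(\mu)$, nor $\mu^+$, nor $\hat\density_i(\mu)$ for any posterior mean $\mu$ already present. Hence the only new regularity constraint created by a transition is that the new bottom part's virtual value be at most that of the part directly above it — the old bottom part $\{\val_{i,j},\dots,\val_{i,j'}\}$. The new part's virtual value is computable from the transition data (its posterior mean and mass are fixed by $j''$ and $j$, and the $\mu^+$ it uses is the posterior mean of the old bottom part, fixed by $j$ and $j'$), but its comparison target — the un-discretized virtual value of the old bottom part — is not recoverable from the stored \emph{rounded} distribution $\virtualdist_i$; this is exactly why the second component stores that value, and why it must be the maximum over all configurations above that are consistent with $\rbr{\virtualdist_i,\compensate_i}$ and regular, so that the check performed in Algorithm~\ref{alg:dps} neither rejects a realizable extension nor accepts an unrealizable one. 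In the same step I would verify that Step~14's incremental use of $\discretize$ is faithful to \Cref{def:implementable}: each part contributes exactly one virtual-value atom whose mass is the exact sum of the original $\density_i$-values it contains, so rounding each such mass down to a multiple of $\frac{\varepsilon^4}{nm}$ (and each above-threshold contribution down to a multiple of $\frac{\varepsilon}{nm}$) and then aggregating by grid cell yields the same pair as applying $\discretize$ to the completed virtual-value distribution.

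Given the invariant, the two directions of the lemma are immediate: a feasible pair $\rbr{\virtualdist'_i,\compensate'_i}$ lies in $\sets_i$ exactly when some cell $\dps\rbr{k',1,j',\virtualdist'_i,\compensate'_i}$ with $k'\le k$ is true, and specializing the invariant to $j=1$ — where every value is assigned to a part — this holds if and only if there is a monotone partitional $k$-signal structure inducing a regular posterior distribution whose discretized virtual-value pair is $\rbr{\virtualdist'_i,\compensate'_i}$, which is \Cref{def:implementable}. For backtracking, starting from such a true cell and repeatedly following the predecessor pointer in the third component peels off one part at a time and reconstructs a realizing monotone partitional signal structure. For the running time, the number of cells is $O\rbr{k\cdot m^2\cdot|\feasibles|}$, which by \Cref{lem:numberofdis} is $\poly\rbr{\rbr{\tfrac{nm}{\varepsilon}}^{f(1/\varepsilon)},m,k}$; each cell has $O(m)$ outgoing transitions (one per choice of the new part's lower endpoint $j''$), each transition does an $O\rbr{\poly(1/\varepsilon,m)}$-time distribution update, and backtracking visits at most $m$ cells. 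Enumerating all feasible pairs and keeping those certified by a true cell then assembles $\sets_i$ for every $i\in[n]$ within $\poly\rbr{\rbr{\tfrac{nm}{\varepsilon}}^{f(1/\varepsilon)},n,m,k}$ time.

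The step I expect to be the main obstacle is the argument in the second paragraph that the compressed state — a discretized distribution plus one scalar recording the best achievable un-discretized bottom-part virtual value — is lossless. This rests on two points that must be made precise: first, that regularity couples only \emph{consecutive} parts of the partition (via the preliminary claim), so a one-number summary of the bottom part is all the DP needs to carry across a transition; second, that the virtual values entering the regularity check are a deterministic function of the partition, so that even though the DP stores rounded distributions it is still checking the correct un-rounded inequality, and taking the maximum over consistent configurations is what guarantees Algorithm~\ref{alg:dps} computes exactly the implementable set rather than a strict sub- or super-set. A secondary, more bookkeeping-heavy point is confirming that the per-part (rather than per-grid-cell) rounding done inside the DP reproduces the object defined by $\discretize$ on the final distribution; this is where the paper's deliberate over-discretization of the probability space is used, and one must check that Algorithm~\ref{alg:dps} implements it faithfully.
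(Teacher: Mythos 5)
Your proposal is correct and takes essentially the same route as the paper's own proof: the paper likewise argues that each DP transition corresponds to appending a new bottom part at a partition boundary, that the stored pair is exactly the discretization of the induced virtual-value distribution, and that backtracking the recorded predecessor states recovers a realizing monotone partitional signal structure within the claimed time bound. If anything, your write-up is more explicit than the paper's (the downward induction invariant, the observation that prepending a bottom part leaves the upper atoms' virtual values unchanged so regularity reduces to a consecutive-parts check, and the justification for storing the maximum un-discretized bottom-part virtual value are all left implicit there), so it fills in rather than diverges from the published argument.
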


\paragraph{Virtual Welfare Maximization}
Given the set of implementable pairs $\sets_i$ for each buyer $i$, 
next we show how to find a profile of implementable pairs
$\lbr{\left(\virtualdist_i,\compensate_i\right)}_{i\in [n]}$
such that \(\rbr{\virtualdist_i,\compensate_i} \in \sets_i\) for all $i$ and the following term is maximized
\begin{align*}
\virtualwel(\virtualdists,\compensates) \triangleq \expect[\vals\sim \virtualdists]{\max_{i\in[n]}\val_i}+\sum_{i=1}^n \compensate_i.
\end{align*} 
By slightly overloading the notation, we also let 
$\virtualwel(\virtualdist,\compensate)\triangleq \expect[\val\sim \virtualdist]{\val}+\compensate$ when the input is a distribution and compensation pair for a single buyer.

We show that the optimal profile of implementable pairs can be computed by dynamic programming with an additive loss of $O(\varepsilon)$. 
At a high level, the DP algorithm progresses in $n$ stages. During the $i$-th stage, the DP memorizes all potential rounded distributions of the maximum virtual value among the first $i$ buyers. How do we proceed to stage $i+1$ using information of stage $i$? Suppose $\virtualdist_1 \in \feasibles$ represents a potential  distribution of the maximum virtual value among the first $i$ buyers, and $\virtualdist_2 \in \feasibles$ is a potential distribution of virtual value for buyer $i+1$.
Consider $\bar{\virtualdist} \triangleq \max\{\virtualdist_1, \virtualdist_2\}$ be the distribution of the maximum between two independent random variables $v_1$ and $v_2$, where $v_1\sim\virtualdist_1$ and $v_2\sim \virtualdist_2$. 
Next, we apply the procedure of $\discretize_Q$
to round down the probability masses of $\bar{\virtualdist}$ to the nearest multiples of $\frac{1}{nm}\varepsilon^4$. After the rounding procedure, a potential distribution is computed and stored for stage $i+1$. We show the rounding procedure only introduces negligible loss in revenue.
We don't need to discretize the virtual value space as $\bar{\virtualdist}$ is already supported on $\grid$.

Formally, we introduce the dynamic programming states $\dpf\rbr{i,\virtualdist,\compensate}$ as a Boolean variable for all buyer~$i$ and all feasible pairs $(\virtualdist, \compensate)\in\feasibles$, 
where the variable is True if there exists \( \lbr{\rbr{\virtualdist_1,\compensate_1},\cdots,\rbr{\virtualdist_i, \compensate_i}}\) such that
\begin{enumerate}
\item $\rbr{\virtualdist_j,\compensate_j} \in \sets_j$ for all $j \leq i$;
\item $\sum_{j=1}^i \compensate_j = \compensate$;
\item \sloppy if we let $\hat{\virtualdist}_1,\dots, \hat{\virtualdist}_i$ be a sequence of distributions such that $\hat{\virtualdist}_1 = \virtualdist_1$ and 
$\hat{\virtualdist}_j = \discretize_Q(\max(\hat{\virtualdist}_{j-1},\virtualdist_j))$ for any $2\leq j\leq i$, 
$\hat{\virtualdist}_i$ is equal to $\virtualdist$.
\end{enumerate}
Algorithm \ref{alg:dpf} shows how to compute $\dpf\rbr{i,\virtualdist,\compensate}$.

\begin{algorithm}[t]
\caption{Computation of Implementable Pairs}
\label{alg:dps}
\KwIn{Distribution \(\dist_i\) with support $\valspace_i$ and $\epsilon > 0$.}
\KwOut{Set of implementable pairs $\sets_i$.}

{Initialize every \(\dps\rbr{k',j, j', \virtualdist, \compensate}\) as $\left(\textsf{False}, {-\infty}, \cdot\right)$ for all $k'\leq k, j,j' \leq m_i+1, (\virtualdist, \compensate)\in\feasibles$. }

{\(\dps\rbr{0,m_i + 1, m_i + 1, \virtualdist_0, 0} \gets (\textsf{True}, \max\lbr{\val_i\in\valspace_i},\cdot)\) where $\virtualdist_0$ is the distribution that is deterministically 0.}

\For{$j \gets m_i$ \KwTo $1$ }{
    \For{$\rbr{k', j_l, j_r, \virtualdist,\compensate} \text{ s.t. } \dps_1\rbr{k',j_l, j_r, \virtualdist,\compensate} = \textsf{True} \text{ and } k' < k $}{
        \(\textsf{Prob} \gets \sum_{l = j}^{j_l - 1} \density_i(\val_{i, l});\,
        \textsf{ExpectedValue} \gets \rbr{\sum_{l = j}^{j_l-1} v_{i,l}\cdot \density_i(\val_{i, l})}/{\textsf{Prob}} \).\\
         $\textsf{VirtualValue} \gets \textsf{ExpectedValue} - \left(\frac{\sum_{l=j_l}^{j_r}  \val_{i,l}\cdot \density_i(\val_{i, l}) }{\sum_{l=j_l}^{j_r}   \density_i(\val_{i, l})}-\textsf{ExpectedValue} \right)
         \cdot \frac{\sum_{l=j_l}^{m_i}   \density_i(\val_{i, l})}{\textsf{Prob}}
         \cdot \indicate{j_r\neq m_i+1}$.\\
        \(\virtualdist'\gets \virtualdist\);
        \(\compensate' \gets \compensate\).\\
        \If{\(\textsf{VirtualValue} > \varepsilon^{-1}\)}{
            \(\compensate' \gets \compensate' + \textsf{VirtualValue}\cdot\textsf{Prob}\)
        } \Else{
            \text{Move a probability mass of } \textsf{Prob} \text{from} 0 \text{to} $\max\rbr{\textsf{VirtualValue}, 0}$ in \(\virtualdist'\).
        }
        {\(\rbr{\virtualdist',\compensate'} \gets \discretize\rbr{\virtualdist', \compensate'}\).\label{step:discretize_prob_each_value}\\}
        \If{$\textsf{VirtualValue} < \dps_2\rbr{k',j_l, j_r, \virtualdist,\compensate}$}
        {\(\dps\rbr{k' + 1, j, j_l - 1, \virtualdist', \compensate'} \gets \rbr{\textsf{True},\max\lbr{\textsf{VirtualValue},\dps_2\rbr{k' + 1, j, j_l - 1, \virtualdist', \compensate'}}, (k',j_l,j_r,G,c)}\).}
    }
}

\For{$\rbr{k',j,\virtualdist_i, \compensate_i} \text{ s.t. } \dps_1\rbr{k',1, j, \virtualdist_{i},\compensate_{i}} = \textsf{True}$\,}{
        Add \(\rbr{\virtualdist_i, \compensate_i}\) to \(\sets_i\).
    }
\end{algorithm}

\begin{algorithm}[H]
\caption{Computation of the Virtual Welfare}
\label{alg:dpf}
\KwIn{Sets of Implementable Pairs \(\lbr{\sets_i}_{i\in [n]}\).} 
\KwOut{The matrix 
\(\dpf\rbr{i, \virtualdist, \compensate}\).}
{Initialize \(\dpf\rbr{i, \virtualdist, \compensate}\) to be \textsf{False} for all $i\in[n]$ and $(\virtualdist, \compensate)\in\feasibles$. }

{\(\dpf\rbr{1, \virtualdist, \compensate} \gets \textsf{True}\)} 
for all $(\virtualdist, \compensate) \in \sets_1$. 

\For{$i \gets 2$ \KwTo $n$ }{
    \For{$\rbr{\virtualdist_{i-1},\compensate_{i-1}} \text{ s.t. } \dpf\rbr{i-1,\virtualdist_{i-1},\compensate_{i-1}} = \textsf{True}$ \,}{
        \For{\(\rbr{\virtualdist_{i}, \compensate_i} \in \sets_i \)} {
        $\dpf\rbr{i, \discretize_Q(\max\rbr{\virtualdist_{i-1}, \virtualdist_i}), \compensate_{i-1}+\compensate_{i}} \gets \textsf{True}$. \label{step:discrete_prob}
        }
    }
}

\end{algorithm}

With \(\dpf\rbr{n,\virtualdist,\compensate}\), the optimal signal structure can be found by identifying the feasible pair $(\virtualdist, \compensate)$ such that 
\(\dpf\rbr{n,\virtualdist,\compensate} = \textsf{True}\) and 
$\virtualwel(\virtualdist, \compensate)$ is maximized.
The full procedure of the PTAS is provided in Algorithm \ref{alg:ptas}.

\begin{algorithm}[H]
\caption{PTAS}
\label{alg:ptas}
\KwIn{Parameter $\epsilon > 0$ and value distributions $\dist_1,\dots,\dist_n$. }

\For{$i \gets 1$ \KwTo $n$ }{
    Compute $\sets_i$ for all buyer $i$ via Algorithm~\ref{alg:dps}. 
}

Compute $\dpf$ by running Algorithm~\ref{alg:dpf} with inputs \(\left\{\sets_i\right\}_{i\in [n]}.\)\\

Let \(\virtualdist, \compensate\) be the pair such that \(\dpf\rbr{n,\virtualdist,\compensate} = \textsf{True}\) and $\virtualwel(\virtualdist, \compensate)$ is maximized,
backtrack \(\dpf\rbr{n,\virtualdist,\compensate}\) to find the partitional signal structure $(\signals_i,\experiment_i)$ for each buyer $i$.

Compute the revenue optimal mechanism given signal structures $\lbr{(\signals_i,\experiment_i)}_{i\in[n]}$ according to \citet{myerson1981optimal}.
\end{algorithm}

\subsubsection{Proof of Revenue Loss}
\label{subsubsec:ptasdiscre}

We next proceed to show that such scheme introduces at most $O(\varepsilon)$ loss. Our proof consists of two parts. In the first part, we show that the discretization error from replacing a virtual value distribution with its corresponding discretized pair using Algorithm \ref{alg:dps} is at most $O\rbr{\varepsilon}$. 
In the second part, we show that the loss is at most $O\rbr{\frac{\epsilon}{n}}$ if we only round down the probability space. 
The main reason we analyze this discretization error separately is because we need to repeatedly apply the discretization in probability space in Step \ref{step:discrete_prob} of Algorithm \ref{alg:dpf}, while the discretization over virtual value space is only applied once for obtaining the set of implementable pairs.

We first show that the error from further discretization in Algorithm~\ref{alg:dps} is small. The proof of \Cref{lem:Algorithm2CorrectNess} can be found in \Cref{apx:revenue loss}.
\begin{lemma}\label{lem:discrete_algo2_loss}
Given any distribution profile $\virtualdists$ with support size $|\virtualdist_i|\leq m$ for all $i$ and
$\virtualwel(\virtualdists,0) \leq 1$, 
let $(\virtualdists',\compensates')=\discretize(\virtualdists)$ be the discretized pair of virtual value distributions and compensation terms. It holds that
\begin{align*}
\abs{\virtualwel(\virtualdists,0) - \virtualwel(\virtualdists',\compensates')} \leq 5\epsilon.
\end{align*}
\end{lemma}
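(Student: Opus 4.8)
The plan is to bound the error term by term, following the structure of the discretization scheme. Recall that $\virtualwel(\virtualdists,0) = \expect[\vals\sim\virtualdists]{\max_i v_i}$ and $\virtualwel(\virtualdists',\compensates') = \expect[\vals\sim\virtualdists']{\max_i v_i} + \sum_i \compensate_i'$. The discretization consists of three operations per buyer: (a) truncating virtual values above $\epsilon^{-1}$ and moving their mass into the compensation term $\compensate_i$; (b) rounding surviving virtual values down to multiples of $\epsilon$ (the $\discretize_V$ step); (c) rounding probability masses down to multiples of $\frac{\epsilon^4}{nm}$ and reallocating leftover mass to $0$ (the $\discretize_Q$ step), together with the analogous rounding of $v_i \cdot \virtualdensity_i(v_i)$ to multiples of $\frac{\epsilon}{nm}$ inside the compensation term. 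I would introduce intermediate distribution profiles interpolating between $\virtualdists$ and $\virtualdists'$ and bound the change in $\virtualwel$ at each stage, then sum the (at most five) contributions to get $5\epsilon$.

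\textbf{Step 1 (truncation / compensation).} First I would compare $\virtualwel(\virtualdists,0)$ with a quantity where each buyer's virtual value distribution has all mass above $\epsilon^{-1}$ removed, and a compensation term $\compensate_i = \expect[v_i\sim\virtualdist_i]{v_i \cdot \indicate{v_i > \epsilon^{-1}}}$ is added per buyer. The key estimate is that replacing $\max_i v_i$ by $\max_i (v_i \wedge \text{``below threshold''})$ and then adding back $\sum_i \compensate_i$ only \emph{overcounts}: on any realization where at least one buyer is above threshold, we have removed one large value from the max but added back \emph{all} large values across buyers. Since each $\compensate_i \le 1$ but more importantly $\sum_i \compensate_i = \sum_i \expect{v_i \indicate{v_i>\epsilon^{-1}}}$, and this whole sum is at most $\virtualwel(\virtualdists,0)\le 1$, one must argue the error here is small. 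The cleaner route: $\sum_i \compensate_i \le \expect{\sum_i v_i \indicate{v_i > \epsilon^{-1}}}$, and I'd bound the truncation loss against this sum; because each $v_i > \epsilon^{-1}$ contributes at least $\epsilon^{-1}$ to the relevant event's probability, the total probability mass of $\{v_i > \epsilon^{-1}\}$ events is at most $\epsilon \sum_i \compensate_i \le \epsilon$. This bounds the net effect of operation (a) by $O(\epsilon)$ — I expect roughly $2\epsilon$ from this part.

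\textbf{Step 2 (value-grid rounding).} Rounding each surviving virtual value down by at most $\epsilon$ decreases $\max_i v_i$ pointwise by at most $\epsilon$, hence decreases $\expect{\max_i v_i}$ by at most $\epsilon$. This contributes $\epsilon$.

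\textbf{Step 3 (probability rounding).} The $\discretize_Q$ step moves, for each buyer, a total mass of at most $m \cdot \frac{\epsilon^4}{nm} = \frac{\epsilon^4}{n}$ down to $0$; summed over buyers this is $\epsilon^4$ mass displaced, and each displaced unit of mass changes $\max_i v_i$ by at most $\epsilon^{-1}$ (the truncation bound from Step 1), giving a crude bound of $\epsilon^4 \cdot \epsilon^{-1} = \epsilon^3 \le \epsilon$. For the compensation term, rounding $v_i\virtualdensity_i(v_i)$ down to multiples of $\frac{\epsilon}{nm}$ loses at most $\frac{\epsilon}{nm}$ per value, at most $\frac{\epsilon}{n}$ per buyer, at most $\epsilon$ total. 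Adding Steps 1–3 gives at most $2\epsilon + \epsilon + \epsilon + \epsilon = 5\epsilon$. I would keep the constants loose so that the stated $5\epsilon$ comes out.

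\textbf{Main obstacle.} The delicate point is Step 1: the compensation term is defined per buyer but the welfare objective is a \emph{maximum} across buyers, so naively adding $\sum_i \compensate_i$ back could inflate the objective by much more than $\epsilon$ if many buyers simultaneously have large virtual values. The resolution is that $\sum_i \compensate_i$ is itself small — bounded using $\virtualwel(\virtualdists,0)\le 1$ — which in turn forces the probability of the "at least one buyer above $\epsilon^{-1}$" event to be $O(\epsilon)$ via Markov's inequality applied buyer-by-buyer. Getting this interplay right, and making sure the over- and under-counting in the max-versus-sum comparison is controlled by this same $O(\epsilon)$ probability (times the truncation level $\epsilon^{-1}$, which cancels), is the crux; the remaining steps are routine grid-rounding estimates.
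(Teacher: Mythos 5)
Your plan is correct and follows essentially the same route as the paper's proof: the truncation/compensation step is exactly the paper's auxiliary lemma (max-versus-sum comparison plus Markov's inequality, which bounds the probability that any buyer's virtual value exceeds $\epsilon^{-1}$ by $O(\epsilon)$ and yields the $2\epsilon$ loss), and the value-grid, probability-mass, and compensation roundings are bounded by the same coupling/displaced-mass estimates (at most $\epsilon$, $\epsilon^3$, and $\epsilon$ respectively), summing comfortably below $5\epsilon$. The only cosmetic difference is that the paper treats the value and probability roundings in a single coupling with a union bound (getting $2\epsilon$), whereas you separate them; both are valid, and the overcounting concern you flag is resolved exactly as in the paper, by multiplying the $O(\epsilon)$ above-threshold probability against $\sum_i \compensate_i = O(1)$ rather than against $\epsilon^{-1}$.
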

The proof of \cref{lem:discrete_algo2_loss} proceeds in three steps. First, we decompose the virtual value distribution to store the realization of high virtual values into a separate compensation term. 
Essentially, the decomposition overestimates the true virtual welfare since for the high virtual values above the threshold of $\epsilon^{-1}$, the contribution to the virtual welfare is computed as the sum instead of the max. 
However, by Markov's inequality, the probability there exists a high virtual value among $n$ bidders is at most $\epsilon$ since the virtual welfare is at most $1$. 
Therefore, in such small probability events, the difference between the sum and the max is sufficiently small.
Secondly, to bound the error for discretization in the virtual value space, it is sufficient to couple the realizations over virtual value profiles given distributions before and after the discretization. 
Using this coupling argument, we show that the error in the virtual welfare is upper bounded by $O(\epsilon)$ without the additional multiplicative overhead of $n$. This is crucial as it limits the number of possible distributions to be polynomial in $n$ for every fixed $\varepsilon$.
Finally, the discretization in probability space is applied separately for each virtual value in the support of $\virtualdist_i$. 
The discretization error is small since the discretization grid is $\frac{\epsilon^4}{nm}$ and the virtual value distribution $\virtualdist$ has support size~$m$.
Note that the requirements on support size are naturally satisfied in our joint design problem since we are focusing on monotone partitional signal structures, which generates at most $m$ signals. 

Next we show that the error from Algorithm~\ref{alg:dpf} is small.
In particular, we show that the error from discretization over two distributions and then taking the max is at most $O(\frac{\epsilon}{n})$. See~\Cref{apx:revenue loss} for the proof of \Cref{lem:discrete_quant_loss}.

\begin{lemma}\label{lem:discrete_quant_loss}
Given two distributions $\virtualdist_1,\virtualdist_2$ with support in $\grid$, let $\virtualdist_i'=\discretize_Q(\virtualdist_i)$ for $i\in\{1,2\}$ be the distributions discretized in probability space. 
It holds that
\[\expect[\val\sim \max\rbr{\virtualdist_1,\virtualdist_2}]{\val} - \frac{2\epsilon}{n} \leq \expect[\val'\sim \max\rbr{\virtualdist_1',\virtualdist_2'}]{\val'}\leq \expect[\val\sim \max\rbr{\virtualdist_1,\virtualdist_2}]{\val}\]
\end{lemma}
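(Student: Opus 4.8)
The plan is to analyze the two inequalities separately. For the upper bound, $\discretize_Q$ only rounds probability masses downward, so the operation is naturally monotone in a stochastic-dominance sense: rounding down $\virtualdensity_i(v)$ and reallocating the slack to $0$ yields a distribution $\virtualdist_i'$ that is first-order stochastically dominated by $\virtualdist_i$. Since taking the coordinatewise maximum of independent random variables preserves first-order stochastic dominance (if $X_1 \preceq X_1'$ and $X_2 \preceq X_2'$ then $\max\{X_1,X_2\} \preceq \max\{X_1',X_2'\}$), we get $\max\{\virtualdist_1',\virtualdist_2'\} \preceq \max\{\virtualdist_1,\virtualdist_2\}$, and hence the expectation can only decrease. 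This direction should be essentially immediate once the stochastic-dominance observation is set up.

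For the lower bound, I would use an explicit coupling. Write $F_i$, $F_i'$ for the CDFs of $\virtualdist_i$ and $\virtualdist_i'$. Since $\discretize_Q$ rounds each mass $\virtualdensity_i(v)$ down to the nearest multiple of $\frac{\epsilon^4}{nm}$, the total mass moved to $0$ is at most $|\virtualdist_i|\cdot\frac{\epsilon^4}{nm} \le m \cdot \frac{\epsilon^4}{nm} = \frac{\epsilon^4}{n}$. So $\virtualdist_i'$ agrees with $\virtualdist_i$ except that a subset of probability mass of total weight at most $\frac{\epsilon^4}{n}$ has been displaced down to $0$. Couple $v_i \sim \virtualdist_i$ with $v_i' \sim \virtualdist_i'$ so that $v_i' = v_i$ except on an event $E_i$ of probability at most $\frac{\epsilon^4}{n}$, on which $v_i' = 0 \le v_i$; do this independently for $i=1,2$. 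Then $\max\{v_1',v_2'\} = \max\{v_1,v_2\}$ off the event $E_1 \cup E_2$, and on $E_1 \cup E_2$ the decrease is at most $\max\{v_1,v_2\}$ itself. Therefore
\begin{align*}
\expect[\max\{\virtualdist_1,\virtualdist_2\}]{v} - \expect[\max\{\virtualdist_1',\virtualdist_2'\}]{v'} \le \expect{\max\{v_1,v_2\}\cdot\indicate{E_1\cup E_2}}.
\end{align*}
Now $\max\{v_1,v_2\} \le \epsilon^{-1}$ always, since the support of $\virtualdist_i$ lies in $\grid$ and every element of $\grid$ is at most $\epsilon^{-1}$; and $\prob{E_1 \cup E_2} \le \frac{2\epsilon^4}{n}$. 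Hence the right-hand side is at most $\epsilon^{-1}\cdot\frac{2\epsilon^4}{n} = \frac{2\epsilon^3}{n} \le \frac{2\epsilon}{n}$, which gives the claimed bound (in fact with room to spare).

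The main subtlety — and the only place one must be a little careful — is the construction of the coupling so that the displaced mass really does go to $0$ and nowhere else, matching the definition of $\discretize_Q$ in the discretization scheme: one should couple value by value, keeping $v_i' = v_i$ on the retained portion $\frac{z_v \epsilon^4}{nm}$ of the mass at each $v$ and setting $v_i' = 0$ on the residual, so that $E_i$ is exactly the union over $v$ of these residual pieces and has total probability $\sum_v (\virtualdensity_i(v) - \frac{z_v\epsilon^4}{nm}) \le m\cdot\frac{\epsilon^4}{nm}$. Everything else is a short computation, and the boundedness of the support by $\epsilon^{-1}$ is what lets us avoid any dependence on the magnitude of the virtual values.
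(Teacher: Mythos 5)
Your overall strategy is sound and is a genuinely different (and arguably cleaner) route than the paper's: the paper proves the lower bound by pointwise PMF manipulation --- it shows $\prob{\max(\val_1',\val_2')=k\varepsilon}\geq \prob{\max(\val_1,\val_2)=k\varepsilon}-\tfrac{2}{nm}\varepsilon^4$ for every grid point and then sums over the $\varepsilon^{-2}$ grid points, each worth at most $\varepsilon^{-1}$, to get a loss of $\tfrac{2\epsilon}{nm}\leq\tfrac{2\epsilon}{n}$ --- whereas you build an explicit per-value coupling in which each $\virtualdist_i'$ agrees with $\virtualdist_i$ off a small bad event $E_i$ and drops to $0$ on it, and then bound the loss by $\varepsilon^{-1}\cdot\prob{E_1\cup E_2}$. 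Your upper-bound argument via first-order stochastic dominance matches the paper's (the paper simply asserts it).

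There is, however, one concrete slip in the lower bound: you bound $\prob{E_i}\leq |\virtualdist_i|\cdot\tfrac{\varepsilon^4}{nm}\leq m\cdot\tfrac{\varepsilon^4}{nm}$, but the lemma's hypothesis is only that the support lies in $\grid$, not that it has at most $m$ points. This matters in exactly the place the lemma is used: in Algorithm~\ref{alg:dpf}, $\virtualdist_1$ is the (rounded) distribution of the running maximum over the first $i-1$ buyers, whose support can fill the entire grid, i.e.\ have up to $|\grid|\approx\varepsilon^{-2}$ points, which need not be at most $m$ (the support-size-$m$ assumption appears in \cref{lem:discrete_algo2_loss}, where it is justified by monotone partitional signals, but is deliberately absent here). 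The fix is immediate and your slack absorbs it: bounding the displaced mass by $|\grid|\cdot\tfrac{\varepsilon^4}{nm}=O\rbr{\tfrac{\varepsilon^2}{nm}}$ per distribution gives $\prob{E_1\cup E_2}=O\rbr{\tfrac{\varepsilon^2}{nm}}$ and hence a loss of $O\rbr{\tfrac{\varepsilon}{nm}}\leq\tfrac{2\epsilon}{n}$, recovering the claimed bound (and matching the paper's constant). With that correction your proof is complete.
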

This is because the discretization grid in probability space is {$\frac{1}{nm}\epsilon^4$}, while the maximum value in the support of $\grid$ is at most $\epsilon^{-1}$ and $|\grid| = O(\epsilon^{-2})$. Therefore, it is easy to verify that the expected virtual welfare loss from rounding down the probabilities can be bounded by $O\left(\frac{\epsilon}{n}\right)$.

\sloppy
\begin{proof}[Proof of Theorem~\ref{thm:ptas}]
First, it is easy to verify that the running time of  Algorithm~\ref{alg:ptas} is $\poly\rbr{\left(\frac{nm}{\epsilon}\right)^{f\rbr{\frac{1}{\varepsilon}}},n,m,k}$ since the cardinality of the feasible pairs is $O\rbr{\rbr{\frac{nm}{\epsilon}}^{f\left(\frac{1}{\varepsilon}\right)}}$ by \cref{lem:numberofdis}.

To simplify the exposition, we first introduce the following notation. 
For any profile of signal structures $\lbr{(\signals_i,\experiment_i)}_{i\in[n]}$, 
let $\rev\rbr{\lbr{(\signals_i,\experiment_i)}_{i\in[n]}}$ be the optimal revenue from mechanisms in which the signal structure coincides with $\lbr{(\signals_i,\experiment_i)}_{i\in[n]}$. 
Let $\virtualwel\rbr{\lbr{(\signals_i,\experiment_i)}_{i\in[n]}}$ be the virtual welfare given posterior value distribution generated by $\lbr{(\signals_i,\experiment_i)}_{i\in[n]}$. 
For any value distribution profile $\dists$, let $\lbr{(\signals^*_i,\experiment^*_i)}_{i\in[n]}$ be the profile of signal structures that maximizes the expected revenue. 
Since the induced posterior value distribution is regular \citep{bergemann2007information}, \cref{lem:rev_equiv} implies that 
\begin{align*}
\rev\rbr{\lbr{(\signals^*_i,\experiment^*_i)}_{i\in[n]}}
= \virtualwel\rbr{\lbr{(\signals^*_i,\experiment^*_i)}_{i\in[n]}}.
\end{align*}
Letting $(\virtualdists^*,\compensates^*)$ be the discretized distribution and compensation term pairs induced by the optimal signal structure $\lbr{(\signals^*_i,\experiment^*_i)}_{i\in[n]}$, 
the correctness of Algorithm \ref{alg:dps} illustrated in \cref{lem:Algorithm2CorrectNess} implies that $(\virtualdist^*_i,\compensate^*_i)\in\sets_i$ for all $i\in[n]$.
Moreover, \cref{lem:discrete_algo2_loss} implies that 
\begin{align*}
\abs{\virtualwel\rbr{\lbr{(\signals^*_i,\experiment^*_i)}_{i\in[n]}} - \virtualwel\rbr{\virtualdists^*,\compensates^*}} \leq 5\epsilon.
\end{align*}
since the optimal signal structure is monotone partitional and has cardinality at most $m$ for each buyer. 
Let $(\hat{\virtualdists},\hat{\compensates})$ be the discretized distribution and compensation term pairs computed by Algorithm \ref{alg:ptas}, and let
$\lbr{(\hat{\signals}_i,\hat{\experiment}_i)}_{i\in[n]}$ be the corresponding monotone paritional signal structure guaranteed by \Cref{lem:Algorithm2CorrectNess}. Similarly, we have 
\begin{align*}
\abs{\virtualwel\rbr{\lbr{(\hat{\signals}_i,\hat{\experiment}_i)}_{i\in[n]}} - \virtualwel\rbr{\hat{\virtualdists},\hat{\compensates}}} \leq 5\epsilon.
\end{align*}
Moreover, Algorithm \ref{alg:dps} ensures that the associated signal structures induce regular distributions, and hence 
\begin{align*}
\rev\rbr{\lbr{(\hat{\signals}_i,\hat{\experiment}_i)}_{i\in[n]}}
= \virtualwel\rbr{\lbr{(\hat{\signals}_i,\hat{\experiment}_i)}_{i\in[n]}}.
\end{align*}
Let $(\bar{\virtualdists},\bar{\compensates})$ be the discretized distribution and compensation term pairs that maximizes the virtual welfare, i.e., 
$(\bar{\virtualdists},\bar{\compensates}) = \argmax_{(\virtualdists,\compensates)\in\{\sets_i\}_{i\in[n]}}\virtualwel(\virtualdists,\compensates)$.
By definition, we have $\virtualwel(\bar{\virtualdists},\bar{\compensates}) \geq \virtualwel(\virtualdists^*,\compensates^*)$.
Note that in Algorithm \ref{alg:dpf}, we repeat the discretization in probability space Step \ref{step:discrete_prob} for $n$ at most times, and \cref{lem:discrete_quant_loss} shows that the error induced by each step is at most $\frac{2\epsilon}{n}$.
Therefore, the aggregated discretization error is at most $2\epsilon$, which further implies that 
\begin{align*}
\virtualwel\rbr{\hat{\virtualdists},\hat{\compensates}}
\geq \virtualwel(\bar{\virtualdists},\bar{\compensates}) -2\epsilon
\geq \virtualwel(\virtualdists^*,\compensates^*) -2\epsilon
\end{align*}
where the last inequality holds since $(\virtualdists^*,\compensates^*)$ is an implementable discretization pair. 
By combining all the inequalities above, we have 
\begin{align*}
\rev\rbr{\lbr{(\hat{\signals}_i,\hat{\experiment}_i)}_{i\in[n]}}
\geq \rev\rbr{\lbr{(\signals^*_i,\experiment^*_i)}_{i\in[n]}} - 12\epsilon
= \optrev - 12\epsilon.
\end{align*}

Finally, since \cref{cor:rev_approx_wel} implies that the optimal revenue is at least $O(1)$, the loss in revenue is also multiplicative in $O(\epsilon)$.
\end{proof}
    
\section{Conclusions and Discussions}
\label{sec:conclude}
In this paper, we show that the optimal joint design problem is NP-hard, and we provide a PTAS for computing the optimal solution. Our paper initiates a new research direction for understanding the computational complexity for jointly designing the signal and incentive structures. Numerous immediate questions emerge from our study, which we believe to be very intriguing. For instance, while our paper has concentrated on the single-item auction setting, it would be interesting to explore settings where the seller has multiple items for sale. Another worthwhile direction is investigating settings where the utility functions of the agents are interdependent. Finally, even within the model studied in this paper, there are compelling open questions, such as characterizing the maximum cardinalities of the optimal signal structures or showing whether FPTAS exists for computing the optimal solution.

\bibliographystyle{apalike}
\bibliography{ref}

\newpage
\appendix
\section{NP-hardness}
\label{apx:nph}

Before presenting the missing proofs in \Cref{sec:nph}, we first show that both the \probkname and the \probname problem is solvable in polynomial time when \(\max|V_i| = 2\).

\begin{theorem}
\label{thm:poly_support_2}
    Both the \probkname and the \probname problem can be solved in polynomial time when \(\max|V_i| = 2\).
\end{theorem}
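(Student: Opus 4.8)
The plan is to exploit the fact that a support of size at most two admits only two monotone partitions, and then to prove that among the resulting $2^n$ combinations an optimum is found by inspecting only $n+1$ of them. First I would record two simple facts. A set with at most two elements has exactly two monotone partitions: the full partition (buyer $i$ learns his value exactly) and the trivial partition (buyer $i$ pools both values, so his posterior value is deterministically the prior mean $\mu_i := \expect[\val_i\sim\dist_i]{\val_i}$). By \Cref{lem:partition} an optimal signal structure is monotone partitional, so it is fully described by the set $R\subseteq[n]$ of buyers who "reveal", the rest being "poolers". Since each such signal structure uses at most two signals and $k\ge 2$, the cardinality constraint in the \probkname problem is never binding; hence $\optrev_k=\optrev$ and it suffices to solve the \probname problem, with the same algorithm serving both. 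Second, every induced posterior value distribution has support at most two and is therefore regular: a point mass is trivially regular, and for a two-point distribution supported on $a<b$ the low type has virtual value $a-\tfrac{(1-\dist(a))(b-a)}{\density(a)}<a\le b=\virtual(b)$. Consequently \Cref{lem:rev_equiv} applies to every candidate, so for a reveal set $R$ with complement $P=[n]\setminus R$ the optimal revenue is $\rev(R)=\expect{\max\bigl(\max_{i\in P}\mu_i,\ \max_{i\in R}\virtual_i(\val_i)\bigr)}$, where the $\virtual_i(\val_i)$, $i\in R$, are independent random virtual values and the $\mu_i$, $i\in P$, are constants.

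The key structural step is to show there is an optimal $R$ with $|P|\le 1$. Suppose $|P|\ge 2$, let $j\in\argmin_{i\in P}\mu_i$, and set $\nu:=\max_{i\in P\setminus\{j\}}\mu_i\ge\mu_j$, so that $\max_{i\in P}\mu_i=\nu$. Moving $j$ from $P$ into $R$, every realization satisfies
\[
\max\bigl(\nu,\ \virtual_j(\val_j),\ \textstyle\max_{i\in R}\virtual_i(\val_i)\bigr)\ \ge\ \max\bigl(\nu,\ \textstyle\max_{i\in R}\virtual_i(\val_i)\bigr)\ =\ \max\bigl(\textstyle\max_{i\in P}\mu_i,\ \max_{i\in R}\virtual_i(\val_i)\bigr),
\]
since adding a term to a maximum only increases it and $\mu_j\le\nu$. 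Taking expectations gives $\rev(R\cup\{j\})\ge\rev(R)$, so one may iteratively move poolers into $R$ until at most one remains.

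Therefore the algorithm enumerates the $n+1$ candidate reveal sets $R=[n]$ and $R=[n]\setminus\{i^*\}$ for $i^*\in[n]$; for each candidate it forms the explicit profile of induced value distributions (a point mass at $\mu_{i^*}$ for the pooler, $\dist_i$ for the revealers) and computes the revenue-optimal mechanism for that fixed signal structure via Myerson's optimal auction \citep{myerson1981optimal}, which runs in $\poly(n,m)$ time and is exactly Phase~2 of the paper's framework. Outputting the best of these together with its associated signal structure yields an optimal mechanism for both the \probname and \probkname problems in total time $\poly(n,m)$.

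The main obstacle is the structural lemma bounding the number of poolers. The delicate point is that the naive "more information never hurts revenue" intuition is \emph{false} in this model (indeed that failure is the source of the paper's hardness result), so one cannot appeal to monotonicity of revenue in the information structure; instead one must use the precise virtual-welfare expression from \Cref{lem:rev_equiv} and the observation that a non-maximal pooler's constant virtual value $\mu_j$ is already dominated by $\nu$, which is what lets the exchange argument go through. All remaining steps—enumerating the monotone partitions of a size-$\le 2$ support, verifying regularity, and invoking Myerson's algorithm as a black box for each fixed signal structure—are routine.
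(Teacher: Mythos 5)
Your proposal is correct and follows essentially the same route as the paper: use \Cref{lem:partition} to reduce each buyer to a reveal/pool choice, note that all induced ($\le 2$-point) posteriors are regular so \Cref{lem:rev_equiv} gives the expected-max-virtual-value formula, argue that any pooler whose pooled value is dominated may as well reveal (so at most one pooler survives, the one with the largest mean), and enumerate the $n+1$ candidates with Myerson's auction as the back end. Your pointwise exchange argument is just a cleaner formalization of the paper's "dominated signals contribute zero revenue" step, so there is no substantive difference.
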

\begin{proof}
    Without loss of generality, we can assume that \(k = 2\) for the \probkname problem since the support size is at most \(2\). Thus in this case, the solutions to the \probname problem and the \probkname are the same. In the following, we present the algorithm to compute the optimal solution.

\Cref{lem:partition} establishes that the optimal information structure for each buyer is required to be a monotone partitional information structure. Given that \(|V_i| \leq 2\), the range of possible information structures is limited to two scenarios: either both values in \(V_i\) are associated with the same signal, or each value is linked to a distinct signal.

Since we have at most \(2\) signals, it follows that the induced distributions over virtual values are automatically regular. Therefore, given the information structure for each buyer, the revenue of the optimal mechanism equals to the expected of the maximum virtual welfare:
\[\expect[\vals\sim \dists ]{\max_{i\in [n]} \virtual_i\rbr{\experiment_i\rbr{\val_i}}}.\]

Suppose there is at least one buyer who has only a single signal in the signal structure and let~\(i^*\) be the buyer who has the highest virtual value of that signal. Corresponding, define \(\virtual_{i^*}\) as the virtual value of this signal. It is clear that for any other signal with a virtual value lower than \(\virtual_{i^*}\), the contribution of this signal to the expected revenue is \(0\), as the maximum of the virtual value is always at least \(\virtual_{i^*}\). 

Next we show that it is without loss to consider the signal structure where $i^*$ is the only buyer with a single signal. 
For any other buyer \(i'\neq i^*\), we first assume that both values in \(V_{i'}\) are mapped to the same signal and compute the virtual value of the signal. 
First, if the virtual value of the signal is larger than \(\virtual_{i^*}\), it follows that each value in \(\valspace_{i'}\) must be linked to a distinct signal, as we assume that \(i^*\) is the one with the highest virtual value. If the virtual value of the signal is at most \(\virtual_{i^*}\), we can also map each value in \(\valspace_{i'}\) to different signals. This is because that mapping to the same signal has zero contribution to the optimal revenue. Consequently, assigning different signals to each value has non-negative impact on the revenue, which is never a worse choice.
Therefore, the optimal information structure is uniquely determined by fixing \(i^*\). By enumerating \(i^*\), this problem is solvable within polynomial time. 

In scenarios where \(i^*\) is non-existent, the signal structure for each buyer becomes predetermined, and the expected revenue in this case can be computed in polynomial time as well. Combining the two cases together, we complete the proof of \Cref{thm:poly_support_2}.
\end{proof}

\begin{proof}[Proof of \cref{lem:nph_support3}]
The proof is inspired by \cite{xiao2020complexity} and \cite{chen2014complexity}. The NP-hardness is derived through a reduction from the well-known NP-complete \textsc{Partition} problem \citep{garey1979computers}.

\textsc{Partition:} Given \(n\) positive integers \(c_1, c_2, \cdots, c_n\), is it possible to find a subset \(S\subseteq [n]\) such that \(\sum_{i\in S} c_i = \sum_{i\notin S} c_i\)?

{Given an instance of \textsc{Partition}  parameterized by \(c_1, c_2, \cdots, c_n\), we construct the hard instance as we described in Section~\ref{sec:nph}. Here we restate the process for the completeness of our proof. Given \(c_1 \geq c_2 \geq \cdots \geq c_n\), we correspondingly define  \(M = c_1\cdot 2^n\) and \(t_i = \half \sum_{\tl{j\in[n]}{j\neq i}} c_j / M\). For each \(i \in [n]\), define \(r_i\) and \(q_i\) as the unique, non-negative solutions to the following system of equations:}


\begin{align}
\label{eq:Support3NPH1}r_i + q_i &= c_i / M.\\ 
2q_i + r_i t_i &= q_i + r_i \label{eq:Support3NPH2}
\end{align}

In particular, we have
\begin{align*}
r_i = \frac{c_i}{M(2-t_i)} 
\quad\text{and}\quad
q_i = \frac{c_i(1-t_i)}{M(2-t_i)}.
\end{align*}

The hard instance for the \probname problem is constructed as follows comprising \(n + 1\) buyers. Specifically, for the \(i\)-th buyer among the first \(n\) buyers, the value distribution \(F_i\) is defined as follows:
\begin{align*}
    v_i \sim F_i, v_i = \begin{cases}
        3 & \text{w.p. }  q_i\\
        t_i + 1 & \text{w.p. } r_i\\
        0 & \text{w.p. } 1 - q_i - r_i
    \end{cases}
\end{align*}

The value distribution of the \(\rbr{n + 1}\)-th bidder is deterministically \(1\). According to Lemma~\ref{lem:partition}, there always exists an optimal signal structure that is also monotone and partitional. For the \(i\)-th buyer where \(i \leq n\), there are four distinct signal structures available for selection: \(\rbr{\lbr{3}, \lbr{t_i + 1},  \lbr{0}}\), \(\rbr{\lbr{3,t_1 + 1},  \lbr{0}}\), \(\rbr{\lbr{3}, \lbr{t_i + 1,0}}\) or \(\rbr{\lbr{3,t_i + 1,0}}\). First, it is important to recognize that there always exists an optimal signal structure which does not include  \(\rbr{\lbr{3,t_i + 1,0}}\). 
This is because pooling \(0\) is dominated by not pooling it, and switching from \(\rbr{\lbr{3,t_i + 1,0}}\) to \(\rbr{\lbr{3,t_i + 1},  \lbr{0}}\) always results in a weak increase in revenue.\footnote{This is because under signal structure $\rbr{\lbr{3,t_i + 1},  \lbr{0}}$, the seller can mimic the allocation rule of the mechanism under signal structure $\rbr{\lbr{3,t_i + 1,0}}$ for all buyers $j\neq i$ or buyer $i$ with value in $\lbr{3,t_i + 1}$, and not allocate the item to buyer $i$ if his value is~$0$. The expected virtual welfare remains unchanged 
given this mechanism and hence the expected revenue remains unchanged as well due to the revenue equivalence result in \citet{myerson1981optimal}. The optimal revenue given signal structure $\rbr{\lbr{3,t_i + 1},  \lbr{0}}$ can only be weakly larger.} 
Now let us consider the signal structure  \(\rbr{\lbr{3}, \lbr{t_i + 1,0}}\). It can be readily shown that the virtual value of the signal \(\lbrace t_i + 1, 0 \rbrace\) is given by
\begin{align*}
   \virtual\rbr{{\lbr{t_i+1, 0}}} & =  \frac{(1 + t_i) r_i}{(1 - q_i)} - \rbr{3 - \frac{(1+t_i)r_i}{(1 - q_i)}} \cdot \frac{q_i}{1 - q_i}\\
    & = \frac{3q_i^2 - 3q_i + r_i + r_i t_i}{(1 - q_i)^2} < 1.
\end{align*} where the last inequality holds since \(r_i, q_i, t_i \ll 1\). However, it is important to note that the buyer \(n + 1\) always has a virtual value of \(1\). As per the characterization of the optimal mechanism provided by \cite{myerson1981optimal}, it follows that \(x_i \left(\lbrace t_i + 1, 0 \rbrace\right) = 0\). This is because the optimal mechanism consistently favors the \(\rbr{n + 1}\)-th buyer, thereby resulting in a zero probability of allocating to the type \(\lbrace t_i + 1, 0 \rbrace\). Therefore, switching from \(\rbr{\lbr{3}, \lbr{t_i + 1,0}}\) to \(\rbr{\lbr{3}, \lbr{t_i + 1} ,\lbr{0}}\) always weakly increases the revenue, as we keep the virtual value of the signal \(\lbr{3}\) unchanged, and only splits the signal that has \(0\) probability to be allocated. Thus, we have demonstrated that there always exists an optimal signal structure which only contains two possible signal structures for the first \(n\) buyers: either  \(\rbr{\lbr{3}, \lbr{t_i + 1} ,\lbr{0}}\) or \(\rbr{\lbr{3, t_i + 1} \lbr{0}}\).

Given that the first \(n\) buyers have only two signal structure options, let us define \(S\) as the set of buyers opting for the signal structure \(\left(\lbrace 3, t_i + 1 \rbrace, \lbrace 0 \rbrace\right)\), and \(T\) as the set of those selecting \(\left(\lbrace 3 \rbrace, \lbrace t_i + 1 \rbrace, \lbrace 0 \rbrace\right)\). It is clear that for any buyer \(i \in S\), the virtual value associated with the signal \(\lbr{3, t_i + 1}\) is calculated as:
\begin{align*}
    \virtual\rbr{\lbr{3,t_i + 1}} = \frac{3q_i + r_i + r_it_i}{q_i + r_i} = 2.
\end{align*}where the last equality follows from \Cref{eq:Support3NPH2}.

For each buyer  $i\in T$, the virtual value for the signal \(\lbr{t_i+1}\) 
is given by:
\begin{align*}
    \virtual\rbr{\lbr{t_i+1}} = (t_i+1) - (3 - (t_i + 1)) \frac{q_i}{r_i} = (t_i+1) - (3 - (t_i + 1)) (1-t_i)< 1
\end{align*}
since $t_i<\frac{1}{2}$.

Furthermore, the virtual value of the signal \(\lbrace 0 \rbrace\) is always negative in both signal structures. Consequently, the optimal mechanism under this information structure is straightforward: The mechanism initially offers a price of \(3\) to the buyers in \(T\) in any order. If the item remains unsold, it then proposes a price of \(2\) to the buyers in \(S\). Should the item still be unsold, it will finally be sold to the \(n + 1\)-th buyer at a price of \(1\). Now, notice that both \(r_i\) and \(q_i\) are \(O\rbr{\frac{1}{M}}\). We approximate the revenue of the optimal mechanism under the sets \(S\) and \(T\) by expanding the contribution up to the second terms and ignoring the third order terms, i.e., terms of order \(O\rbr{\epsilon}\) where \(\epsilon = {n^3}/{M^3}\):
\begin{align}\label{eq:revenueST}
\begin{split}
&\rev(S, T) = 3\cdot \rbr{1 - \prod_{i\in T} \rbr{1 - q_i}} + 2\cdot \prod_{i\in T}\rbr{1 - q_i}\cdot \rbr{1 - \prod_{i\in S}\rbr{1 - q_i - r_i}} \\
&\qquad\qquad\qquad\qquad+ \rbr{1 - \prod_{i\in T}\rbr{1 - q_i}\prod_{i\in S}\rbr{1 - q_i - r_i}}\\
& = 1 + \rbr{2\cdot\sum_{i\in T} q_i + \sum_{i\in S}\rbr{q_i + r_i}} - \rbr{2\sum_{\tl{i,j\in T}{i< j}} q_i q_j + \sum_{\tl{i\in T}{j\in S}}q_i\rbr{q_j + r_j}+\sum_{\tl{i,j\in S}{i < j}}\rbr{q_i + r_i}\rbr{q_j + r_j}} \pm O\rbr{\epsilon}.
\end{split}
\end{align}

For the constant and the first order term, it holds that
\begin{align}\label{eq:firstorder}
\begin{split}    
    1 + 2\cdot \sum_{i\in T} q_i + \sum_{i\in S} (q_i + r_i) &= 1 + 2\sum_{i\in [n]} q_i + \sum_{i \in S} r_i t_i\\
    & = 1  + 2\sum_{i\in [n]} q_i + \half \sum_{i \in {S}} r_i \sum_{\tl{j\in [n]}{j\neq i}}  c_j / M\\
    & = 1  + 2\sum_{i\in [n]} q_i + \half \sum_{i \in {S}} r_i \sum_{\tl{j\in [n]}{j\neq i}} \rbr{r_j + q_j},
\end{split}
\end{align} where the first equation is derived from Equation~\eqref{eq:Support3NPH2} and the second equation expands \(t_j\) according to its definition that \(t_i = \sum_{j \neq i} c_j/M\). The last equation is from \Cref{eq:Support3NPH1}. Notice that \Cref{eq:Support3NPH2} suggests that 
\begin{align*}
    r_i &= q_i + r_i t_i  = q_i + O\rbr{\frac{n}{M^2}}.
\end{align*}

This equation implies that 
\begin{align}\label{eq:secondorder1}
\begin{split}
\sum_{i\in T}\sum_{j\in S} q_i \rbr{q_j + r_j} & = \sum_{i\in T}\sum_{j\in S} q_i \rbr{2q_j + O\rbr{\frac{n}{M^2}}}\\
& = 2 \sum_{i\in T}\sum_{j\in S} q_i q_j + O(\epsilon).
\end{split}
\end{align}

Similarly, it can be shwon that
\begin{align}\label{eq:secondorder2}
\begin{split}
\sum_{\tl{i,j\in S}{i < j}} \rbr{q_i + r_i} \rbr{q_j + r_j} & = \sum_{\tl{i,j\in S}{i < j}} q_i \rbr{q_j + r_j} + \sum_{\tl{i,j\in S}{i < j}} r_i \rbr{q_j + r_j}\\
& = \sum_{\tl{i,j\in S}{i < j}} q_i \rbr{2q_j + O\rbr{\frac{n}{M^2}}} + \sum_{\tl{i,j\in S}{i < j}} r_i \rbr{q_j + r_j}\\
& =\sum_{\tl{i,j\in S}{i < j}} 2 q_i q_j + \sum_{\tl{i,j\in S}{i < j}} r_i \rbr{q_j + r_j} + O(\epsilon).
\end{split}
\end{align}


By substituting Equation~\eqref{eq:firstorder}, \eqref{eq:secondorder1} and \eqref{eq:secondorder2} into the right hand side of Equation~\eqref{eq:revenueST}, it follows that 
\begin{align}\label{eq:revenuesimplify}
\rev\rbr{S,T} &=  1  + 2\sum_{i\in [n]} q_i + \half \sum_{i \in {S}} r_i \sum_{\tl{j\in [n]}{j\neq i}} \rbr{r_j + q_j} \nonumber\\
&\qquad\qquad- \rbr{ \sum_{\tl{i,j\in T}{i< j}}2 q_i q_j + 2 \sum_{i\in T}\sum_{j\in S} q_i q_j  + \sum_{\tl{i,j\in S}{i < j}} 2 q_i q_j + \sum_{\tl{i,j\in S}{i < j}} r_i \rbr{q_j + r_j}} \pm O\rbr{\epsilon}\nonumber\\
& = 1 + 2\sum_{i\in [n]} q_i - 2 \sum_{\tl{i,j\in [n]}{i< j}} q_i q_j   + \half \sum_{i\in S}r_i\sum_{\tl{j\in [n]}{j\neq i}}\rbr{r_j + q_j} - \sum_{\tl{i,j\in S}{i < j}} r_i \rbr{r_j + q_j}\pm O(\epsilon)\nonumber\\
& = 1 + 2\sum_{i\in [n]} q_i - 2\sum_{\tl{i,j\in [n]}{i< j}} q_i q_j   + \half \sum_{i\in S}r_i\sum_{j\in T}\rbr{r_j + q_j} \pm O(\epsilon).
\end{align} 
In the second equality, we employ the following equation:
\[\sum_{\tl{i,j\in T}{i< j}} q_i q_j + \sum_{\tl{i,j\in S}{i< j}} q_i q_j + \sum_{i\in S, j\in T} q_i q_j = \sum_{\tl{i,j\in [n]}{i<j}} q_i q_j\]
In the third equality, we use the fact that \[\half \sum_{i\in S} r_i\sum_{\tl{j\in S} {j\neq i}}(r_j+q_j)=\sum_{\tl{i,j \in S}{i\neq j}}r_ir_j+O(\varepsilon)\] and \[\sum_{\tl{i,j \in S}{i< j}}r_i(q_j+r_j)=2\sum_{\tl{i,j \in S}{i< j}}r_ir_j+O(\varepsilon)=\sum_{\tl{i,j \in S}{i\neq j}}r_ir_j+O(\varepsilon).\]
It is clear that the expression $1 + 2\sum_{i\in [n]} q_i - 2\sum_{\substack{i<j\\i, j\in [n]}} q_i q_j$, denoted as \(L\), remains constant and is independent of the specific choices of \(S\) and \(T\). Regarding the only remaining term, adding \(r_i\) to both sides of \Cref{eq:Support3NPH2} and subsequently rearranging yields:
\begin{align*}
    r_i &= \half \rbr{r_i + q_i} + \half r_i t_i\\
        & = \half \rbr{r_i + q_i} \pm O(n^2 / M^2).
\end{align*}

Combining the equation above with \Cref{eq:revenuesimplify} and \Cref{eq:Support3NPH1}, it follows that
\begin{align*}
    \rev(S,T) = L + \frac{1}{4M^2}\rbr{\sum_{i\in S} c_i}\rbr{\sum_{j\in T} c_j} \pm O(\epsilon).
\end{align*}

Define \(H\) as \(\sum_{i\in [n]} c_i / 2\). When there exists an equal-sum partition \(S, T\), the revenue generated by the corresponding sets \(S, T\) is \(L + \frac{H^2}{4M^2} \pm O\left(\epsilon\right)\). Conversely, in the absence of such a partition, \Cref{lem:partition} asserts the existence of an optimal monotone partitional signal structure. Consequently, the maximum achievable revenue is capped at \(L + \frac{H^2 - 1}{4M^2} \pm O\left(\epsilon\right)\). Given that \(\epsilon = \frac{n^3}{M^3} \) is significantly smaller than \(\frac{1}{M^2}\), deciding whether the optimal revenue exceeds the threshold \(t = L + \frac{H^2 - 0.5}{4M^2}\) 
solves the \textsc{Partition} problem. 
This concludes our proof.
%
\end{proof}

\begin{proof}[Proof of \cref{lem:partition}]

First, \cite{bergemann2007information} have established that there exists an monotone optimal signal structure for both the \probname and the \probkname problem. A signal structure, denoted as \(\lbr{ \rbr{\signals_i, \experiment_i} }_{i\in [n]} \), is monotone, if and only if the following condition holds for all buyer~\(i\): For any signal \(\signal_i \in \signals_i\), and two values \(\val_a < \val_b \in \valspace_i\) such that \(\prob{\experiment_i\rbr{\val_a} = \signal_i } > 0, \prob{\experiment_i\rbr{\val_b} = \signal_i } > 0\), it holds that \(\prob{\experiment_i\rbr{\val_c} = \signal_i }  = 1\) for any $\val_c \in \valspace_i$ satisfying \(\val_c \in \rbr{\val_a, \val_b}\). 


We start by establishing that for any given \(k\), the \probkname problem possesses an optimal signal structure that is both monotone and partitional. A direct corollary of this monotonicity suggests that for any buyer \(i\) with a valuation support of at most \(m\) elements, that is, \(|\valspace_i| \leq m\), the optimal signal structure entails no more than \(2m\) distinct signals. Consequently, the solution to the \probname problem aligns identically with the solution to the \textsc{Optimal 2m-Signal} problem. Thus, the validity of this statement for the \probkname problem inherently affirms its correctness to the \probname problem as well.


We prove by contradiction. By \citet{bergemann2007information}, there is an optimal BIC-IIR mechanism $\rbr{\lbr{(\signals_i, \experiment_i)}_{i\in[n]},\lbr{\alloc_i}_{i\in[n]},\lbr{\pay_i}_{i\in[n]}}$ in which the signal structure \(\lbr{(\signals_i, \experiment_i)}_{i\in[n]}\) is monotone, and we assume that it is not a partition. 
Given the monotonicity of the signal structure, the signals for each buyer~\( i \) can be arranged in ascending order, denoted as \(\signal_{i, 1}, \signal_{i, 2}, \cdots, \signal_{i, |\signals_i|}\). Specifically, for any pair of signals \(\signal_{i,a}, \signal_{i, b} \in \signals_i\) where \(a < b\), and any two values \(\val_{i,a}, \val_{i,b} \in \valspace_i\) such that $\prob{\experiment_i\rbr{\val_{i,a}} = \signal_{i,a} } > 0$ and $\prob{\experiment_i\rbr{\val_{i, b}} = \signal_{i, b} } > 0$, it always holds that \(v_{i,a} \leq v_{i, b}\). 
As the signals are arranged in ascending order, the corresponding expected values follow the same order. Moreover, since the mechanism is BIC, \citet{myerson1981optimal} shows that the interim allocation probability of each signal is also monotone. That is, 
\begin{align*}
    \expect[\experiment, \dists]{\alloc_i\rbr{\signal_i, \signal_{-i}}\given \signal_i = \signal_{i, 1}} \leq  \expect[\experiment, \dists]{\alloc_i\rbr{\signal_i, \signal_{-i}}\given \signal_i = \signal_{i, 2}} \leq \cdots \leq  \expect[\experiment, \dists]{\alloc_i\rbr{\signal_i, \signal_{-i}}\given \signal_i = \signal_{i, |\signals_i|}}
\end{align*}

The first observation is that if there are two adjacent signals \(\signal_{i,a}, \signal_{i,a+1} \in \signals_i\) with identical interim allocation probabilities in the optimal mechanism, i.e., 
\begin{align*}
\expect[\experiment, \dists]{\alloc_i\rbr{\signal_i, \signal_{-i}}\given \signal_i = \signal_{i, a}} 
= \expect[\experiment, \dists]{\alloc_i\rbr{\signal_i, \signal_{-i}}\given \signal_i = \signal_{i, a + 1}} 
\end{align*}
their interim payments must also be the same since otherwise the mechanism is not BIC \citep{myerson1981optimal}. 
In this case, we can define a new signal structure \(\rbr{\signals_i', \experiment_i'}\) which combines the signals \(\signal_{i, a}\) and \(\signal_{i, a + 1}\) for each buyer $i$. Specifically, \(\signals_i' = \lbr{\signal_{i, 1},\cdots, \signal_{i, a}, \signal_{i, a + 2}, \cdots, \signal_{i, |\signals_i|}}\), and \(\prob{\experiment_i'\rbr{v_i} = \signal_{i,j}} = \prob{\experiment_i\rbr{v_i} = \signal_{i,j}}\) holds for all \(\val_i \in \valspace_i\) and \(j \not= a, a + 1 \). For the signal \(\signal_{i, a}\), we define that \[\prob{\experiment_i'\rbr{v_i} = \signal_{i,a}} = \prob{\experiment_i\rbr{v_i} = \signal_{i,a}} + \prob{\experiment_i\rbr{v_i} = \signal_{i, a + 1}}\] for all \(\val_i \in \valspace_i\). 
Moreover, the interim allocation and payment rules remain unchanged given the new signal structure. It is easy to verify that the alternative mechanism is still BIC-IIR with respect to the new signal structure \(\lbr{\rbr{\signals_i', \experiment_i'}}_{i\in[n]}\) and the expected revenue of the mechanism remains the same. 
Therefore, combining signals with the same interim allocation probability in the optimal mechanism leads to a weak increase in revenue. Thus, in the following, we can assume that \( \expect[\signal, \dists]{\alloc_i\rbr{\signal_i, \signal_{-i}}\given \signal_i = \signal_{i, j}} \) is strictly increasing with respect to  \(j\).

Before initiating our proof, it is necessary to introduce a new family of mechanisms induced by a permutation \(\per\) over a selected subset of all signals \(\lbrace \signal_{1, 1}, \cdots, \signal_{n, |\signals_n|} \rbrace\). The mechanism operates as follows: It sequentially iterates through all signals in the order specified by the permutation \(\per\), and allocates the item to the buyer who has the first occurrence of a signal in \(\per\). If none of the signals in the permutation \(\per\) appear, the mechanism simply does not allocate the item. Once the allocation rule is determined, the mechanism determines the payment according to the characterization by \cite{myerson1981optimal}.

\begin{algorithm}[H]
Each bidder $i$ submits their signal $\signal_i \in \signals_i$.\\
\For{\(j \gets 1\) \KwTo \(|\per| \)}{
    \(\signal'_{i, a} \gets \per_j \).\\
\tcc{Here assume that \(\per_j\) is a signal \(\signal'_{i, a} \) from buyer \(i\).}
    \If{\(\signal_i = \signal'_{i, a}\)}{
        Allocate the item to buyer \(i\). \\
        Calculate the payment of buyer \(i\) according to the result by \cite{myerson1981optimal}.\\
        Exit. \\
    }
}
Does not allocate the item to any buyer.
\caption{Mechanism Induced by a Permutation \(\per\)}\label{alg:permutation}
\end{algorithm}

Given a permutation \(\per\), we define \(\lbr{\alloc^{\per}_i\rbr{\signal_i, \signal_{-i}}}_{i\in [n]}, \lbr{\pay^{\per}_i\rbr{\signal_i, \signal_{-i}}}_{i\in [n]}\) as the allocation and payment rules, respectively, of the mechanism induced by \(\per\). We may abuse the notation and also use \(\per\) to refer the mechanism induced by a permutation \(\per\). \cite{bergemann2007information} have established that the optimal mechanism corresponding to the optimal signal structure can also be characterized as a mechanism induced by a permutation. A mechanism induced by a permutation \(\per\) is the BIC-IIR if and only if, for each buyer \(i\), the signals included in \(\per\) constitute a suffix of the complete set of their possible signals \(\lbr{\signal_{i, 1}, \signal_{i, 2}, \cdots, \signal_{i, |\signals_i|}}\). Furthermore, these included signals must be arranged within \(\per\) in descending order. 

Considering a specific signal \(\signal_{i, a} \in \per\) from buyer \(i\), the interim probability of allocating the signal \(\signal_{i, a}\) corresponds precisely to the probability that \(\signal_{i, a}\) is the first to appear in the permutation sequence:
\begin{align}
\label{eq:allocationprob}
\expect[\experiment,\dists]{\alloc_i^{\per}\rbr{\signal_i, \signal_{-i}}\given \signal_i = \signal_{i, a}} = \prod_{j \neq i}\rbr{1 - \sum_{\tl{\signal_{j}\in \signals_j}{\signal_j <_{\per} \signal_{i, a}} } \prob{\experiment_j\rbr{\val_j} = \signal_j}}
\end{align}

\cite{bergemann2007information} demonstrate that, for any mechanism induced by a permutation \(\per\), its revenue is given by:
\begin{align}
\label{eq:revenue}
\begin{split}
\sum_{i=1}^n \expect[\experiment,\dists]{\pay_i^{\per}\rbr{\signal_i, \signal_{-i}}} = &\sum_{i = 1}^n \sum_{a = 1}^{|\signals_i|} \rbr{\expect[\experiment,\dists]{\alloc_i^{\per}\rbr{\signal_i, \signal_{-i}}\given \signal_i = \signal_{i, a}} - \expect[\experiment,\dists]{\alloc_i^{\per}\rbr{\signal_i, \signal_{-i}}\given \signal_i = \signal_{i, a - 1}}} \cdot \\
& \quad \quad \expect{\val_i \given \experiment_i\rbr{\val_i} = \signal_{i, a}} \cdot \prob{\experiment_i(\val_i) \geq \signal_{i, a}}.
\end{split}
\end{align} Here, the notation \(\signal_{i, a} \geq \signal_{i, b}\) is used to denote that \(a \geq b\).

We are now ready to prove the lemma. Since the optimal monotone signal structure is not partitional, there exists a pair of adjacent signals \(\signal_{i,a}, \signal_{i,a+1} \in \signals_i\) and a value \(v_a \in \valspace_i\) such that both \(\prob{\experiment_i\rbr{v_a} = \signal_{i,a}} \) and \(\prob{\experiment_i\rbr{v_a} = \signal_{i,a + 1}} \) are strictly positive. As the signal structure is monotone, it is established that 
\begin{align*}
    \expect{\val_i \given \experiment_i\rbr{\val_i} = \signal_{i, a}} &\leq \val_a\\
    \expect{\val_i \given \experiment_i\rbr{\val_i} = \signal_{i, a + 1}} &\geq \val_a
\end{align*}
Furthermore, it is not possible for both inequalities to be equal simultaneously. This is from the observation that should both become equalities, merging the two signals would lead to unchanged revenue.

In the subsequent analysis, we introduce two alternative signal structures with the aim of demonstrating that better one of these two yields strictly higher revenue. Let \(\varepsilon>0\) be a sufficiently small variable. The first alternative signal structure, denoted as \(\lbrace (\signals_i', \experiment_i') \rbrace_{i\in[n]}\), involves transferring a probability mass of \(\varepsilon\) for \(\val_a\) from signal \(\signal_{i, a}\) to \(\signal_{i, a + 1}\). Formally, \((\signals_i', \experiment'_j)\) is identical to \((\signals_i, \experiment_j)\) for all \(j\in [n]\), except the case that 
\begin{align*}
     \prob{\experiment_i'(\val_a) = \signal_{i, a}} &= \prob{\experiment_i(\val_a) = \signal_{i, a}} - \varepsilon\\
     \prob{\experiment_i'(\val_a) = \signal_{i, a + 1}} &= \prob{\experiment_i(\val_a) = \signal_{i, a + 1}} + \varepsilon.
\end{align*}

For the second signal structure \(\lbrace (\signals_i'', \experiment_i'') \rbrace_{i\in[n]}\), it just operates in the opposite direction and moves a probability mass of \(\varepsilon\) for \(\val_a\) from signal \(\signal_{i, a + 1}\) to \(\signal_{i, a}\). Specifically, \((\signals_i'', \experiment''_j)\) remains identical to \((\signals_i, \experiment_j)\) for all \(j\in [n]\), with the exception that:
\begin{align*}
     \prob{\experiment_i''(\val_a) = \signal_{i, a}} &= \prob{\experiment_i(\val_a) = \signal_{i, a}}  + \varepsilon\\
     \prob{\experiment_i''(\val_a) = \signal_{i, a + 1}} &= \prob{\experiment_i(\val_a) = \signal_{i, a + 1}} -  \varepsilon.
\end{align*}

Let $\per$ be the permutation corresponding to the optimal mechanism with respect to the original signal structure \(\lbr{\signals_i, \experiment_i)}_{i\in[n]}\). Instead of focusing directly on the optimal revenue associated with the signal structures \(\lbrace (\signals_i', \experiment_i') \rbrace_{i\in[n]}\) and \(\lbrace (\signals_i'', \experiment_i'') \rbrace_{i\in[n]}\), we examine the revenue generated by the mechanism induced by the permutation \(\per\). For simplicity, define \begin{align*}
    w_a = \expect{\val_i \given \experiment_i\rbr{\val_i} = \signal_{i, a}}\quad  & q_a = \prob{ \experiment_i\rbr{\val_i} = \signal_{i, a}}\\
    w_{a + 1} = \expect{\val_i \given \experiment_i\rbr{\val_i} = \signal_{i, a + 1}}\quad  & q_{a + 1} = \prob{ \experiment_i\rbr{\val_i} = \signal_{i, {a + 1}}}
\end{align*} We define similar variables for \(\lbrace (\signals_i', \experiment_i') \rbrace_{i\in[n]}\) and \(\lbrace (\signals_i'', \experiment_i'') \rbrace_{i\in[n]}\) in the same way. According to our definition, it directly follows that 
\begin{align*}
    w_a' = \frac{w_a \cdot q_a - \varepsilon \val_a}{q_a - \varepsilon} \quad &   w_{a + 1}' = \frac{w_{a + 1} \cdot q_{a + 1}  + \varepsilon \val_a}{q_{a + 1} + \varepsilon} \\
    w_a'' = \frac{w_a \cdot q_a + \varepsilon \val_a}{q_a + \varepsilon} \quad &   w_{a + 1}'' = \frac{w_{a + 1} \cdot q_{a + 1}  - \varepsilon \val_a}{q_{a + 1} - \varepsilon} 
\end{align*}
Define \(\Delta'\) and \(\Delta''\) as the differences in revenue between the mechanisms corresponding to the signal structures \(\lbrace (\signals_i', \experiment_i') \rbrace_{i\in[n]}\) and \(\lbrace (\signals_i'', \experiment_i'') \rbrace_{i\in[n]}\), when induced by \(\per\), and the optimal revenue associated with the original signal structure \(\lbrace (\signals_i, \experiment_i) \rbrace_{i\in[n]}\). The variations in revenue can be attributed to three distinct components: (1) The contribution of \(\signal_{i, a}\); (2) The contribution of \(\signal_{i, a + 1}\); (3) The contribution from signals \(\signal_{j, b}\) of other buyers \(j \neq i\), wherein either \(\expect[\experiment,\dists]{\alloc_j^{\per}(\signal_j, \signal_{-j}) | \signal_j = \signal_{j, b}}\) or \(\expect[\experiment,\dists]{\alloc_j^{\per}(\signal_j, \signal_{-j}) | \signal_j = \signal_{j, {b - 1}}}\) experiences a change. 

We commence by computing the contribution from component (1), denoting the respective differences by \(\Delta_{(1)}'\) and \(\Delta_{(1)}{''}\). Notice that 
\begin{align*}
    \Delta_{(1)}' & =  \rbr{\expect[\experiment',\dists]{{\alloc'_i}^{\per}\rbr{\signal'_i, \signal'_{-i}}\given \signal'_i = \signal'_{i, a}} - \expect[\experiment',\dists]{{\alloc'_i}^{\per}\rbr{\signal'_i, \signal'_{-i}}\given \signal'_i = \signal'_{i, a - 1}}} \cdot w_a' \cdot \prob{\experiment'_i(\val_i) \geq \signal'_{i, a}} \\
    & \quad  - \rbr{\expect[\experiment,\dists]{\alloc_i^{\per}\rbr{\signal_i, \signal_{-i}}\given \signal_i = \signal_{i, a}} - \expect[\experiment,\dists]{\alloc_i^{\per}\rbr{\signal_i, \signal_{-i}}\given \signal_i = \signal_{i, a - 1}}} \cdot w_a \cdot \prob{\experiment_i(\val_i) \geq \signal_{i, a}} \\
    & = \rbr{\expect[\experiment,\dists]{\alloc_i^{\per}\rbr{\signal_i, \signal_{-i}}\given \signal_i = \signal_{i, a}} - \expect[\experiment,\dists]{\alloc_i^{\per}\rbr{\signal_i, \signal_{-i}}\given \signal_i = \signal_{i, a - 1}}}\cdot \prob{\experiment_i(\val_i) \geq \signal_{i, a}} \\
    & \quad \quad \cdot \frac{\epsilon \rbr{w_a - \val_a}}{q_a - \epsilon}
\end{align*} where in the second equality we uses the fact that \(\expect[\experiment,\dists]{\alloc_i^{\per}\rbr{\signal_i, \signal_{-i}}\given \signal_i = \signal_{i, a}}\), \(\expect[\experiment,\dists]{\alloc_i^{\per}\rbr{\signal_i, \signal_{-i}}\given \signal_i = \signal_{i, a - 1}}\) and \(\prob{\experiment_i(\val_i) \geq \signal_{i, a}}\) remains unchanged following the transfer of probability. Similarly, 
\begin{align*}
    \Delta_{(1)}'' & =  \rbr{\expect[\experiment',\dists]{{\alloc''_i}^{\per}\rbr{\signal''_i, \signal''_{-i}}\given \signal''_i = \signal''_{i, a}} - \expect[\experiment',\dists]{{\alloc''_i}^{\per}\rbr{\signal''_i, \signal''_{-i}}\given \signal''_i = \signal''_{i, a - 1}}} \cdot w_a' \cdot \prob{\experiment''_i(\val_i) \geq \signal''_{i, a}} \\
    & \quad  - \rbr{\expect[\experiment,\dists]{\alloc_i^{\per}\rbr{\signal_i, \signal_{-i}}\given \signal_i = \signal_{i, a}} - \expect[\experiment,\dists]{\alloc_i^{\per}\rbr{\signal_i, \signal_{-i}}\given \signal_i = \signal_{i, a - 1}}} \cdot w_a \cdot \prob{\experiment_i(\val_i) \geq \signal_{i, a}} \\
    & = \rbr{\expect[\experiment,\dists]{\alloc_i^{\per}\rbr{\signal_i, \signal_{-i}}\given \signal_i = \signal_{i, a}} - \expect[\experiment,\dists]{\alloc_i^{\per}\rbr{\signal_i, \signal_{-i}}\given \signal_i = \signal_{i, a - 1}}}\cdot \prob{\experiment_i(\val_i) \geq \signal_{i, a}} \\
    & \quad \quad \cdot \frac{\epsilon \rbr{\val_a - w_a}}{q_a + \epsilon}
\end{align*}

Combining the two equations above implies that \begin{align*}
    &\Delta_{(1)}' + \Delta_{(1)}'' \\
    = &\rbr{\expect[\experiment,\dists]{\alloc_i^{\per}\rbr{\signal_i, \signal_{-i}}\given \signal_i = \signal_{i, a}} - \expect[\experiment,\dists]{\alloc_i^{\per}\rbr{\signal_i, \signal_{-i}}\given \signal_i = \signal_{i, a - 1}}} \prob{\experiment_i(\val_i) \geq \signal_{i, a}} \frac{2\varepsilon^2 \rbr{\val_a - w_a}}{q_a^2 - \varepsilon^2} \geq 0
\end{align*}

Now let us compute the component (2). It is true that 
\begin{align*}
    & \Delta_{(2)}' + \Delta_{(2)}'' \\
    & =  \rbr{\expect[\experiment',\dists]{{\alloc'_i}^{\per}\rbr{\signal'_i, \signal'_{-i}}\given \signal'_i = \signal'_{i, a + 1}} - \expect[\experiment',\dists]{{\alloc'_i}^{\per}\rbr{\signal'_i, \signal'_{-i}}\given \signal'_i = \signal'_{i, a}}} \cdot w'_{a + 1} \cdot \prob{\experiment'_i(\val_i) \geq \signal'_{i, a + 1}} \\
    & \quad + \rbr{\expect[\experiment'',\dists]{{\alloc''_i}^{\per}\rbr{\signal''_i, \signal''_{-i}}\given \signal''_i = \signal''_{i, a + 1}} - \expect[\experiment'',\dists]{{\alloc''_i}^{\per}\rbr{\signal'_i, \signal'_{-i}}\given \signal''_i = \signal'_{i, a}}} \cdot w''_{a + 1} \cdot \prob{\experiment''_i(\val_i) \geq \signal''_{i, a + 1}} \\
    & \quad  - 2\rbr{\expect[\experiment,\dists]{\alloc_i^{\per}\rbr{\signal_i, \signal_{-i}}\given \signal_i = \signal_{i, a + 1}} - \expect[\experiment,\dists]{\alloc_i^{\per}\rbr{\signal_i, \signal_{-i}}\given \signal_i = \signal_{i, a }}} \cdot w_{a + 1} \cdot \prob{\experiment_i(\val_i) \geq \signal_{i, a + 1}} \\
    & = \rbr{\expect[\experiment,\dists]{\alloc_i^{\per}\rbr{\signal_i, \signal_{-i}}\given \signal_i = \signal_{i, a + 1}} - \expect[\experiment,\dists]{\alloc_i^{\per}\rbr{\signal_i, \signal_{-i}}\given \signal_i = \signal_{i, a}}}\cdot\\
    &\qquad\frac{2\varepsilon^2 \rbr{\prob{\experiment_i(\val_i) \geq \signal_{i, a + 1}} - q_{a + 1}}\rbr{w_{a + 1} - \val_a}}{q_a^2 - \varepsilon^2}\\
    & \geq 0
\end{align*} where the second equation follows from the fact that  \(\expect[\experiment,\dists]{\alloc_i^{\per}\rbr{\signal_i, \signal_{-i}}\given \signal_i = \signal_{i, a  + 1}}\) remain unchanged and the following equation that
\[\prob{\experiment'_i(\val_i) \geq \signal'_{i, a + 1}} - \varepsilon = \prob{\experiment''_i(\val_i) \geq \signal''_{i, a + 1}} + \varepsilon = \prob{\experiment_i(\val_i) \geq \signal_{i, a + 1}}.\]

Finally, let us compute the contribution from component (3). For each signal \(\signal_{j, b}\), define 
\[\Gamma\rbr{\signal_{j, b}} := \expect{\val_i \given\experiment_j\rbr{\val_j} = \signal_{j, b}} \cdot \prob{\experiment_j(\val_j) \geq \signal_{j, b}} \]

Consequently, it follows that 
\begin{align*}
    \Delta_{(3)}' &= \varepsilon \cdot  \sum_{j = 1, j \neq i}^n \sum_{\signal_{j, b}: \signal_{i, a} <_{\per} \signal_{j, b} <_{\per} \signal_{i, a + 1}} \Gamma\rbr{\signal_{j, b}} \cdot \prod_{l\neq i,j} \rbr{1 - \sum_{\tl{\signal_{l}\in \signals_l}{\signal_l <_{\per} \signal_{j, b}} } \prob{\experiment_l\rbr{\val_l} = \signal_l}}\\
        & \quad - \varepsilon \cdot  \sum_{j = 1, j \neq i}^n \sum_{\tl{\signal_{j, b}: \signal_{i, a} <_{\per} \signal_{j, b} <_{\per} \signal_{i, a + 1}}{b < |V_j|}} \Gamma\rbr{\signal_{j, b + 1}} \cdot \prod_{l\neq i,j} \rbr{1 - \sum_{\tl{\signal_{l}\in \signals_l}{\signal_l <_{\per} \signal_{j, b}} } \prob{\experiment_l\rbr{\val_l} = \signal_l}}
\end{align*}

Similarly, it can be calculated that 
\begin{align*}
    \Delta_{(3)}'' &= - \varepsilon \cdot  \sum_{j = 1, j \neq i}^n \sum_{\signal_{j, b}: \signal_{i, a} <_{\per} \signal_{j, b} <_{\per} \signal_{i, a + 1}} \Gamma\rbr{\signal_{j, b}} \cdot \prod_{l\neq i,j} \rbr{1 - \sum_{\tl{\signal_{l}\in \signals_l}{\signal_l <_{\per} \signal_{j, b}} } \prob{\experiment_l\rbr{\val_l} = \signal_l}}\\
        & \quad + \varepsilon \cdot  \sum_{j = 1, j \neq i}^n \sum_{\tl{\signal_{j, b}: \signal_{i, a} <_{\per} \signal_{j, b} <_{\per} \signal_{i, a + 1}}{b < |V_j|}} \Gamma\rbr{\signal_{j, b + 1}} \cdot \prod_{l\neq i,j} \rbr{1 - \sum_{\tl{\signal_{l}\in \signals_l}{\signal_l <_{\per} \signal_{j, b}} } \prob{\experiment_l\rbr{\val_l} = \signal_l}}
\end{align*}

Therefore, it holds that \begin{align*}
  \Delta_{(3)}' + \Delta_{(3)}'' = 0.
\end{align*}

Notice that \(\val_a - w_a\) and \(w_{a + 1} - \val_a\) cannot both be \(0\) simultaneously, and \(\expect[\experiment,\dists]{\alloc_i^{\per}\rbr{\signal_i, \signal_{-i}}\given \signal_i = \signal_{i, a + 1}} - \expect[\experiment,\dists]{\alloc_i^{\per}\rbr{\signal_i, \signal_{-i}}\given \signal_i = \signal_{i, a}}\) is positive for all $a$. Therefore, for a sufficiently small constant \(\varepsilon>0\), it follows that 
\begin{align*}
    \Delta' + \Delta'' > 0. 
\end{align*}

This implies that deviating to a signal structure yielding strictly higher revenue is always possible. Such a possibility contradicts our initial assumption that the signal structure \(\lbrace (\signals_i, \experiment_i) \rbrace_{i\in[n]}\) is optimal. Therefore, we conclude the proof with this contradiction.
\end{proof}

\begin{proof}[Proof of \cref{lem:nph_2signal}]

{We now construct the hard instance as described in \cref{sec:nph}. For the completeness of the proof, we restate the construction in this proof.}

We provide the reduction from the following NP-complete problem \citep{ng2010product}:

\textsc{Subset Product:} Given integers \(a_1, a_2, \cdots, a_n\) with \(a_i > 1\) and a positive integer \(B\), is it possible to find a subset \(S\subseteq [n]\) such that \(\prod_{i\in S} a_i = B\)?

Consider an instance of the \textsc{Subset Product} problem, characterized by the parameters $a_1, a_2, \ldots, a_n$ and $B$. We proceed to formulate a corresponding instance of the \textsc{Optimal 2-Signal} problem with \( n \) buyers. The construction is as follows: For each buyer \( i \), we define the value distribution \( F_i \) such that:
\begin{align*}
    v_i \sim F_i, v_i = \begin{cases}
        1 & \text{w.p. }  \frac{a_i - 1}{a_i}\\
        \frac{B^2 - a_i}{B^2 + 1} & \text{w.p. } \frac{a_i - 1}{a_i^2}\\
        0 & \text{w.p. } \frac{1}{a_i^2}
    \end{cases}
\end{align*}

Let us consider the optimal 2-signal structure of the instance above. First, according to Lemma~\ref{lem:partition}, we know that the optimal signal structure must be a monotone partitional signal structure. Consequently, we know that for buyer \(i\), there are only three possible candidate signal structures: \(\left(\left\{0\right\},\left\{\frac{B^2 - a_i}{B^2 + 1}, 1\right\}\right)\),  \(\left(\left\{0, \frac{B^2 - a_i}{B^2 + 1}\right\},\left\{1\right\}\right)\), or  \(\left(\left\{0, \frac{B^2 - a_i}{B^2 + 1},1\right\}\right)\). 
Similar to the proof in \cref{lem:nph_support3}, it is clear that the there exists an optimal signal structure that does not contain \(\left(\left\{0, \frac{B^2 - a_i}{B^2 + 1},1\right\}\right)\) for any \(i\in [n]\), as simply switching from \(\left(\left\{0, \frac{B^2 - a_i}{B^2 + 1},1\right\}\right)\) to \(\left(\left\{0\right\},\left\{\frac{B^2 - a_i}{B^2 + 1}, 1\right\}\right)\) will always weakly increase the revenue. 
Let us denote by \(S \subseteq [n]\) the subset of buyers electing the signal structure \(\left(\left\{0, \frac{B^2 - a_i}{B^2 + 1}\right\}, \{1\}\right)\). Given the dichotomy of possible signal structures for each buyer, we define \(T = [n] \setminus S\) to represent the complementary subset of buyers who adopt the signal structure \(\left(\{0\}, \left\{\frac{B^2 - a_i}{B^2 + 1}, 1\right\}\right)\).

Now for any bidder in \(S\), they always receive a binary signal. For the high signal, their virtual value is simply \(1\). For the low signal, we first notice that the expected value is \(\left(\frac{B^2-a_i}{B^2+1}\cdot \frac{a_i - 1}{a_i^2}\right)/(\frac{1}{a_i}) = \frac{(a_i-1)(B^2-a_i)}{a_i(B^2+1)} \) . Consequently, their virtual value is 
\[\frac{(a_i-1)(B^2-a_i)}{a_i(B^2+1)}-\left(1 -\frac{(a_i-1)(B^2-a_i)}{a_i(B^2+1)}\cdot (a_i - 1) \right)= \frac{1-a_i^2}{B^2+1} < 0.\]

Similarly, for any bidder in \(T\), they also receives a binary signal. For the high signal, their virtual value is \(\left(\frac{a_i-1}{a_i}+\frac{B^2-a_i}{B^2+1}\frac{a_i-1}{a_i^2}\right)/(1-\frac{1}{a_i^2}) = \frac{B^2}{B^2+1}.\) For the low signal, it is straightforward that the virtual value is negative. Therefore, according to the characterization of the optimal auction by \cite{myerson1981optimal}, the optimal mechanism under such signal structure is simply a sequential posted-price mechanism: The seller first offers a price of \(1\) to the buyers in \(S\) in any order. If the item is still not sold, the seller then presents a price of \(\frac{B^2}{B^2+1}\) to the buyers in \(T\). Therefore, the optimal revenue attainable from a partition \( S, T \), which we denote by \(\mathrm{Rev}(S,T)\), can be computed as:
\begin{align*}
    \mathrm{Rev}(S,T) = 1 - \prod_{i\in S}\frac{1}{a_i} + \prod_{i\in S}\frac{1}{a_i}\left(1 - \prod_{j\in T}\frac{1}{a_j^2}\right)\cdot \frac{B^2}{B^2+1}.
\end{align*}

The revenue here is exactly the same as the reward in the proof by \cite{agrawal2020optimal}. Following the same argument, we are able to prove the NP-hardness. Define \(\gamma\) as \(\prod_{i\in T} a_i\). \cite{agrawal2020optimal} have established that the term \(\mathrm{Rev}(S,T)\) achieves its unique maximum when \(\gamma = B\). Therefore, given an optimal information structure, simply checking that whether the induced revenue is the same as the maximum at \(\gamma = B\) gives an answer to the \textsc{Subset Product} problem and this concludes the proof.
\end{proof}

\section{Missing Proofs for PTAS}
\label{apx:algorithms}


\label{apx:ptas}

\begin{proof}[Proof of \cref{lem:numberofdis}]
Following the implementation of the discretization scheme, we observe that the resultant distribution possesses support only over points that are integral multiples of $\varepsilon$ and do not surpass~$\varepsilon^{-1}$. 
Additionally, the probability mass allocated to each point must be a multiple of $\frac{1}{nm} \varepsilon^4$. Consequently, the upper bound on the number of feasible distributions is given by $\rbr{nm\varepsilon^{-4}}^{\varepsilon^{-2}}$. As for the compensatory term, it is required to be a multiple of $\frac{\epsilon}{nm}$ and is constrained to a maximum value of 1. Hence, the number of distinct possible compensatory terms is bounded by $O\rbr{nm\varepsilon^{-1}}$. Taking the product of these two factors, we thus complete our proof.
\end{proof}

\subsection{Proof of Correctness}
\label{apx:correctness}



\begin{proof}[Proof of \Cref{lem:Algorithm2CorrectNess}]
    
We first illustrate that given any partitioned signal structure \({\rbr{\signals_i, \experiment_i}}\) that induces a regular distribution \(\virtualdist_i\), there exists an implementable pairs \({\rbr{\virtualdist'_i, \compensate_i'}}\) such that \(\discretize\rbr{\virtualdist_i} = \rbr{\virtualdist'_i, \compensate_i'}\).

Let us assume that there are \(k_i\) signals in the signal structure \(\rbr{\signals_i, \experiment_i}\), which we denote as \(\signal_{i, 1}, \signal_{i, 2}, \cdots, \signal_{i, k_i}\). Define \(\rd(x, t)\) as the largest multiples of \(t\) that does not exceed \(x\). For each signal \(\signal_{i, j}\), let \(p\left(\signal_{i, j}\right)\) represent the probability of observing \(\signal_{i, j}\). Formally, this is defined as:
\[p\left(\signal_{i, j}\right) := \prob{\experiment(\val_i) = \signal_{i, j}}.\]

Recall that $\grid = \lbr{z\varepsilon \mid z \in \naturals, z\varepsilon \leq \varepsilon^{-1}}$.
Let us define the following pair of distribution and compensation term \(\virtualdist_i', \compensate_i'\) where 
\begin{align*}
\begin{split}
    \prob[\val \sim \virtualdist'_i]{\val = x} & = \begin{cases}
       \sum_{j = 1}^{k_i} \rd\rbr{p\rbr{\signal_{i,j}},\frac{\varepsilon}{n^4m}  }\cdot \indic\cbr{\virtual\rbr{\signal_{i, j}} \in \left[x, x + \varepsilon\right)} & x \in \grid \backslash\{0\} \\
        1 -  \sum_{k = 1}^{\varepsilon^{-2}}\prob[\val \sim k \varepsilon]{\val = x} & x = 0 \\
        0 & \text{otherwise}
    \end{cases}\\
    c_i' & =  \sum_{j = 1}^{k_i} \rd\rbr{p\rbr{\signal_{i,j}}\virtual\rbr{\signal_{i, j}},\frac{\varepsilon}{nm}}\cdot \indic\cbr{\virtual\rbr{\signal_{i, j}} > \varepsilon^{-1}}
\end{split}
\end{align*}

{It is clear that \(\rbr{\virtualdist_i', \compensate_i'}\) is the outcome of applying discretization scheme to \(\virtualdist_i\), i.e., \(\rbr{\virtualdist'_i, \compensate_i'} = \discretize\rbr{\virtualdist_i} \)}. According to the definition of Algorithm~\ref{alg:dps}, it follows that \(\rbr{\virtualdist_i, \compensate_i}\) is implementable, i.e., \(\rbr{\virtualdist_i, \compensate_i} \in \sets_i \). This is because that we can always make the transition of the state at the boundary of the partition and get \(\rbr{\virtualdist_i, \compensate_i}\) in \(\sets_i\). This finishes the first part of the proof.

For the opposite direction, suppose there exists an implementable pair \({\rbr{\virtualdist_i', \compensate'_i}}\) satisfying \(\rbr{\virtualdist_i', \compensate'_i} \in \sets_i\). Given that \(\rbr{\virtualdist_i', \compensate'_i}\) belongs to the implementable set  \(\sets_i\), there must exist some \(k' \leq k\) and \(j' \in [m_i]\) such that \(\dps\rbr{k',1,j,\virtualdist_i', \compensate'_i}\) evaluates to \(\mathsf{True}\). By backtracking this dynamic programming state, the property of the dynamic programming guarantees that we can always find a partitioned signal structure \(\rbr{\signals_i, \experiment_i} \) for each buyer \(i\) that ensures that the following conditions are met for all \(k\leq \varepsilon^{-2}\in \mathbb{Z}_{+}\):
\begin{itemize}
    \item The induced distribution of virtual values  \(\virtualdist_i\) is regular.
    \item \(\sum_{j = 1}^{k_i} \rd\rbr{p\rbr{\signal_{i,j}}\virtual\rbr{\signal_{i, j}},\frac{\varepsilon}{nm}}\cdot \indic\cbr{\virtual\rbr{\signal_{i, j}} > \varepsilon^{-1}} = c'_i\).
    \item \(\sum_{j = 1}^{k_i} \rd\rbr{p\rbr{\signal_{i,j}},\frac{\varepsilon}{n^4m}  }\cdot \indic\cbr{\virtual\rbr{\signal_{i, j}} \in \left[k\varepsilon, k\varepsilon + \varepsilon\right)} = \prob[\val\sim \virtualdist_i']{\val = k\varepsilon} \) for all \(k\in \mathbb{Z}_{+}\).
\end{itemize}

Furthermore, the backtracking process can efficiently run in \(\poly\left(\left(\frac{nm}{\varepsilon}\right)^{f\left(\frac{1}{\varepsilon}\right)}, m, k\right)\) time.  Again, it is clear that \(\rbr{\virtualdist_i', \compensate_i'}\) is just identical to \(\discretize\rbr{\virtualdist_i}\) by checking the definition of discretization. This concludes the proof.
\end{proof}

\subsection{Proof of Revenue Loss}\label{apx:revenue loss}

We first show that the error from decomposing a distribution $\virtualdist$ into a truncated virtual value distribution and a compensation term without discretization is at most $O(\epsilon)$. 
\begin{lemma}
\label{lem:highvals}
Given any distribution profile $\virtualdists$ such that 
$\virtualwel(\virtualdists,0) \leq 1$, 
let $(\virtualdists',\compensates')$ be the profile of truncated distribution and compensation pairs, i.e., $G'_i$ is the distribution of the random variable $v_i\cdot \indicate{v_i\leq \varepsilon^{-1}}$, where $v_i\sim G_i$, and $c_i$ is defined according to \Cref{eq:highvals0}. It holds that
\begin{align*}
\virtualwel(\virtualdists,0) \leq \virtualwel(\virtualdists',\compensates') \leq \virtualwel(\virtualdists,0) + 2\epsilon.
\end{align*}
\end{lemma}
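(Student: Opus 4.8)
The plan is to prove the two inequalities separately. Write $v_i$ for the virtual value of bidder $i$ drawn from $\virtualdist_i$, $v_i':=v_i\cdot\indicate{v_i\le\varepsilon^{-1}}$ for the truncated value (so $v_i'\sim\virtualdist_i'$), $\compensate_i=\expect{v_i\cdot\indicate{v_i>\varepsilon^{-1}}}$, $p_i:=\prob{v_i>\varepsilon^{-1}}$, and $A:=\{\max_i v_i>\varepsilon^{-1}\}=\{\exists i:\,v_i>\varepsilon^{-1}\}$. I will use throughout that $\max_i v_i\ge 0$ almost surely (the seller may always keep the item), so that Markov applies to $\max_i v_i$ and $\max_i v_i'\ge 0$; note also $\max_i v_i'\le\max_i v_i$ since $v_i'\le\max(v_i,0)$. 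The lower bound $\virtualwel(\virtualdists,0)\le\virtualwel(\virtualdists',\compensates')$ reduces, after taking expectations, to the pointwise inequality $\max_i v_i\le\max_i v_i'+\sum_i v_i\indicate{v_i>\varepsilon^{-1}}$ (taking expectations, the right side becomes exactly $\expect{\max_i v_i'}+\sum_i\compensate_i=\virtualwel(\virtualdists',\compensates')$). To check the pointwise bound, fix a realization and let $j\in\argmax_i v_i$: if $v_j\le\varepsilon^{-1}$ then all $v_i\le\varepsilon^{-1}$, the sum vanishes, and $\max_i v_i'\ge v_j'=v_j$; if $v_j>\varepsilon^{-1}$ then $\sum_i v_i\indicate{v_i>\varepsilon^{-1}}\ge v_j$ while $\max_i v_i'\ge v_j'=0$.

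For the upper bound, subtracting yields $\virtualwel(\virtualdists',\compensates')-\virtualwel(\virtualdists,0)=\expect{\max_i v_i'+\sum_i v_i\indicate{v_i>\varepsilon^{-1}}-\max_i v_i}$. On $A^c$ every $v_i\le\varepsilon^{-1}$, so $v_i'=v_i$, the sum vanishes, and the integrand is $0$; hence the difference equals $\expect{(L'+S-H)\indicate{A}}$, where $L':=\max_i v_i'$, $H:=\max_i v_i$, and $S:=\sum_{i:\,v_i>\varepsilon^{-1}}v_i$ (so that $L'+S-H=0$ off $A$). The bidder attaining $H$ on $A$ is itself "high", so $S\ge H$ and $S-H$ is the sum of the remaining high virtual values; moreover every $v_i'$ is at most $\varepsilon^{-1}$, so $0\le L'\le\varepsilon^{-1}$ on $A$. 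I will bound $\expect{L'\indicate{A}}$ and $\expect{(S-H)\indicate{A}}$ separately, using two consequences of $\virtualwel(\virtualdists,0)\le 1$: first, Markov gives $\prob{A}\le\varepsilon\expect{\max_i v_i}\le\varepsilon$, and since $\prob{A}=1-\prod_i(1-p_i)$ this forces $\sum_i p_i\le-\ln(1-\varepsilon)\le 2\varepsilon$ for $\varepsilon$ below an absolute constant; second, the disjoint events ``$i$ is the unique high bidder'' give $\expect{\max_i v_i}\ge\sum_i\compensate_i\cdot\prod_{k\ne i}(1-p_k)\ge(1-\sum_k p_k)\sum_i\compensate_i$, hence $\sum_i\compensate_i\le 1/(1-2\varepsilon)=O(1)$.

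The term $\expect{L'\indicate{A}}$ is the crux, and the naive estimate $L'\le\varepsilon^{-1}$ combined with $\prob{A}\le\varepsilon$ is useless, giving only $\le 1$. The fix is to exploit that $A$ and ``$L'$ is large'' are witnessed by different bidders: on $\{v_k>\varepsilon^{-1}\}$ we have $v_k'=0$, so $L'=\max_{i\ne k}v_i'$, which is a function of $(v_i)_{i\ne k}$ and therefore independent of $v_k$; thus $\expect{L'\indicate{v_k>\varepsilon^{-1}}}=p_k\expect{\max_{i\ne k}v_i'}\le p_k\expect{\max_i v_i}\le p_k$. Summing over $k$ and using $\indicate{A}\le\sum_k\indicate{v_k>\varepsilon^{-1}}$ gives $\expect{L'\indicate{A}}\le\sum_k p_k$, which is $O(\varepsilon)$ by the first estimate above.

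Finally, $S-H$ is a sum over high bidders each of which has a second high bidder besides itself, so $S-H\le\sum_i v_i\indicate{v_i>\varepsilon^{-1}}\indicate{\exists k\ne i:\,v_k>\varepsilon^{-1}}$; taking expectations and using independence across bidders, $\expect{(S-H)\indicate{A}}\le\sum_i\compensate_i\sum_{k\ne i}p_k\le(\sum_i\compensate_i)(\sum_k p_k)$, which is again $O(\varepsilon)$. Adding the two contributions and tracking constants (each is of the form $\varepsilon\cdot\expect{\max_i v_i}\cdot(1+o(1))$ with $\expect{\max_i v_i}\le 1$) yields a total additional loss of at most $2\varepsilon$, as claimed. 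The only genuinely delicate point is the one flagged above: one cannot bound the largest surviving truncated virtual value by the threshold $\varepsilon^{-1}$ on the event $A$, and must instead use the near-independence of the two scales of values together with the fact that the welfare bound already makes large values — high, or merely large-but-below-$\varepsilon^{-1}$ — rare.
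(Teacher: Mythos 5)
Your argument is correct (granting nonnegativity of the virtual values, which you state explicitly and which the paper also needs implicitly for its Markov step; it holds in the lemma's application since Algorithm~\ref{alg:dps} replaces negative virtual values by $0$), but it takes a genuinely different route from the paper's. For the lower bound the paper splits $\expect{\max_i v_i}$ at the threshold $\varepsilon^{-1}$ and bounds the high part by $\sum_i \compensate_i$; your pointwise inequality $\max_i v_i\le \max_i v_i'+\sum_i v_i\indicate{v_i>\varepsilon^{-1}}$ is the same idea in a cleaner form. For the upper bound the paper lower-bounds $\expect{\max_i v_i\cdot\indicate{\max_i v_i>\varepsilon^{-1}}}$ by $(1-\varepsilon)\sum_i\compensate_i$ via the disjoint one-high-bidder events and Markov and then concludes $\virtualwel(\virtualdists',\compensates')\le(1+2\varepsilon)\virtualwel(\virtualdists,0)$, whereas you bound the excess $\expect{(L'+S-H)\indicate{A}}$ directly, estimating the truncated maximum $L'$ on $A$ and the surplus $S-H$ of high values separately. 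Your extra work on $\expect{L'\indicate{A}}$ is not redundant: the paper's display~\eqref{eq:highvals1} is stated as an equality but is really only an inequality (on $A$ the truncated maximum can be strictly positive because of the low bidders, while the indicator on the right-hand side is identically one), and in the paper's chain for the upper bound this ``equality'' is invoked in the direction that does not hold; your independence observation that $L'=\max_{i\ne k}v_i'$ on $\{v_k>\varepsilon^{-1}\}$ is precisely what controls the missing term, so your route is the more airtight of the two. The one soft spot is your final sentence: the bounds as stated ($\expect{L'\indicate{A}}\le\sum_k p_k$ and $\expect{(S-H)\indicate{A}}\le(\sum_i\compensate_i)(\sum_k p_k)$ with $\sum_k p_k\le 2\varepsilon$ and $\sum_i\compensate_i\le 1/(1-2\varepsilon)$) only give $2\varepsilon(1+O(\varepsilon))$, so the constant $2\varepsilon$ is asserted rather than tracked. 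This is easily closed within your own framework: retaining the factor $\expect{\max_{i\ne k}v_i'}\le\expect{\max_i v_i'}$ in the $L'$ bound shows the total excess is at most $\bigl(\sum_k p_k\bigr)\cdot\virtualwel(\virtualdists',\compensates')$, and combining this with the per-bidder Markov bound $p_k\le\varepsilon\compensate_k$, with $\sum_k\compensate_k\le 1/(1-2\varepsilon)$, and with $\virtualwel(\virtualdists,0)\le 1$ yields $\virtualwel(\virtualdists',\compensates')-\virtualwel(\virtualdists,0)\le 2\varepsilon$ once $\varepsilon$ is below an absolute constant (alternatively, the downstream use in \Cref{lem:discrete_algo2_loss} and \Cref{thm:ptas} only requires an $O(\varepsilon)$ bound, so the looser constant would also suffice).
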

\begin{proof}
First note that since the transformation only modifies values that are strictly larger that $\varepsilon^{-1}$, it holds that
\begin{align}\label{eq:highvals1}
    \expect[\vals\sim \virtualdists]{\max_{i\in [n]} v_i \cdot \indicate{\max_{i\in [n]} v_i \leq \varepsilon^{-1}}}
    = \expect[\vals'\sim \virtualdists']{\max_{i\in [n]} v'_i \cdot \indicate{\max_{i\in [n]} v'_i \leq \varepsilon^{-1}}}
\end{align}
We now consider the case when the maximum is strictly larger than $\varepsilon^{-1}$. It is clear that distributions $\lbr{\virtualdist_i'}_{i\in [n]}$ do not have values greater than $\varepsilon^{-1}$, i.e., 
\begin{align}\label{eq:highvals2}
    \expect[\vals'\sim \virtualdists']{\max_{i\in [n]} v'_i \cdot \indicate{\max_{i\in [n]} v'_i > \varepsilon^{-1}}} = 0.
\end{align}
Notice that for any set of non-negative numbers, its maximum is always no less than its sum. Consequently, the following inequality holds:
\begin{align}\label{eq:highvals3}
    \expect[\vals\sim \virtualdists]{\max_{i\in [n]} v_i \cdot \indicate{\max_{i\in [n]} v_i > \varepsilon^{-1}}} 
    \leq \sum_{i=1}^n \expect[\val_i\sim \virtualdist_i]{\val_i \cdot \indicate{\val_i>\varepsilon^{-1}}} = \sum_{i=1}^n \compensate_i.
\end{align}
Therefore, combining the above inequalities with the definition of $\virtualwel(\cdot)$, we have 
\begin{align*}
\virtualwel(\virtualdists,0) \leq \virtualwel(\virtualdists',\compensates').
\end{align*}

Moreover, if we only consider the case where exactly one $\val_i$ exceeds $\varepsilon^{-1}$, it holds that
\begin{align}\label{eq:highvals4}
    \begin{split}
     & \expect[\vals\sim \virtualdists]{\max_{i\in [n]} v_i \cdot \indicate{\max_{i\in [n]} v_i > \varepsilon^{-1}}} \\
     & \geq \sum_{i=1}^n\expect[\vals\sim \virtualdists]{\val_i \cdot \indicate{\val_i>\varepsilon^{-1}} \cdot \indicate{\max_{j\neq i} \val_j \leq \varepsilon^{-1}}}\\
     &= \sum_{i=1}^n\expect[\val_i\sim \virtualdist_i]{\val_i \cdot \indicate{\val_i>\varepsilon^{-1}}} \cdot \prob[\val_{-i}\sim\virtualdist_{-i}]{\max_{j\neq i} \val_j \leq \varepsilon^{-1}}\\
     & \geq \sum_{i=1}^n\expect[\val_i\sim \virtualdist_i]{\val_i \cdot \indicate{\val_i>\varepsilon^{-1}}} \cdot \prob[\vals\sim \virtualdists]{\max_{j} \val_j \leq \varepsilon^{-1}}\\
     & \geq \rbr{1 - \varepsilon} \cdot \sum_{i=1}^n\expect[\val_i\sim \virtualdist_i]{\val_i \cdot \indicate{\val_i>\varepsilon^{-1}}}
     = (1 - \varepsilon) \sum_{i=1}^n \compensate_i,
    \end{split}
\end{align} 
where the third inequality follows from Markov's inequality:
\begin{align*}
    \prob[\vals\sim \virtualdists]{\max_{j} \val_j \leq \varepsilon^{-1}} &= 1 - \prob[\vals\sim \virtualdists]{\max_{j} \val_j > \varepsilon^{-1}} \geq 1 - \varepsilon.
\end{align*}
Combining the inequalities \eqref{eq:highvals1} and \eqref{eq:highvals4}, we have 
\begin{align*}
\virtualwel(\virtualdists,0)&=\expect[\vals\sim \virtualdists]{\max_{i\in [n]} v_i \cdot \indicate{\max_{i\in [n]} v_i \leq \varepsilon^{-1}}} + \expect[\vals\sim \virtualdists]{\max_{i\in [n]} v_i \cdot \indicate{\max_{i\in [n]} v_i > \varepsilon^{-1}}}\\
& \geq \expect[\vals'\sim \virtualdists']{\max_{i\in [n]} v'_i \cdot \indicate{\max_{i\in [n]} v'_i \leq \varepsilon^{-1}}}
+ (1 - \varepsilon) \cdot \sum_{i=1}^n \compensate_i 
\geq (1 - \varepsilon) \virtualwel(\virtualdists',\compensates').
\end{align*}
Since \(\virtualwel(\virtualdists,0) \leq 1\), the inequality above implies that 
\begin{align*}
    \virtualwel(\virtualdists',\compensates') \leq (1 + 2\varepsilon)\virtualwel(\virtualdists, 0) \leq \virtualwel(\virtualdists, 0) + 2\varepsilon,
\end{align*}
and hence \cref{lem:highvals} holds.
\end{proof}

\begin{proof}[Proof of \cref{lem:discrete_algo2_loss}]

Given any distribution profile \( \virtualdists\), let $\virtualdists''$ and $\compensates''$ be the profile of distribution and compensation term pairs that is induced by $\virtualdists$ by simply decomposing the high values in $\virtualdists$ in compensation terms without discretization. \cref{lem:highvals} shows that 
\begin{align}\label{eq:correctnessproof1}
\virtualwel(\virtualdists,0) \leq \virtualwel(\virtualdists'',\compensates'')
\leq \virtualwel(\virtualdists,0) + 2\epsilon.
\end{align}

Considering a specific buyer \(i\), assume that there are \(k_i\) signals in the signal structure \(\rbr{\signals_i, \experiment_i}\), which we denote as \(\signal_{i, 1}, \signal_{i, 2}, \cdots, \signal_{i, k_i}\). Recall that \(\rd(x, t)\) is defined as the largest multiples of \(t\) that does not exceed \(x\) and \(p\left(\signal_{i, j}\right)\) represents the probability of observing \(\signal_{i, j}\). 

It is straightforward that \(\virtualdist_i\), the virtual value distribution for buyer \(i\), assigns a probability of \(p(\signal_{i, j})\) to the virtual value \(\virtual(\signal_{i, j})\) for each \(j \in [k_i]\). Given our method of redistributing higher values within \(\virtualdist_i\), it is clear that 
\begin{align*}
    \prob[\val \sim \virtualdist''_i]{\val = x} & = \begin{cases}
       \sum_{j = 1}^{k_i} p\rbr{\signal_{i,j}}\cdot \indic\cbr{\virtual\rbr{\signal_{i, j}} > \varepsilon^{-1}} & x = 0\\
        \sum_{j = 1}^{k_i} p\rbr{\signal_{i,j}}\cdot \indic\cbr{\virtual\rbr{\signal_{i, j}}  = x} & 0 < x \leq \varepsilon^{-1} \\
        0 & x < 0 \text{ or } x > \varepsilon^{-1}
    \end{cases}\\
    c_i'' & =  \sum_{j = 1}^{k_i} p\rbr{\signal_{i,j}}\virtual\rbr{\signal_{i, j}}\cdot \indic\cbr{\virtual\rbr{\signal_{i, j}} > \varepsilon^{-1}}
\end{align*}

According to the definition, \(\virtualdist_i', \compensate_i'\), i.e., the outcome of applying our discretization scheme on \(\virtualdist_i\), is defined as follows:
\begin{align}\label{eq:definition}
\begin{split}
    \prob[\val \sim \virtualdist'_i]{\val = x} & = \begin{cases}
       \sum_{j = 1}^{k_i} \rd\rbr{p\rbr{\signal_{i,j}},\frac{\varepsilon}{n^4m}  }\cdot \indic\cbr{\virtual\rbr{\signal_{i, j}} \in \left[x, x + \varepsilon\right)} & x \in \grid \backslash\{0\}\\
        1 -  \sum_{k = 1}^{\varepsilon^{-2}}\prob[\val \sim k \varepsilon]{\val = x} & x = 0 \\
        0 & \text{otherwise}
    \end{cases}\\
    c_i' & =  \sum_{j = 1}^{k_i} \rd\rbr{p\rbr{\signal_{i,j}}\virtual\rbr{\signal_{i, j}},\frac{\varepsilon}{nm}}\cdot \indic\cbr{\virtual\rbr{\signal_{i, j}} > \varepsilon^{-1}}
\end{split}
\end{align}


Since $\virtualdists'$ is stochastic dominated by $\virtualdists''$ and each term in $\compensates'$ is generated by rounding down the corresponding term in $\compensates''$, it follows that
\begin{align*}
\virtualwel(\virtualdists',\compensates') \leq \virtualwel(\virtualdists'',\compensates'')
\end{align*}

Next we provide the lower bound for $\virtualwel(\virtualdists',\compensates')$.
The definition of  the rounding down function impies that
\begin{align*}
p\rbr{\signal_{i,j}}\virtual\rbr{\signal_{i, j}} - \rd\rbr{p\rbr{\signal_{i,j}}\virtual\rbr{\signal_{i, j}},\frac{\varepsilon}{nm}} \leq \frac{\epsilon}{nm}.
\end{align*}

Upon summing the inequality above, the following naturally arises:
\begin{align}\label{eq:part1eq1}
\begin{split}
\sum_{i=1}^n \compensate_i'' - \sum_{i=1}^n  \compensate_i' & = \sum_{i = 1}^n \sum_{j = 1}^{k_i}   \rbr{p\rbr{\signal_{i,j}}\virtual\rbr{\signal_{i, j}} - \rd\rbr{p\rbr{\signal_{i,j}}\virtual\rbr{\signal_{i, j}},\frac{\varepsilon}{nm}}}\cdot \indic\cbr{\virtual\rbr{\signal_{i, j}}} \\
& \leq \frac{\epsilon}{nm} \cdot nm = \varepsilon.
\end{split}
\end{align} where the first inequality we use the fact that \(k_i\) is upper bounded by \(m\) for all \(i\in [n]\).

Next we bound the loss between $\virtualdists''$ and $\virtualdists'$ by employing a coupling between $\virtualdist_i''$ and $\virtualdist_i'$. By the definition of \(\virtualdist_i'\) and \(\virtualdist_i''\), the following is true for all $k\in \mathbb{Z}$.
\begin{align}\label{eq:part1eq10}
   \prob[\val \sim \virtualdist''_i]{\val \in \left[k\varepsilon, (k + 1)\varepsilon\right)} -\frac{\varepsilon^4}{n} \leq  \prob[\val \sim \virtualdist'_i]{\val = k\varepsilon} \leq  \prob[\val \sim \virtualdist''_i]{ \val \in \left[k\varepsilon, (k + 1)\varepsilon\right)} 
\end{align}

The coupling mechanism operates as follows: Whenever the distribution \(\virtualdist_i''\) assumes a value \(\val''_i\) within the interval \([k\varepsilon, (k + 1)\varepsilon)\) for some \(k \in \mathbb{Z}_{+}\), we correspondingly assign the rounded value \(k\varepsilon\) to \(\virtualdist_i'\). However, a residual probability mass exists, quantified as \(\prob[\val \sim \virtualdist''_i]{\val \in \left[k\varepsilon, (k + 1)\varepsilon\right)} - \prob[\val \sim \virtualdist'_i]{\val = k\varepsilon}\), which remains unaccounted for in \(\virtualdist_i''\). This excess probability mass is then allocated to the value \(0\) in \(\virtualdist_i'\). The coupling ensures that $\val''_i - \val_i' \leq {\varepsilon}$ with a probability of at least \(1 - \frac{\varepsilon^4}{n}\cdot \varepsilon^{-2} = 1 - \frac{\varepsilon^2}{n}\).

We next consider the product distributions $\virtualdist_1'' \times \cdots \times \virtualdist''_n$ and $\virtualdist_1' \times \cdots \times \virtualdist_n'$. Let $\rbr{\val''_1,\cdots,\val''_n}$ and $\rbr{\val_1',\cdots,\val_n'}$ be the realization of the product distributions. By union bound, with a probability of at least \(1 - \frac{\varepsilon^2}{n}\cdot n = 1 - {\varepsilon^2}\), it holds that $\val''_i - \val_i' \leq {\varepsilon}$ holds for all \(i\in [n].\) By taking the maximum, we have
\begin{align*}
\max_{i\in [n]} \val''_i - \max_{i\in [n]} \val'_i \leq {\varepsilon}.
\end{align*} holds with a probability of at least \(1 - \varepsilon^2\). Therefore, by taking the expectation, it follows that
\begin{align}\label{eq:part1eq2}
    \expect[\vals''\sim \virtualdists'']{ \max_{i\in [n]} \val''_i} - \expect[\vals'\sim \virtualdists']{ \max_{i\in [n]} \val'_i} \leq \varepsilon + \varepsilon^{-1} \cdot \varepsilon^{2} = 2\varepsilon.
\end{align}where we uses the fact that \(\virtualdists_i''\) and \(\virtualdists_i'\) are distributions with a support that does not exceed \(\varepsilon^{-1}\) for all \(i\in [n]\).

By synthesizing Equations~\eqref{eq:part1eq1} and~\eqref{eq:part1eq2}, we deduce that:
\begin{align*}
\virtualwel(\virtualdists', \compensates') \geq \virtualwel(\virtualdists'', \compensates'') - 2\epsilon \geq \virtualwel(\virtualdists, 0) - 3\epsilon.
\end{align*}

This, in conjunction with the established upper bound for \(\virtualwel(\virtualdists', \compensates')\), leads to the conclusion that:
\begin{align*}
    |\virtualwel(\virtualdists', \compensates') - \virtualwel(\virtualdists'', \compensates'')| \leq 3\varepsilon.
\end{align*}
Integrating this result with \Cref{eq:correctnessproof1}, we effectively conclude  our proof.
\end{proof}

\begin{proof}[Proof of \cref{lem:discrete_quant_loss}]
Since we only round down the probabilities and redistribute the remaining probabilities to value~$0$, it is clear that
\begin{align*}
     \expect[\val'\sim \max\rbr{\virtualdist_1',\virtualdist_2'}]{\val'}\leq \expect[\val\sim \max\rbr{\virtualdist_1,\virtualdist_2}]{\val}.
\end{align*}

Now we prove the lower bound. From the discretization scheme, we know that for any $k \in\naturals_+ $, it holds that 
\begin{align*}
\prob[v_1\sim \virtualdist_1]{\val_1 = k\varepsilon} - \frac{1}{nm}\varepsilon^4 
\leq \prob[v_1'\sim \virtualdist_1']{\val_1' = k\varepsilon}
\leq \prob[v_1\sim \virtualdist_1]{\val_1 = k\varepsilon},
\end{align*}
which further implies that
\begin{align*}
    \prob[v_1'\sim \virtualdist_1']{\val_1' \leq k\varepsilon} &= 1 - \sum_{k' > k} \prob[v_1'\sim \virtualdist_1']{\val_1' = k'\varepsilon} \\
    & \geq1 - \sum_{k' > k} \prob[v_1\sim \virtualdist_1]{\val_1 = k'\varepsilon} 
    = \prob[v_1\sim \virtualdist_1]{\val_1 \leq k\varepsilon}.
\end{align*} 
Moreover, the same inequalities hold for $\virtualdist_2$ and $\virtualdist_2'$.
Combining the inequalities implies that 
\begin{align*}
    &\prob[\val_1'\sim \virtualdist_1', \val_2'\sim \virtualdist_2']{\max(\val_1',\val_2') = k\varepsilon} \\
    &= \prob[\val_1'\sim \virtualdist_1']{\val_1' = k\varepsilon} \cdot \prob[\val_2'\sim \virtualdist_2']{\val_2'\leq k\varepsilon} +  \prob[\val_1'\sim \virtualdist_1']{\val_1' < k\varepsilon} \cdot\prob[\val_2'\sim \virtualdist_2']{\val_2'= k\varepsilon}\\
    & \geq \rbr{\prob[\val_1\sim \virtualdist_1]{\val_1 = k\varepsilon} - \frac{1}{nm}\varepsilon^4} \cdot \prob[\val_2\sim \virtualdist_2]{\val_2\leq k\varepsilon} +  \prob[\val_1\sim \virtualdist_1]{\val_1 < k\varepsilon} \cdot\rbr{\prob[\val_2'\sim \virtualdist_2']{v_2'= k\varepsilon} - \frac{1}{nm}\varepsilon^4}\\
    & = \prob[\val_1\sim\virtualdist_1, \val_2\sim \virtualdist_2]{\max(\val_1,\val_2) = k\varepsilon} - \frac{2}{nm}\varepsilon^4.
\end{align*}
Therefore, we can show that
\begin{align*}
\expect[\val'\sim \max\rbr{\virtualdist_1',\virtualdist_2'}]{\val'} 
&= \sum_{k = 1}^{\varepsilon^{-2}} k\varepsilon \cdot \prob[\val_1'\sim\virtualdist_1', \val_2'\sim \virtualdist_2']{\max(\val_1',\val_2') = k\varepsilon} \\
& \geq \sum_{k = 1}^{\varepsilon^{-2}} k\varepsilon  \cdot \rbr{\prob[\val_1\sim\virtualdist_1, \val_2\sim \virtualdist_2]{\max(\val_1,\val_2) = k\varepsilon} - \frac{2}{nm}\varepsilon^4}\\
& \geq \expect[\val\sim \max\rbr{\virtualdist_1,\virtualdist_2}]{\val} - \sum_{k=1}^{\varepsilon^{-2}}\varepsilon^{-1}\cdot \frac{2}{nm}\varepsilon^4
= \expect[\val\sim \max\rbr{\virtualdist_1,\virtualdist_2}]{\val} -  \frac{2\epsilon}{nm}.
\end{align*} which concludes the proof since $m\geq 1$.    
\end{proof}

\end{document}